\documentclass[a4paper,11pt]{article} 
\usepackage[a4paper,hmargin=2.8cm,vmargin=3cm]{geometry}
\usepackage[utf8]{inputenc} 
\usepackage[english]{babel}

\usepackage{mathtools,amsthm, amsfonts, amssymb, mathrsfs, bbm, bm}
\usepackage{authblk}
\usepackage{lineno,csquotes}
\usepackage[dvipsnames]{xcolor}
\usepackage{hyperref}
\usepackage{comment}
\usepackage{upgreek}
\usepackage{amsmath}
\usepackage{dsfont}
\usepackage{mathrsfs}
\usepackage{slashed}
\usepackage{cancel}

\usepackage[
    backend=biber,
    style= numeric,
    sorting=none
  ]{biblatex}
\definecolor{BeauBlue}{rgb}{0, 0.2, .9}
\definecolor{BeauOrange}{rgb}{.8, .1, 0}
\usepackage{hyperref}
\hypersetup{
	colorlinks = true,
	linkcolor = BeauBlue,
	urlcolor = cyan,
	citecolor = BeauOrange,
}
 \usepackage[mathcal]{euscript}

\numberwithin{equation}{section}

\theoremstyle{plain}
\newtheorem{theorem}{Theorem}[section] 
\newtheorem*{theorem*}{Theorem} 
\newtheorem{proposition}[theorem]{Proposition} 
\newtheorem{corollary}[theorem]{Corollary}
\newtheorem{lemma}[theorem]{Lemma}

\theoremstyle{definition}
\newtheorem{definition}[theorem]{Definition} 
\newtheorem{assumption}{Assumption}

\theoremstyle{remark}

\newtheorem{remark}[theorem]{Remark}

\def\a{\ensuremath\alpha}
\def\b{\ensuremath\beta}
\def\d{\ensuremath\delta}
\def\v{\ensuremath\vec}
\def\t{\ensuremath\tau}
\def\th{\ensuremath\theta}

\def\o{\ensuremath\omega}
\def\r{\ensuremath\rho}
\def\s{\ensuremath\sigma}
\def\e{\ensuremath\epsilon}

\def\de{\ensuremath\partial}

\def\mc{\ensuremath\mathcal}
\def\l{\ensuremath\lambda}
\def\L{\ensuremath\Lambda}

\def\bs{\ensuremath\boldsymbol}

\def\dg{\ensuremath\dagger}
\def\it{\textit}
\def\mf{\ensuremath\mathfrak}

\def\mbb{\ensuremath\mathbb}

\newcommand{\dist}[1]{\ensuremath |\!\!|#1|\!\!|}

\newcommand{\id}{\mathds{1}} 

\newcommand{\bra}[1]{\big\langle #1 \big|}
\newcommand{\ket}[1]{\big| #1 \big\rangle}

\newcommand{\R}{\mathcal{R}}

\newcommand{\virg}[1]{``#1''}

\DeclareMathOperator{\Span}{Span} 
\newcommand{\ie}{{\sl i.\,e.\ }}   
\newcommand{\eg}{{\sl e.\,g.\ }} 

\DeclareMathOperator{\Tr}{Tr}

\usepackage{tikz}
\usetikzlibrary{shapes}
\usetikzlibrary{positioning}
\tikzset{vertex/.style={circle,fill=black,inner sep=2pt},
	ctVertex/.style={diamond,fill=black,inner sep=2pt},
    CtVertex/.style={diamond,fill=black,inner sep=3pt},
	sqVertex/.style={rectangle,fill=black,inner sep=3pt},
	bigVertex/.style={circle,fill=black,inner sep=4pt},
	E/.append style={fill=white,draw},
    F/.append style={fill=gray,draw},
	probeEP/.style={circle,fill=black,draw,inner sep=2pt,
		prefix after command= {\pgfextra{\tikzset{every pin/.style = {pin edge={decorate,decoration={snake,amplitude=2pt,segment length =4pt}}}}}}
	},
	bareProbeEP/.style={rectangle,fill=black,draw,inner sep=3pt,
		prefix after command= {\pgfextra{\tikzset{every pin/.style = {pin edge={decorate,decoration={snake,amplitude=2pt,segment length =4pt}}}}}}
	},
	nuEP/.style={circle,fill=white,draw, inner sep=2pt},
	linelabel/.style={sloped,above,very near start, inner sep=1pt,execute at begin node=$\scriptstyle,execute at end node=$},
	baseline=(current  bounding  box.center),doubled/.style={double distance= 1pt,line width=1.5pt}
}

\title{The conductivity matrix at the topological phase transition}

\author[1]{Simone Fabbri}
\author[2]{Giovanna Marcelli}
\author[2]{Alessandro Giuliani}

\affil[1]{SISSA, Via Bonomea 265, 34136 Trieste, Italy}
\affil[2]{Roma Tre University, Largo S. Leonardo Murialdo 1, 00146 Rome, Italy}

\date{\today}

\begin{document}

\maketitle

\begin{abstract}
For a general class of two-dimensional non-interacting semi-metallic electron systems on a lattice, we derive an explicit formula for the whole conductivity matrix, by using Euclidean many-body formalism. The semi-metallic phase takes place whenever two Bloch bands touch at the Fermi energy with conical intersections and generally occurs at the transition between distinct integer quantum Hall phases.

Unlike the longitudinal conductivity (which depends solely on the shape of the cones at the Fermi level), the transverse conductivity is remarkably determined both by the conical structures of the bands and by the nature of the two nearby topological phases.

Generically, in the semi-metallic state, 
neither the longitudinal nor the transverse conductivities are quantized in integer multiples of a universal conductivity quantum; however, 
universality of the conductivity matrix is restored
under the assumption of emergent 
rotational symmetry of the linearized Hamiltonian at the Fermi points.

Using our formula, we compute the conductivity 
matrix for several physically relevant models 
of quantum Hall fluids at the topological phase transition, exhibiting cases where the transverse conductivity is quantized in half-integer multiples of $e^2/h$ and cases where it depends continuously on an external strain parameter, thus shedding light on its universality or non-universality features.  
\end{abstract}

\tableofcontents

\section{Introduction}

The Integer Quantum Hall Effect (IQHE) is an emblematic phenomenon of universal transport property in condensed matter physics, see \cite{Graf2007} and references therein. Universality stems from the robustness of quantization of the Hall conductivity $\s_{12}$ under continuous microscopic modifications of the system, which preserve the insulating condition. Any insulating phase is characterized by the corresponding integer value  of $\s_{12}$ (integer in units of $\frac{e^2}{h}$, where $e$ is the electron charge and $h$ is the Planck constant).
A great deal of works in mathematical physics has been devoted to its rigorous understanding
since its discovery \cite{Klitzing1980}, involving several different approaches. After the pioneering papers \cite{Laughlin1981,TKNN, Halperin1982, Avronetal83},  several aspects of IQHE have been rigorously investigated. The quantization of the Hall conductivity has been established, first relying on non-commutative geometry \cite{BES94} and index method \cite{Avronetal94}, and then extended also to non-covariant systems \cite{Elgart2005, Marcelli23} and in the regime of strong disorder \cite{AG98}. The robustness of the quantization of $\s_{12}$ under weak interactions has been shown in \cite{HastingsMichalakis2015, GMP17} and more recently, including also the effect of a continuous magnetic flux for Hofstadter-like models, in \cite{Marcellietal2026}. Regarding the validity of the Kubo formula from quantum dynamics, it has been proved for gapped models in the continuum \cite{ASY87,ElgartSchlein04,Marcelli22}, and for interacting systems on a lattice  by using adiabatic theorems \cite{BachmannDeRoeckFraas2018, MonacoTeufel2019} and non-equilibrium almost-stationary states \cite{Teufel2020, HenheikTeufel2022}. By using rigorous renormalization group methods, the validity of the Kubo formula has been obtained also at small positive temperature \cite{GLMP24} for gapped fermionic systems and for one-dimensional gapless fermionic systems \cite{PS25, PSS25}.  
Actually, it has been shown that the response of the current to the applied transverse electric field is nearly linear \cite{BachmannDeRoeckFraasLange2021, MarcelliMonaco2022, Wesleetal2025}. 

The previous works provide a satisfactory understanding of insulating Integer Quantum Hall
(IQH) phases for periodic, disordered or weakly interacting electrons. A big open problem is to characterize the transition between two distinct 
IQH phases, and possibly to establish the universality of the corresponding Topological Phase Transition (TPT). The problem can be studied from different perspectives; \eg, in \cite{MCC19} the TPT is characterized by the divergent behavior of the topological curvature functions via renormalization-group methodology, while in \cite{Marrazzo25} the critical exponents 
of a disorder-driven TPT are computed in terms of 
space-resolved topological markers. In the 
current work we adopt the same viewpoint
as in \cite{GJMP16,GMP20,FGR25}, and we study the 
TPT of a class of lattice quantum Hall states in  
terms of the critical Kubo conductivity matrix 
$\s$, computed exactly at the transition. 
In previous works \cite{GJMP16,GMP20,FGR25}, 
we specifically considered the example of the 
weakly-interacting Haldane model: we determined 
the location of the critical lines separating distinct IQH phases and proved universality of the longitudinal and transverse conductivities both away and on the critical lines. However, 
the specific setting we considered did not allow us to fully address the universal nature of the critical conductivity, thus leaving us with a number of unanswered questions: how general and robust is the prediction that the longitudinal and transverse critical conductivities are quantized in integer multiples of a fundamental unit, as we change the model under investigation and continuously modify its parameters, such as the hopping strength and magnetic field? In particular, should we always expect the transverse conductivity at the TPT to be 
proportional to a half-integer multiple of $e^2/h$, as in the case of the Haldane model? If not, can one exhibit physically relevant examples 
of tight binding models with arbitrary half-integer values of the transverse conductivity?

In order to attack these problems, we consider a general class of $2D$ non-interacting electron systems in the tight-binding approximation, modeled by an exponentially decaying, translation-invariant, Hamiltonian kernel, with entries labeled by $M\ge 2$ internal degrees of freedom. 
For simplicity, in this work we neither include the effect of disorder nor of interactions, but we still adopt a second-quantized formalism and use the Euclidean framework, which is the natural setting for proving, in perspective, stability of the quantization of the Kubo conductivity matrix under weak short range interactions, as in \cite{GMP12,GMP17,GMP20}, to be addressed in future work.

As widely discussed in the context of 
Weyl and Dirac semi-metals, see \eg 
\cite{AMV18,Bu18}, and as rigorously established in \cite{D21}, a TPT generically corresponds to a semi-metallic behavior of the system, where the two closest bands to the Fermi energy
touch with conical intersections at a finite number of Fermi points. Typically, the TPT separates two distinct gapped IQH phases, with gap proportional to a ``mass parameter" $m$, which controls the topological nature of the phase. With this picture in mind, we model the system in terms of a hopping Hamiltonian $H^0$ smoothly depending on a parameter $m$ controlling the TPT: while the system 
at $m>0$ and $m<0$ is insulating and generically belonging to two distinct topological phases, 
its behavior at $m=0$ is semi-metallic. In agreement with \cite[Eq.(1.2)]{D21}, we assume that at each Fermi point $\v{k}_{F,\omega}$, where $\omega\in\{1,\ldots,N\}$ is an index labelling the Fermi points, the two energy bands closest to the Fermi energy $\mu$, denoted by $\lambda_\pm(\vec k;m)$, satisfy: 
\begin{equation}
\label{eqn:intro1}
\l_{\pm}(\v{k}_{F,\omega}+ \v{k}';m)- \mu= \v{c}_\omega\cdot\v{k}'+ u_\omega m \pm \sqrt{|A_\omega\v{k}'|^2+ w_\omega^2m^2+\v{k}'\!\cdot \v{b}_\omega m} + \mc{O}(|\v{k}'|^2+m^2), 
\end{equation}
asymptotically as $|\v{k}'|,m\to 0$, 
for suitable $\v{c}_\omega,\v{b}_\omega\in\mbb{R}^2, u_\omega, w_\omega\in\mbb{R}$ and an invertible $2\times 2$ real matrix $A_\omega$.
In addition, we require that the \emph{conical} term dominates the \emph{affine} one in the r.h.s.\ of Eq.\eqref{eqn:intro1}, namely that there exists a positive constant $C_{\star}$ such that
\begin{equation}
\label{eqn:intro2}
\sqrt{|A_\omega\v{k}'|^2+ w^2_\omega m^2 +\v{k}'  \!\cdot \v{b}_\omega m} -|\v{c}_\omega\cdot\v{k}'+ u_\omega m| \geq C_{\star}(|\v{k}'|+|m|)\qquad\text{for all $\v k'\in\mbb{R}^2, m\in\mbb{R}$},
\end{equation}
for every $\omega\in\{1,\ldots,N\}$. Note that for $m=0$ these two hypotheses (Eqs.\eqref{eqn:intro1}-\eqref{eqn:intro2}) reduce to those assumed in \cite{MPP26}, where the mass parameter $m$ is fixed to be zero.

In Section \ref{sect:examples} below we discuss different examples of physically relevant models 
displaying a TPT and satisfying the assumptions above in the vicinity thereof. The examples we consider are the Hofstadter model at different rational values of the magnetic flux, and a deformation of the standard Haldane model, including a straining term. 

In our setting, given the $m$-dependent hopping Hamiltonian, we define the mass-dependent conductivity matrix $\s(m)$ via the Kubo formula, see Eq.\eqref{def:sigma} below, rewritten in the Euclidean formalism after Wick's rotation. Here, we do not discuss the validity of the Kubo formula, which, however, can be deduced by combining the strategy of \cite[Theorem 2.7]{MPP26} for the massless case and \cite[Theorem 3.7]{GLMP24} for the massive one. Informally, our main result can be stated as follows, see Section \ref{sect:model} below for a precise statement of the assumptions and results: for every $i,j\in\{1,2\}$ the \emph{critical} conductivity, namely $\s_{ij}(m)$ at $m=0$ is given by:
\begin{equation}
\label{eqn:intro3}
\s_{ij}(0)=\frac{1}{2}\lim_{\e\rightarrow0^+} \Big(  \s_{ij}(\e)+ \s_{ij}(-\e) \Big)+ \frac{1}{16}\sum_{\o=1}^N \frac{(|A_{\o}|^2)_{ij}}{\det |A_{\o}|},
\end{equation}
where $A_\omega$ is the same matrix as in \eqref{eqn:intro1}-\eqref{eqn:intro2}. In particular, since the massive longitudinal conductivity $\s_{ii}(\pm\e)=0$, the critical longitudinal conductivity reads as:
\begin{equation}
\label{eqn:intro4}
\s_{ii}(0)=\frac{1}{16}\sum_{\o=1}^N \frac{(|A_{\o}|^2)_{ii}}{\det |A_{\o}|},
\end{equation}
which agrees with the one in \cite[Theorem 2.4]{MPP26}. Hence, the value of $\s_{ii}(0)$ does depend on the shape of the cones in quasi-momentum $\v k'$ defining the conical intersections of the Bloch bands at the Fermi energy, and so in general neither it is  quantized nor it corresponds to a topological invariant; see the case of the strained Haldane model in Subsection \ref{ssect:strain} below. On the other hand, for the critical Hall conductivity we have that 
\begin{equation}
\label{eqn:intro5}
\s_{12}(0)=\frac{1}{2}\lim_{\e\rightarrow0^+} \Big(  \s_{12}(\e)+ \s_{12}(-\e) \Big)+ \frac{1}{16}\sum_{\o=1}^N \frac{(|A_{\o}|^2)_{12}}{\det |A_{\o}|},
\end{equation}
which generalizes in the non-interacting setting the result obtained in \cite[Theorem 2.1]{FGR25} for the Haldane model: in that case $N$ is either 
$1$ or $2$, and the matrix $A_{\o}$ is proportional to the identity, so that  $(|A_{\o}|^2)_{12}$ vanishes and $\sigma_{12}(0)$ reduces to the average of the transverse conductivities in the two contiguous topological insulating phases. However, in general, whenever the second term in the right hand side of \eqref{eqn:intro5} does not vanish, the transverse conductivity is neither quantized nor a topological invariant, see again the strained Haldane model in Subsection \ref{ssect:strain}. Still, the anti-symmetric part of the critical conductivity matrix \it{is}
quantized at a half-integer value of the fundamental conductivity quantum, see Remark \ref{rmk:main}.\ref{it:thm3} below. 

Even though in general $\sigma(0)$ is not quantized, it becomes so under the additional hypothesis of \emph{emergent} discrete rotational invariance, see Assumption \ref{assum:R} below. In this case we get:
\begin{equation*}
\s_{ij}(0)= \frac{1}{2}\lim_{\e\to 0^+} \Big( \s_{ij}(\e)+ \s_{ij}(-\e) \Big)+ \frac{N}{16}\d_{i,j},
\end{equation*}
where $N$ is the number of Fermi points. Specifically, the longitudinal conductivity $\s_{ii}(0)= \frac{N}{16}$ and the transversal one $\s_{12}(0)=-\s_{21}(0)$ is given by arithmetic average of $\s_{12}(\pm\e)$.

In Section \ref{sect:examples}, we apply these formulas to several examples, and exhibit cases in which the transverse conductivity is quantized at different half-integer multiples of the fundamental conductivity quantum (in our examples, we either find that $2\pi \sigma_{12}(0)=\pm 1/2$, the same value found in the Haldane model \cite{FGR25}, or $2\pi \sigma_{12}(0)=\pm 3/2$; clearly one can generalize our examples below to obtain a generic half-integer value), and cases in which it is not quantized and, rather, it varies continuously with an external control parameter (the strain, in our example). 

\medskip

Our result, specifically formula \eqref{eqn:intro4}, is in agreement with \cite{MPP26}, where, however, the analysis was limited to the longitudinal conductivity only, 
which can in fact be studied by restricting the analysis to the critical case $m=0$ (no need to introduce the 
auxiliary mass parameter $m$ in that case). Moreover, differently from the latter paper, here we choose to work in a many-body, Euclidean, formalism, which is suited for including electron interactions, as we plan to address in the future. Note also that, by assuming $A_{\o} \propto \mathds{1}_2$ and $\v{c}_{\o}=\v{0}$ in Eq.\eqref{eqn:intro1}  (isotropic Dirac cones), our result, Eq.\eqref{eqn:intro4}, reproduces the one in \cite{Cances}, where, however, time-reversal symmetry was also assumed (so that $\s_{12}(0)=0=\s_{21}(0)$ automatically). 

The strategy of proof of our main theorem incorporates the idea from \cite{FGR25} of considering the \emph{symmetric difference} of the conductivity matrix $\mf{D}\s_{ij}(\e):= \s_{ij}(0)- \frac{1}{2}\s_{ij}(\e)- \frac{1}{2}\s_{ij}(-\e)$ and to show that it is dominated by contributions from quasi-momenta and Matsubara frequencies close to the Fermi points and zero respectively. Moreover, it combines the following ingredients: first, we use the Kato-Nagy unitary operator \cite[Chapter I, Section 4.6]{Kato} (see also \eg \cite{MP14}) to reduce the analysis to the semi-metallic two-dimensional sub-space corresponding to the two critical bands; next, we show that the value of the conductivity 
can be computed in terms of an effective \emph{relativistic model}, obtained by linearizing the energy bands around each Fermi point; finally, we perform the explicit computation of the symmetric difference of the 
conductivity in the relativistic model via a linear mapping, which transforms the general effective relativistic Hamiltonian into the Dirac Hamiltonian (see the beginning of Section \ref{sect:proof} for further details and comparison with the existing literature).

\paragraph{Outlook.}
As already mentioned, a natural generalization 
of this work is to include the effect of weak 
short-range electron-electron interactions, which we expect to be possible by combining the analysis of this work with constructive fermionic Renormalization Group methods, in the spirit of \cite{GJMP16,GMP20,FGR25}. The inclusion of disorder in a semi-metallic setting is more challenging: the only constructive result we are aware of is \cite{AFP20}, which concerns a 
hierarchical approximation of $3D$ semi-metals. 
Actually, it would already be interesting to 
investigate universality (or lack of universality) in the presence of weak disorder, 
at all orders in renormalized perturbation theory, in the spirit of \cite{Ma13}.

\paragraph{Organization of the paper.} The paper is structured as follows. Section \ref{sect:model} is dedicated to the mathematical set-up and presentation of the two main results, namely Theorem \ref{thm:main} and Corollary \ref{cor:rotation}. Section \ref{sect:examples} illustrates a series of physically relevant models, where the computation of $\s(0)$ is explicitly performed. The last section consists of the proof of Theorem \ref{thm:main}. Appendix \ref{app:rot} provides the proof of Corollary \ref{cor:rotation} and the fact that an \emph{exact} discrete rotation symmetry is also \emph{emergent}. Finally, Appendix \ref{app:tech} collects several technical results, which are propaedeutic to the proof of the main result.

\section{The model and the main results}
\label{sect:model}

Let $\Lambda$ be a two-dimensional Bravais lattice, \ie $\Lambda:=\Span_{\mbb{Z}}\{\vec{a}_1,\vec{a}_2\}$ where $\v{a}_1,\v{a}_2$ are linearly independent vectors in $\mbb{R}^2$. For $L\in\mbb{N}$, with $L$ positive, one introduces the lattice dilated by $L$ as $L\Lambda:=\Span_{L\mbb{Z}}\{\vec{a}_1,\vec{a}_2\}$. The \emph{discrete torus} of side $L$ is defined as $\Lambda_L:=\Lambda/(L\Lambda)$.
We allow the presence of a finite number of sublattice sites within the unit cell, indexed by vectors $\vec{\d}_1,...,\v{\d}_M$, so that each point $\v{y}$ of the total configuration space is uniquely specified by $\v{y}=\v{x}+\v{\d}_{\r}$, with $\v{x}\in\L_L$ and $\r\in \{1,...,M\}$. We introduce the standard distance on the discrete torus $\L_L$:
\[
\dist{\v{x}-\v{y}}_{\L_L}:= \min_{n_1,n_2\in\mbb{Z}} |\v{x}-\v{y}+ Ln_1\v{a}_1+ Ln_2 \v{a}_2|,
\]
where $|\,\cdot\,|$ denotes the Euclidean norm in $\mbb{R}^2$.

\medskip

We denote by $\mc{F}_L$ the fermionic Fock space over $\L_L\times I$, endowed with the annihilation and creation operators $a_{\v{x},\r}, a^{\dg}_{\v{x},\r}$,\footnote{In this work we use the symbol $\dg$ to denote the Hermitian conjugate, while we use the symbol $*$ to denote the complex conjugate.} meant with periodic boundary conditions, satisfying the canonical anti-commutation relations, namely for every $\vec{x},\vec{y}\in \Lambda_L$ and $\rho,\rho'\in \{1,\dots,M\}$:
\begin{equation} 
\label{eq_1}
\big\{ a_{\v{x},\r}, a_{\v{y},\r'} \big\}= \big\{ a^{\dg}_{\v{x},\r}, a^{\dg}_{\v{y},\r'} \big\}=0, \;\; \big\{ a_{\v{x},\r}, a^{\dg}_{\v{y},\r'} \big\}= \d_{\v{x},\v{y}} \d_{\r,\r'}. \end{equation}

In the following, we denote by $\mathrm{Mat}(M,\mbb{C})$ the set of $M\times M$ matrices with complex entries.
\begin{assumption}
\label{assum:H0}
Let the non-interacting Hamiltonian kernel $ 
H^0\colon \L\to \mathrm{Mat}(M,\mbb{C})$ be \emph{exponentially decaying}, namely there exists $C,\xi>0$ such that
$\|H^{0}(\v{x})\|\le C e^{-\xi|\v{x}|}$, where the norm of the l.h.s.\ is the operator norm. We assume that $H^0$ is Hermitian, in the sense that 
\begin{equation} \big[H^0(\vec x)\big]^\dagger=H^0(-\vec x).\end{equation}
\end{assumption}

We define its finite-volume counterpart as its periodization:
\begin{equation*}
H^{0,L}(\v{x}):= \sum_{n_1,n_2\in\mbb{Z}} H^0(\v{x}+ n_1 L\v{a}_1+ n_2 L\v{a}_2).
\end{equation*}
The associated finite-volume Hamiltonian reads:
\begin{equation}
\label{eq_2}
\mc{H}_L= \sum_{\substack{\v{x},\v{y}\in \L_L\\ \rho,\rho'\in \{1,\dots,M\}}} a^{\dg}_{\v{x},\r} H^{0,L}_{\r\r'}(\v{x}-\v{y}) a_{\v{y},\r'}- \mu\mc{N}_L \equiv \sum_{\v{x},\v{y}\in\L_L} a^{\dg}_{\v{x}} \Big(H^{0,L}(\v{x}-\v{y}) - \mu\d_{\v{x},\v{y}}\Big) a_{\v{y}}, 
\end{equation}
where $\mu\in\mathbb R$ is the \it{chemical potential}, to be fixed in Assumption \ref{assum:H}, $\mc{N}_L:= \sum_{\v{x}\in\L_L}a^{\dg}_{\v{x}}a_{\v{x}}$ is the \emph{number operator}and with the understanding that $a_{\v{x}}$ (resp. $a^{\dg}_{\v{x}}$) denotes the column (resp. row) vector with components $(a_{\v{x},1}, \dots, a_{\v{x},M})$ (resp. $(a^{\dg}_{\v{x},1}, \dots, a^{\dg}_{\v{x},M})$).
The translation invariance of the model allows us to rewrite the Hamiltonian in momentum space. 
Let:
\begin{equation}
\mbb{B}_L:= \Big\{ \kappa_1 \v{g}_1+ \kappa_2\v{g}_2\; : \; \kappa_1,\kappa_2\in \big(\tfrac{1}{L}\mbb{Z}\big)\cap (-\tfrac{1}{2},\tfrac{1}{2}]  \Big\}
\end{equation}
be the \emph{discretized Brillouin zone}, where $\v{g}_1$ and $\v{g}_2$ are such that $\v{g}_i\cdot\v{a}_j= 2\pi\d_{i,j}$. We denote the \emph{infinite-volume Brillouin zone} by:
\begin{equation}
\label{def:Brillouin}
\mbb{B}:= \Big\{ \kappa_1 \v{g}_1+ \kappa_2\v{g}_2\; : \;  \kappa_1,\kappa_2\in (-\tfrac{1}{2}, \tfrac{1}{2}]  \Big\}
\end{equation}
 which is equipped with the norm $\dist{\v{k}}_{\mbb{B}}:= \min_{n_1,n_2\in\mbb{Z}}|\v{k}+ n_1\v{g}_1+ n_2\v{g}_2|$. 
 
 One defines, for any $\v{k}\in\mbb{B}_L$ and $\r\in\{1,\dots,M\}$,
\[\hat{a}_{\v{k},\r}:= \sum_{\v{x}\in\L_L} e^{i\v{k}\cdot\v{x}} a_{\v{x},\r}\,, \;\;\; \hat{a}^{\dg}_{\v{k},\r}= \sum_{\v{x}\in\L_L} e^{-i\v{k}\cdot\v{x}} a^{\dg}_{\v{x},\r}, \]
so that
\[a_{\v{x},\r}= \frac{1}{L^2} \sum_{\v{k}\in\mbb{B}_L} e^{-i\v{k}\cdot\v{x}} \hat{a}_{\v{k},\r}\,, \;\; a^{\dg}_{\v{x},\r}= \frac{1}{L^2} \sum_{\v{k}\in\mbb{B}_L} e^{i\v{k}\cdot\v{x}} \hat{a}^{\dg}_{\v{k},\r}. \]

Note that from Eq. \eqref{eq_1} it follows that for every $\v{p},\v{k}\in\mbb{B}_L$ and $\rho,\rho'\in \{1,\dots,M\}$:
\[\big\{ \hat{a}_{\v{k},\r}, \hat{a}_{\v{p},\r'} \big\}= \big\{ \hat{a}^{\dg}_{\v{k},\r}, \hat{a}^{\dg}_{\v{p},\r'} \big\}=0, \;\; \big\{ \hat{a}_{\v{k},\r}, \hat{a}^{\dg}_{\v{p},\r'} \big\}= L^2\d_{\v{k},\v{p}} \d_{\r,\r'}.\] 

Eq.\eqref{eq_2} can be hence rewritten as:
\[\mc{H}_L=\frac{1}{L^2}\sum_{\v{k}\in\mbb{B}_L} \hat{a}^{\dg}_{\v{k}} \,\big(\hat{H}^0(\v{k})-\mu\big) \,\hat{a}_{\v{k}}\]
with 
\begin{equation}
\label{def:Hamiltonian}
\hat{H}^0(\v{k}):= \sum_{\v{x}\in\L} e^{i\v{k}\cdot\v{x}} H^0(\v{x}), \;\; \v{k}\in \mbb{R}^2
\end{equation}
(observe that $\hat{H}^0$ does not depend on $L$). By Assumption \ref{assum:H0}, one has that $\hat{H}^0(\v{k})= \hat{H}^0(\v{k})^{\dg}$, $\hat{H}^0$ is periodic over $\mbb{B}$, i.e., 
$\hat H^0(\vec k+n_1\vec{g}_1+n_2\vec{g}_2)=\hat H^0(\vec k)$ for all $n_1,n_2\in\mathbb Z$, 
and analytic in $\v{k}$ in a complex neighborhood of $\mathbb B$.

Additionally, we assume that $H^0$ analytically depends on an extra \it{mass parameter} $m\in[-m_0,m_0]$, for some $m_0>0$. In order to emphasize such a dependence, from now on we shall denote by $H^0(\vec x;m)$ the kernel of $H^0$ and by $\hat H^0(\vec k;m)$ its Fourier transform. Correspondingly, we denote by $\lambda_1(\v{k};m)\le \ldots \le \l_M(\v{k};m)$  the eigenvalues of $\hat{H}^0(\v{k};m)$ (i.e., its \it{energy bands}), repeated according to multiplicity. Observe that the energy bands are continuous functions in $\v{k}$ and $m$, which follows from the fact that the roots of a polynomial are continuous functions of its coefficients \cite{Harrisetal}.

\begin{assumption}
\label{assum:H}
Let $\hat{H}^0(\vec k;m)$ depend analytically on $m$ in a complex neighborhood of $[-m_0,m_0]$. Furthermore, we assume the following conditions.
\begin{enumerate}
\item\label{it:1}  There exists an index $n$ with  $1\le n< M$ such that, denoting by $\l_-(\v{k};m)\equiv \l_n(\v{k};m)$ and $\l_+(\v{k};m)\equiv \l_{n+1}(\v{k};m)$, and letting 
$\mu_+(m):=\min_{\v{k}\in\mbb{B}} \l_+(\v{k};m)$
and $\mu_-(m):=\max_{\v{k}\in\mbb{B}} \l_-(\v{k};m)$, we have that 
\begin{equation}
\label{eqn:54}\begin{split}
& \mu_+(m)>\mu_+(0)=\mu_-(0)>\mu_-(m) \quad \forall \, m\in[-m_0,m_0]\setminus\{0\} 
\end{split}
\end{equation}
and
$\l_\pm(\v{k};0)=\mu_\pm(0)$ if and only if 
$\v{k}\in\{\v{k}_{F,1},\dots\v{k}_{F,N}\}$.
We call $\v{k}_{F,1},\dots\v{k}_{F,N}$ the \it{Fermi points}.

\item\label{it:2} Fixing the chemical potential 
$\mu=\mu_+(0)=\mu_-(0)$, there exists $\Delta'>0$ such that
\begin{equation}
\label{eqn:1}    
\min_{\v{k}\in\mbb{B}} \min_{\substack{j=1,...,M\\ j\ne n,n+1}} |\l_j(\v{k};0)- \mu|\ge \Delta'. 
\end{equation}

\item\label{it:3} For every $\o\in\{1,...,N\}$, which is  called the \emph{valley index}, there exist $\v{c}_{\o},\v{b}_{\o}\in\mbb{R}^2, u_{\o}\in\mbb{R},w_{\o}>0, A_{\o}\in \mathrm{Mat}(2,\mbb{R})$ invertible, such that
\begin{equation}
\label{eqn:bands1}
\l_{\pm}(\v{k}_{F,\o}+ \v{k}';m)- \mu= \v{c}_{\o}\cdot\v{k}'+ u_{\o}m \pm \sqrt{|A_{\o}\v{k}'|^2+ w_{\o}^2m^2+\v{k}'\!\cdot \v{b}_\o m} + \mc{O}(|\v{k}'|^2+m^2), 
\end{equation}
asymptotically as $|\v{k}'|,m\to 0$. Moreover, there exists a positive constant $C_{\star}$ such that
\begin{equation}
\label{eqn:bands2b}
\sqrt{|A_{\o}\v{k}'|^2+ w_{\o}^2m^2 +\v{k}'  \!\cdot \v{b}_\o m} -|\v{c}_{\o}\cdot\v{k}'+ u_{\o}m| \geq C_{\star}(|\v{k}'|+|m|)
\end{equation}
 for every $\o\in\{1,...,N\}$, $\v{k}'\in\mbb{R}^2, m\in \mbb{R}$.
\end{enumerate}
\end{assumption}

We will call $\l_{\pm}$ the \emph{semi-metallic} or \emph{quasi-massless} bands.

\begin{remark}
\label{rmk:bands}
Here are some comments about the properties of the energy bands implied by Assumption \ref{assum:H}.
\begin{enumerate}
\item\label{it:bands1} 
First, since inequality \eqref{eqn:bands2b} is homogeneous in $(\v{k}',m)$, requiring its validity for every $\v{k}'\in\mbb{R}^2,m\in\mbb{R}$ is equivalent to assume it asymptotically as $|\v{k}'|,m\to 0$. Second, notice that inequality \eqref{eqn:bands2b} is equivalent to
\[
\sqrt{|A_{\o}\v{k}'|^2+ w_{\o}^2m^2+\v{k}'\!\cdot \v{b}m}\pm(\v{c}_{\o}\cdot\v{k}'+ u_{\o}m)   \geq C_{\star}(|\v{k}'|+|m|),
\]
for every $\o\in\{1,...,N\}$, $\v{k}'\in\mbb{R}^2, m\in \mbb{R}$.

\item\label{it:bands2} Eq.\eqref{eqn:bands1} is equivalent to the general \emph{conical crossing} defined in \cite[Eq. (1.2)]{D21}, which characterizes a generic phase transition \cite[Theorem 1]{D21}. Indeed, under Assumptions \ref{assum:H0} and \ref{assum:H}.\ref{it:1}-\ref{assum:H}.\ref{it:2}, the remainder $\mathcal{O}(|\v k'|^2+ m^2)$ in the r.h.s.\ of Eq.\eqref{eqn:bands1}   is equivalent to be of the type $o(|\v k'|+|m|)$, as in \cite[Eq. (1.2)]{D21}.

\item\label{itt:1} From Assumption \ref{assum:H}.\ref{it:3} it follows that there exist $a>0$ and $C'_{\star}>0$ such that
\begin{align}
&\label{eq_6a}\tfrac{1}{2}C_{\star} \big(|\v{k}'|+|m|\big)\le \pm \Big( \l_{\pm}(\v{k}'+\v{k}_{F,\o};m)- \mu \Big) \le C'_{\star}\big(|\v{k}'|+|m|\big)
\end{align}
for every $ |\v{k}'|\le a$,  $|m|\le a$ and $\o\in\{1,...,N\}$.

\item\label{itt:4} Observe that from the above hypotheses, the following strengthened version of inequality \eqref{eqn:1} holds true:
\begin{equation}
\label{eqn:1bis}    
\min_{\substack{\v{k}\in\mbb{B}\\ |m|\le m_0}} \min_{\substack{j=1,...,M\\ j\ne n,n+1}} |\l_j(\v{k};m)- \mu|\ge \Delta
\end{equation}
for a suitable $\Delta>0$. 
Indeed, in view of \eqref{eqn:1} and the continuity of the energy bands, it is possible to find $0<m_1<m_0$ such that $\min_{ \substack{\v{k}\in\mbb{B}\\|m|\le m_1}} \min_{\substack{j=1,...,M\\ j\ne n,n+1}} |\l_j(\v{k};m)- \mu|\ge\frac{\Delta'}{2}$. On the other hand, for $m_1\le |m|\le m_0$ and $j>n+1$, we have that
\begin{equation*}
\l_j(\v{k};m)-\mu\ge \l_+(\v{k};m)-\mu \ge \mu_+(m)-\mu_+(0) \ge \min_{m_1\le |m|\le m_0}\big(\mu_+(m)-\mu_+(0)\big)>0.
\end{equation*}
Similarly, we have that for $m_1\le |m|\le m_0$ and $j<n$, $\l_j(\v{k};m)-\mu \le \max_{m_1\le |m|\le m_0}\big(\mu_-(m)-\mu_-(0)\big)<0$, so that, all in all, Eq. \eqref{eqn:1bis} holds for a suitable $\Delta>0$.

\item\label{itt:3} The continuity of the energy bands, in combination with the assumption that $\lambda_\pm(\vec k;0)=\mu$ only at the Fermi points, implies the \emph{local-gap property} for the bands $\l_{\pm}$, namely  for any $\d>0$,

\begin{equation}  \min_{\substack{m\in[-m_0,m_0], \v{k}\in\mbb{B},\\ \dist{\v{k}-\v{k}_{F,\o}}_{\mbb{B}}\ge\d, \,\forall \o}}|\l_{\pm}(\v{k};m)-\mu|>0.
\end{equation}

Indeed, if $0<|m|<m_0$, by Assumption \ref{assum:H} we have that $\l_+(\v{k};m)-\mu\ge \mu_+(m)-\mu_+(0)>0$. On the other hand, if $m=0$, $\min_{|\v{k}-\v{k}_{F,\o}|\ge\d, \,\forall \o} \big(\l_+(\v{k};m)-\mu\big)>0$ as $\l_+(\v{k};0)=\mu$ only if $\v{k}$ is one of the $N$ Fermi points. An analogous argument holds true also for the band $\l_-$.

\item\label{itt:2} Our setting, specified by Assumption \ref{assum:H}, extends the one of \cite{MPP26} to a whole neighborhood of the phase transition, considering also non-zero values of the mass parameter. In Section \ref{sect:examples} we exhibit several physically relevant examples of hopping Hamiltonians displaying a TPT and satisfying Assumptions \ref{assum:H0} and \ref{assum:H}.

\end{enumerate}
\end{remark}

The current operator is defined via Peierls substitution, namely:
\begin{equation}\label{eqn:2.12}
\hat{\mc{J}}_{j,\v{p}}:= L^2 \frac{\de \mc{H}_L}{\de\hat{A}_{j,\v{p}}} (\v{0}), \;\; \v{p}\in \mbb{B}_L \cap B_1(\v{0}),
\end{equation}
where $B_r(\v{0})$ denotes the $2D$ ball of radius $r$ centered at the origin (the radius $r$ is chosen to be 1 in \eqref{eqn:2.12} for definiteness, any finite neighborhood of the origin would be equally fine),

\medskip

\[\mc{H}_L(\v{A}):= \sum_{\v{x},\v{y}\in\L_L}\sum_{\r,\r'\in I} a^{\dg}_{\v{x},\r} \Big( H^{0,L}_{\r\r'}(\v{x}-\v{y}) - \mu\d_{\r,\r'}\d_{\v{x},\v{y}} \Big) a_{\v{y},\r'} e^{i \int_{\v{x}+\v{\d}_{\r}}^{\v{y}+\v{\d}_{\r'}} \v{d\ell}\cdot\v{A}}, \]
where $\int_{\v{x}+\v{\d}_{\r}}^{\v{y}+\v{\d}_{\r'}} \v{d\ell}\cdot\v{A}$ denotes the line integral of $\vec A\cdot d\vec \ell$ along the segment from $\v{x}+\v{\d}_{\r}$ to $\v{y}+\v{\d}_{\r'}$ on the discrete torus $\L_L$: 
\begin{equation*}
\int_{\v{x}+\v{\d}_{\r}}^{\v{y}+\v{\d}_{\r'}} \v{d\ell}\cdot\v{A}= \int_0^1ds \big(\vec{\pi}_L(\v{y}-\v{x})+ \v{\d}_{\r'}-\v{\d}_{\r} \big)\cdot \v{A}\big( \v{x}+\v{\d}_{\r}+ s(\vec{\pi}_L(\v{y}-\v{x})+ \v{\d}_{\r'}-\v{\d}_{\r}) \big)
\end{equation*}
with $\v{\pi}_L:\L\mapsto C_L:= \big\{n_1\v{a}_1+ n_2\v{a}_2:\; -\lfloor{\frac{L-1}{2}\rfloor}\le n_i \le \lfloor{\frac{L}{2}\rfloor} \big\}$ the projection map onto $C_L$ such that $\v{\pi}_L(\v{x})-\v{x}\in L\L$ for all $\v{x}\in\L$; $A_j$ with $j=1,2$ is a real-valued function 
of the plane, whose Fourier transform $\hat A_j$, defined via 
\[A_j(\v{x})\equiv \frac{1}{L^2}\sum_{\substack{\v{p}\in \mbb{B}_L\\ |\v{p}|\le1}}e^{-i\v{p}\cdot\v{x}} \hat{A}_{j,\v{p}}, \qquad \v{x}\in\mbb{R}^2,\]
is supported on $B_1(\vec 0)$, both for $j=1$ and $j=2$. Note that the current operator can be rewritten in the form:
\begin{equation}
\hat{\mc{J}}_{j,\v{p}}= \frac{1}{L^2} \sum_{\v{k}\in\mbb{B}_L} \hat{a}^{\dg}_{\v{k}+\v{p}} \,\Gamma_j^L(\v{k},\v{p};m)\, \hat{a}_{\v{k}}
\end{equation}
where $\Gamma_j^L$, $j=1,2$, are $M\times M$ matrix-valued functions, satisfying, in the infinite-volume limit:
\begin{equation}
\label{def_vertex}
\v{\Gamma}(\v{k},\v{0};m)_{\r\r'}:= \lim_{L\to\infty}\v{\Gamma}^L(\v{k},\v{0};m)_{\r\r'} = -\v{\nabla}_{\v{k}}\hat{H}^0(\v{k};m)_{\r\r'} - i(\v{\d}_{\r}-\v{\d}_{\r'}) \hat{H}^0(\v{k};m)_{\r\r'}, \end{equation}
as it follows from a straightforward computation.

The system at equilibrium at inverse temperature $\b>0$ is described by the \emph{Fermionic Gibbs state}:

\begin{equation}
\label{eqn:gibbs}
\langle \,\cdot\, \rangle_{\b,L}:= \frac{\mathrm{Tr}_{\mc{F}_L}\left\{e^{-\b \mc{H}_L} \,\cdot\, \right\}}{\mathrm{Tr}_{\mc{F}_L}\left\{e^{-\b \mc{H}_L} \right\}}.\end{equation}

\begin{definition}
We define the \emph{Euclidean current-current correlator}, at \emph{Matsubara frequency} $p_0\in\mbb{R}$ and at mass parameter $m\in[-m_0,m_0]$, in the thermodynamic and zero-temperature limits, as:

\begin{equation}
\label{def_K}
\begin{split}
\hat {K}_{ij}(p_0;m):=&\lim_{\beta\to\infty}\lim_{L\to\infty}\frac{1}{\beta L^2|\v{a}_1\times\v{a}_2|}\int_0^\beta dt_1\int_0^\beta dt_2\, e^{-ip_{0,\beta}(t_1-t_2)}\cdot\\
&\cdot \Big[\langle T\big(\hat{\mc{J}}_{\v{0},i}(t_1)\hat{\mc{J}}_{\v{0},j}(t_2)\big)\rangle_{\b,L}-\langle\hat{\mc{J}}_{\v{0},i}\rangle_{\b,L}\langle\hat{\mc{J}}_{\v{0},j}\rangle_{\b,L}\Big]
\end{split}
\end{equation}

for $i,j\in\{1,2\}$ , where:

\begin{itemize}
\item the dependence on $m$ in the r.h.s.\ is implicit both in the Gibbs state $\langle\cdot\rangle_{\b,L}$ and in the current operator;
\item $p_{0,\beta}=\tfrac{2\pi}{\beta}\lfloor \tfrac{\beta p_0}{2\pi}\rfloor$ is an integer multiple of $2\pi/\beta$ tending to $p_0$ as $\beta\to\infty$;
\item for any $0\le t<\beta$, $\hat{\mc{J}}_{\v{p},i}(t):=e^{t \mc{H}_L} \hat{\mc{J}}_{\v{p},i}e^{-t\mc{H}_L}$ is the \emph{imaginary-time evolution} of the current;
\item the operator $T$ inside the Gibbs state is the \emph{fermionic time-ordering}, acting on its argument as follows: 
\begin{equation*}
T\big(\hat{\mc{J}}_{\v{0},i}(t_1)\hat{\mc{J}}_{\v{0},j}(t_2)\big):= 
\left\{\begin{array}{cc}
\hat{\mc{J}}_{\v{0},i}(t_1)\hat{\mc{J}}_{\v{0},j}(t_2),     & \text{if}\ t_1>t_2 \\
\hat{\mc{J}}_{\v{0},j}(t_2)\hat{\mc{J}}_{\v{0},i}(t_1),     & \text{if}\ t_2>t_1,
\end{array}\right.
\end{equation*}
while at coinciding times $T$ acts as the fermionic normal order.
\end{itemize}
\end{definition}

\begin{definition}
\label{def:sigma} In units such that $e^2/\hbar=1$, the \emph{Kubo Euclidean conductivity matrix} $\s_{ij}(m)$ in the ground state is:
\begin{equation}
\s_{ij}(m):= \lim_{p_0\rightarrow 0^-} \frac{\hat{K}_{ij}(p_0;m)- \hat{K}_{ij}(0;m)}{p_0},
\end{equation}
whenever the limit in the r.h.s.\ exists.
\end{definition}

We are ready to state our main result.

\begin{theorem}
\label{thm:main}

Under Assumptions \ref{assum:H0} and \ref{assum:H}, we have that the conductivity matrix $\s(m)$ is well defined for any $m\in[-m_0,m_0]$ and satisfies:
\begin{equation}
\label{eqn:main}
\s_{ij}(0)= \frac{1}{2}\lim_{\e\rightarrow0^+} \Big(  \s_{ij}(\e)+ \s_{ij}(-\e) \Big)+ \frac{1}{16}\sum_{\o=1}^N \frac{(|A_{\o}|^2)_{ij}}{\det |A_{\o}|},
\end{equation}
where $A_{\o}$ is the $2\times2$ matrix defined in Eq.\eqref{eqn:bands1} and $|A_{\o}|\equiv (A^T_{\o}A_{\o})^{\frac{1}{2}}$, with $A_{\o}^T$ the transpose of $A_{\o}$.
\end{theorem}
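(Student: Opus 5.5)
We follow the strategy outlined in the introduction. First, by Wick's theorem, for non-interacting fermions the correlator $\hat K_{ij}(p_0;m)$ reduces, in the limits $L,\b\to\infty$, to a bubble diagram plus a Schwinger (diamagnetic) term:
\[
\hat K_{ij}(p_0;m)=-\frac{1}{|\mbb B|}\int_{\mbb B} d\v k\int_{\mbb R}\frac{dk_0}{2\pi}\,\Tr\Big[\Gamma_i(\v k,\v 0;m)\,G(k_0+p_0,\v k;m)\,\Gamma_j(\v k,\v 0;m)\,G(k_0,\v k;m)\Big]+\text{(local term)},
\]
with propagator $G(k_0,\v k;m)=(-ik_0+\hat H^0(\v k;m)-\mu)^{-1}$. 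The $p_0$-independent local term drops out of $\s$. For $m\neq0$ the Fermi level lies in a gap, so $G$ is analytic and uniformly bounded. Differentiability in $p_0$ at $0$ is then immediate, and $\s(\pm\e)$ is well defined, being the usual Chern-type expression.

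For $m=0$ we compute the difference quotient directly. The integrand is singular only near $(k_0,\v k)=(0,\v k_{F,\o})$, where $\|G\|\sim(|k_0|+|\v k'|)^{-1}$. Power counting in $3$ dimensions shows that the $p_0$-derivative of the bubble is borderline (degree $-3$). The limit still exists, but it must be computed rather than merely estimated. We therefore study the symmetric difference $\mf D\s_{ij}(\e)=\s_{ij}(0)-\tfrac12\s_{ij}(\e)-\tfrac12\s_{ij}(-\e)$ at the level of the difference quotient at fixed $p_0$, taking $p_0\to0^-$ first and then $\e\to0^+$.

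We insert a smooth partition of unity $\chi_\o$ localizing $(k_0,\v k)$ within a small distance $\d$ of $(0,\v k_{F,\o})$. On the complement, $G(\cdot;m)$ is analytic in $m$ uniformly by Remark \ref{rmk:bands}.\ref{itt:3}, so the contributions at $m=0$ and $m=\pm\e$ agree up to $O(\e^2)$. The symmetric combination kills the $O(\e)$ term, and these contributions vanish as $\e\to0$.

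Near each Fermi point we use the Kato--Nagy unitary $U(\v k;m)$, which maps the spectral projection onto the bands $\l_\pm$ to a fixed $2$-dimensional subspace. This splits $G$ into a $2\times2$ block plus a remainder that is bounded by \eqref{eqn:1bis}. Terms involving the bounded block are less singular by one power, so they are integrable and continuous in $m$, and hence cancel in $\mf D$ as $\e\to0$. The same holds after we replace the $2\times2$ block by its linearization
\[
h_\o(\v k',m)=(\v c_\o\cdot\v k'+u_\o m)\id_2+\vec\sigma\cdot(\text{linear in }\v k',m),
\]
chosen so that its eigenvalues reproduce \eqref{eqn:bands1}. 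For the vertices we take the $\v k$-derivative of $h_\o$, consistently with \eqref{def_vertex}. The errors are $O(|\v k'|^2+m^2)$ relative to the leading terms, so by \eqref{eq_6a} they produce integrable corrections. We expect this step to be the main technical obstacle: the bounds must be uniform in $m$ and $p_0$ near the singularity. We would first try a scaling decomposition in shells $|k_0|+|\v k'|\sim 2^{-h}$, gaining a factor $2^{-h}$ from each linearization error.

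In the relativistic model we remove the cutoff $\chi_\o$ at a negligible cost and compute exactly. The identity part $\v c_\o\cdot\v k'+u_\o m$ can be absorbed by the shift $k_0\to k_0+i(\cdots)$. This contour shift is legitimate by \eqref{eqn:bands2b}, which guarantees that no pole is crossed, so the identity part does not contribute. Writing the $\vec\sigma$ part as $\sigma_1 q_1+\sigma_2 q_2+\sigma_3 q_3$ with $q=B_\o(\v k',m)$ for an invertible linear map, we change variables to reduce to the Dirac Hamiltonian $\vec\sigma\cdot\vec q$ with mass $q_3$. In the massive Dirac case the symmetric combination of $\s(\pm\e)$ gives the average of the two half-integer Hall contributions, while at $m=0$ the massless bubble gives $\frac1{16}$ times the identity in the $(q_1,q_2)$ variables. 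Transforming back through the $2\times2$ block $A_\o$ yields $\frac1{16}(|A_\o|^2)_{ij}/\det|A_\o|$. Here the Jacobian $1/\det$ and the two vertices produce $A_\o^TA_\o$, up to an orthogonal factor that is invisible in $|A_\o|$. The $\v b_\o$ term is handled by the same linear change of variables. Summing over $\o$ gives \eqref{eqn:main}.
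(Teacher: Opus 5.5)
Your architecture is the paper's: regular part continuous in $m$, Kato--Nagy reduction, linearization, contour shift justified by \eqref{eqn:bands2b}, and reduction to the isotropic Dirac model with the graphene value $1/16$. The step you single out as the main obstacle is in fact elementary. Each linearization error gains one power, and since $\bs{p}_0$ is purely temporal one has $|\v{k}'|\le\min\{|\bs{k}'|,|\bs{k}'+\bs{p}_0|\}$. This gives a majorant $C|\bs{k}'|^{-2}$ uniform in $p_0$ and $m$, so your continuity-in-$m$ argument goes through with no multiscale analysis (the paper instead splits at the ellipsoid $\mc{D}_{\e,\o}$ and obtains $O(\sqrt{\e})$).

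The real gaps are in the relativistic step. First, removing the cutoff is not a matter of small cost. At $m=0$ the $p_0$-derivative of the bubble is homogeneous of degree $-3$, so every dyadic shell contributes $O(1)$ in absolute value and the whole exterior diverges logarithmically. What makes the cutoff irrelevant is parity: the correction is even in $p_0$ because $\hat{s}^{(1)}_{\o}(-\bs{k}';0)=-\hat{s}^{(1)}_{\o}(\bs{k}';0)$ and the domains are symmetric. The paper only enlarges the integration region to the cylinder $|\v{k}'|\le\sqrt{\e}$, which is all the contour shift needs. The same oddness, in the form $\hat{s}^{(1)}_{\o}(-\bs{k}';-m)=-\hat{s}^{(1)}_{\o}(\bs{k}';m)$, is what disposes of the massive relativistic terms, giving $\lim_{p_0\to0^-}\big(\s^{(1)}_{ij;\o}(p_0;\e)+\s^{(1)}_{ij;\o}(p_0;-\e)\big)=0$ exactly. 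So the ``average of the two half-integer Hall contributions'' you invoke must be exactly zero. Your alternative of computing massive Dirac models through a change of variables $q=B_{\o}(\v{k}',m)$ is not legitimate as written, since $m$ is a parameter, not an integration variable. Likewise, at $m=0$ the Pauli vector of $\hat{h}^{(1)}_{\o}(\v{k}';0)$ ranges over a plane of $\mbb{R}^3$ that need not be the $(\s_1,\s_2)$-plane. Before your linear substitution in $\v{k}'$ you therefore need the $\mbb{SU}(2)$ conjugation realizing the corresponding $\mbb{SO}(3)$ rotation (Lemma~\ref{lemma:Dirac}).

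Second, the contour shift removes the tilt only from the propagators. The vertices are $\gamma_{\o;i}(\v{0};0)=-\de_{k'_i}\hat{h}_{\o}(\v{0};0)$. You assert this identity but should check it: $(\hat{H}^0_{\o}(\v{0};0)-\mu)\phi_{\o;\r}=0$ for $\r=1,2$ kills both the term $-i[\Delta_i,\hat{H}^0]$ in \eqref{def_vertex} and the terms containing derivatives of the Kato--Nagy unitary. These vertices still carry the component $-(\v{c}_{\o})_i\mathds{1}_2$. Hence the claim that ``the two vertices produce $A_{\o}^TA_{\o}$'' requires showing that bubbles with one or two identity vertices do not contribute. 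The paper does this in Lemma~\ref{lemma:kcal}: $\hat{\mc{K}}_{00}=0$ by a residue computation, and $\hat{\mc{K}}_{0j}=\hat{\mc{K}}_{12}=0$, $\hat{\mc{K}}_{11}=\hat{\mc{K}}_{22}$ from the $\pi/2$ rotation covariance of the isotropic Dirac model. Without this, your final formula could a priori retain terms proportional to $\v{c}_{\o}$.
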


\begin{remark}
\label{rmk:main} 
Some observations are in order.

\begin{enumerate}
    
\item \label{it:thm1} The matrix $A_{\o}$ is not defined uniquely, specifically it is defined up to the left multiplication by an orthogonal matrix, while  $A^T_{\o}A_{\o}\equiv |A_{\o}|^2$ is uniquely determined as it stems from the quadratic form $|A_{\o}\v{k}'|^2$. 

\item \label{it:thm2} For every $m\neq 0$, namely in the gapped case, the conductivity matrix can be rewritten as \cite{TKNN,BES94}
\begin{equation} 
\label{eqn:Chern}
\sigma_{ij}(m)=-i \int_{\mbb{B}} \frac{d\v{k}}{(2\pi)^2}\, \Tr_{\mbb{C}^M}\Big\{P_{\mu}(\v{k};m) \big[\de_{k_i}P_{\mu}(\v{k};m), \de_{k_j}P_{\mu}(\v{k};m) \big] \Big\},
\end{equation}
where 
$$P_{\mu}(\v{k};m) := \mathds{1}_{\{\hat{H}^0(\v{k};m)\le \mu \}}$$ is the \emph{Fermi projector}. In particular, $\s_{ii}(m)=0$ and $\s_{12}(m)= -\s_{21}(m)= \frac{1}{2\pi} \mathrm{Ch}(P_{\mu}) \in \frac{1}{2\pi}\mbb{Z}$, with $\mathrm{Ch}(P_{\mu})$ is the Chern number associated with the Fermi projector. Note also that $\s_{ij}$ is constant in $m$ as $m$ varies in the intervals $[-m_0,0)$ and $(0,m_0]$, since the Fermi projector is smooth there. Indeed, by the Riesz formula:
\begin{equation*}
P_{\mu}(\v{k};m)= \frac{i}{2\pi}\oint_{\gamma}dz\,\big(\hat{H}^0(\v{k};m)-z\big)^{-1},
\end{equation*}
with $\gamma$ a complex contour (independent of $m$ in view of \eqref{eqn:54}) encircling the real set $\l_1(\mbb{B};m)\cup\ldots\cup\l_-(\mbb{B};m)$ and with no intersection with the remaining spectrum $\l_+(\mbb{B};m)\cup\ldots\cup\l_M(\mbb{B};m)$. Therefore, the r.h.s.\ of Eq.\eqref{eqn:Chern} is constant in the same intervals (in general the values 
assumed in the two intervals are different, signaling the fact that the system belongs to two different topological phases for $m>0$ and $m<0$). 

\item \label{it:thm3} Remarkably, from Eq.\eqref{eqn:main} we have that the anti-symmetric part of the critical conductivity is quantized in units of $\frac{1}{2\pi}$, namely:
\begin{equation*}
\s_{12}(0)- \s_{21}(0)= \lim_{\e\to 0^+} \Big( \s_{12}(\e) + \s_{12}(-\e) \Big)= \s_{12}(m_0) + \s_{12}(-m_0) \in \tfrac{1}{2\pi}\mbb{Z},
\end{equation*}
in view of Item \ref{it:thm2}. This is in agreement with \cite{FGR25} in the context of the Haldane model, where an even stronger result was established, namely that $\s_{12}(0)=-\s_{21}(0)\in \frac{1}{4\pi}\mbb{Z}$, also in the presence of weak electron interactions. 

Thus, the \emph{non-universal} part of the critical conductivity matrix only stems from its symmetric part.   

\end{enumerate}
\end{remark}

\paragraph{Systems with an emergent discrete rotational symmetry.}

We conclude this section by discussing systems which exhibit an \emph{emergent discrete rotational symmetry}, namely an invariance under discrete rotations emerging at leading order in quasi-momentum and mass, for the Hamiltonian reduced to the two semi-metallic bands. This property is precisely specified in Eq. \eqref{eqn:35}.

Denoting by $P_\pm$ the eigen-projector associated with the eigenvalue $\lambda_\pm$ of $\hat{H}^0$, we define the semi-metallic projector:
\begin{equation}
\label{def:Pi}
\Pi(\v{k};m):=P_+(\v{k};m) +P_-(\v{k};m).
\end{equation}
For every $\o= 1,\dots,N$, we define the semi-metallic two-band Hamiltonian at valley index $\o$ as:
\begin{equation}
\label{eqn:R2}
\hat{H}^0_{\Pi_\o}(\v{k}';m):=\Pi_\o(\v{k}';m)\Big(\hat{H}^0_\o(\v{k}';m) -\mu \mathds{1}_M\Big)\Pi_\o(\v{k}';m),
\end{equation}
with $\Pi_{\o}(\v{k}';m):= \Pi(\v{k}_{F,\o}+\v{k}';m)$, $\hat{H}^0_{\o}(\v{k}';m):= \hat{H}^0(\v{k}_{F,\o}+\v{k}';m)$ and  the variable $\v{k}'$ is the quasi-momentum relative to the Fermi point $\v{k}_{F,\o}$.

The next assumption specifies what ``emergent discrete rotational symmetry" precisely means. 

\begin{assumption}[Emergent discrete rotational symmetry]
\label{assum:R}
For every $\o\in\{1,...,N\}$ there exist a unitary operator $\mathcal{R}_{\o}$  (independent of $(\v{k}';m)$) acting on $\mbb{C}^M$ and a rotation matrix $R_{\o}=\left(\begin{array}{cc}
    \cos{\alpha_{\o}} & -\sin{\alpha_{\o}} \\
     \sin{\alpha_{\o}}& \cos{\alpha_{\o}} 
\end{array}\right)$ with $\a_{\o}\in(0,2\pi)\setminus \{\pi\}$,\footnote{We exclude the value $\a_\o=\pi$ since in that case the rotational symmetry reduces to \emph{parity}, which is insufficient for Corollary \ref{cor:rotation} below to hold.} such that
\begin{equation}
\label{eqn:35}
\R^{\dg}_{\o} \,\hat{H}^0_{\Pi_\o}(\v{k}';m)\,\R_{\o}=\hat{H}^0_{\Pi_\o}(R_{\o}\v{k}';m)+\mc{O}({|\v{k}'|}^2+m^2),
\end{equation}
asymptotically as $|\vec k'|,m\to 0$. 
\end{assumption}

A particular case of emergent discrete rotational symmetry is that of \emph{exact} discrete rotational symmetry, namely the case where the whole Bloch Hamiltonian for every quasi-momentum and mass is covariant under discrete rotations, a condition that additionally requires that each Fermi point is preserved under the rotation. This is specified by the following assumption.

\begin{assumption}[Exact discrete rotational symmetry]
\label{assum:ER}
There exists a $\mathscr{C}^1$ unitary-valued map $\mbb{B}\times[-m_0,m_0]\ni (\vec k,m) \mapsto \R(\v{k},m)$, such that
\begin{equation}
\label{eqn:R1}
\R^{\dg}(\v{k},m) \,\hat{H}^0(\v{k};m)\,\R(\v{k},m)=\hat{H}^0(R\v{k} +\v{v};m),
\end{equation}
with $R= \left(\begin{array}{cc}
    \cos\alpha & -\sin\alpha \\
     \sin\alpha& \cos\alpha 
\end{array}\right)$, for some $\a\in(0,2\pi)\setminus\{\pi\}$ and some $\v{v}\in\mbb{B}$. Moreover, $R\v{k}_{F,\o} +\v{v}= \v{k}_{F,\o}$ mod. $\Span_{\mbb{Z}}\{\v{g}_1,\v{g}_2\}$, for every $\o\in\{1,...,N\}$. 
\end{assumption}

As shown in Appendix \ref{app:rot}, Assumption \ref{assum:ER} implies the validity of Assumption \ref{assum:R}, see Lemma \ref{lemma:R_exact}.

\begin{corollary}
\label{cor:rotation}
Under Assumptions \ref{assum:H0}, \ref{assum:H} and \ref{assum:R}, we have that
\begin{equation}
\s_{ij}(0)= \frac{1}{2} \lim_{\e\to 0^+} \Big( \s_{ij}(\e)+ \s_{ij}(-\e) \Big)+ \frac{N}{16}\d_{i,j}.
\end{equation}
\end{corollary}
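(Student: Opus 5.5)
The plan is to deduce the corollary directly from Theorem \ref{thm:main}. It suffices to show that, under Assumption \ref{assum:R}, for every valley index $\o$ the quadratic form $|A_\o|^2=A_\o^TA_\o$ is a positive multiple of the identity, say $|A_\o|^2=q_\o\mathds{1}_2$ with $q_\o>0$. Then $|A_\o|=\sqrt{q_\o}\,\mathds{1}_2$ and $\det|A_\o|=q_\o$, so $(|A_\o|^2)_{ij}/\det|A_\o|=\d_{i,j}$. Summing over the $N$ valleys turns the correction term $\frac{1}{16}\sum_\o(\cdot)$ in \eqref{eqn:main} into $\frac{N}{16}\d_{i,j}$, which is the claim.

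\textbf{Step 1: from the symmetry to the bands.} I first transfer the symmetry \eqref{eqn:35} to the bands. The operator $\hat H^0_{\Pi_\o}(\v{k}';m)$ has spectrum $\{\l_-(\v{k}_{F,\o}+\v{k}';m)-\mu,\,0,\dots,0,\,\l_+(\v{k}_{F,\o}+\v{k}';m)-\mu\}$, with $M-2$ zeros coming from the range of $\mathds{1}-\Pi_\o$. By \eqref{eq_6a}, for $|\v{k}'|,|m|\le a$ the $\l_-$ eigenvalue is the smallest and the $\l_+$ eigenvalue is the largest. Unitary conjugation by $\R_\o$ preserves the spectrum. Weyl's inequality (ordered eigenvalues are $1$-Lipschitz in operator norm) applied to \eqref{eqn:35} then gives
\[
\l_\pm(\v{k}_{F,\o}+\v{k}';m)=\l_\pm(\v{k}_{F,\o}+R_\o\v{k}';m)+\mc{O}(|\v{k}'|^2+m^2).
\]

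\textbf{Step 2: the cone is rotation invariant.} Set $m=0$ and insert the expansion \eqref{eqn:bands1}:
\[
\v{c}_\o\cdot\v{k}'\pm|A_\o\v{k}'|=\v{c}_\o\cdot R_\o\v{k}'\pm|A_\o R_\o\v{k}'|+\mc{O}(|\v{k}'|^2).
\]
Take the half-difference of the $\pm$ identities and replace $\v{k}'$ by $t\v{k}'$. Both sides are homogeneous of degree one in $t$, so dividing by $t$ and letting $t\to0^+$ yields $|A_\o\v{k}'|=|A_\o R_\o\v{k}'|$ for all $\v{k}'$. In other words, $R_\o^TQ_\o R_\o=Q_\o$ with $Q_\o:=A_\o^TA_\o$, equivalently $[R_\o,Q_\o]=0$. (The half-sum similarly gives $R_\o^T\v{c}_\o=\v{c}_\o$, which is not needed.)

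\textbf{Step 3: linear algebra.} Decompose the real symmetric $Q_\o=\frac{\Tr Q_\o}{2}\mathds{1}_2+S_\o$, where $S_\o$ is traceless and symmetric. Identify $S_\o=\begin{pmatrix}s_1&s_2\\ s_2&-s_1\end{pmatrix}$ with the vector $(s_1,s_2)$. Conjugation by $R_\o$ acts on this vector as rotation by the angle $2\a_\o$. Since $\a_\o\in(0,2\pi)\setminus\{\pi\}$, we have $2\a_\o\notin2\pi\mbb{Z}$, so the only fixed vector is zero and hence $S_\o=0$. Therefore $Q_\o=q_\o\mathds{1}_2$, and $q_\o>0$ because $A_\o$ is invertible. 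This step uses the exclusion of the parity case $\a_\o=\pi$, in line with the footnote to Assumption \ref{assum:R}.

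The only point that needs care is Step 1: identifying which eigenvalues of $\hat H^0_{\Pi_\o}$ are $\l_\pm$ in the presence of the $M-2$ zero eigenvalues. This is handled by the sign information in \eqref{eq_6a}, together with the fact that $\R_\o$ is independent of $(\v{k}';m)$, so that $R_\o\v{k}'$ stays in the small neighborhood where \eqref{eq_6a} holds.
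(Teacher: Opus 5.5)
Your proof is correct. Its overall skeleton matches the paper's:
\begin{itemize}
\item reduce to $A_\omega^TA_\omega\propto\mathds{1}_2$;
\item show that the bands are invariant under $R_\omega$ up to $\mathcal{O}(|\vec k'|^2+m^2)$;
\item deduce $R_\omega^TA_\omega^TA_\omega R_\omega=A_\omega^TA_\omega$;
\item conclude by $2\times 2$ linear algebra.
\end{itemize}
The key transfer step, however, is done differently.

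The paper first proves Lemma \ref{lemma_2by2}. Using the Kato--Nagy block decomposition of Lemma \ref{lemma_KN} and $U_\omega=\mathds{1}_M+\mathcal{O}(|\vec k'|+|m|)$, it shows that the off-diagonal blocks of $\mathcal{R}_\omega$ in the basis $\{\phi_{\omega;\rho}\}$ vanish. The emergent symmetry therefore descends to the effective $2\times2$ Hamiltonian $\hat h_\omega$ through a $2\times2$ unitary $\mathcal{S}_\omega$. Only then does the paper read off \eqref{eqn:39} from the $2\times 2$ spectrum.

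You skip that lemma entirely. You note that $\hat H^0_{\Pi_\omega}$, because $\Pi_\omega$ is a spectral projector of $\hat H^0_\omega$, has spectrum $\{\lambda_\pm-\mu\}$ together with $M-2$ zeros. You use \eqref{eq_6a} to identify $\lambda_\pm-\mu$ as the extreme eigenvalues, and then apply Weyl's inequality directly to \eqref{eqn:35}. This is shorter and avoids the block-by-block estimates of Lemma \ref{lemma_2by2}. What the paper's route buys is a stronger structural statement, namely that the symmetry acts on $\hat h_\omega$ by a $2\times 2$ unitary, but this is not needed for the corollary.

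Two further differences:
\begin{itemize}
\item You restrict to $m=0$, which suffices. The paper keeps $m\neq 0$ and additionally extracts $\vec c_\omega=\vec b_\omega=\vec 0$, which is not used afterwards.
\item Your Step 3 observes that conjugation by $R_\omega$ rotates the traceless symmetric part of $A_\omega^TA_\omega$ by an angle $2\alpha_\omega$. This is a coordinate-free version of the paper's explicit solution of the system \eqref{eqn:41}, and it makes transparent why the parity case $\alpha_\omega=\pi$ must be excluded.
\end{itemize}
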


\begin{remark}
Notice that Assumption \ref{assum:R} is strictly more general than Assumption \ref{assum:ER}. In fact, consider as an example the following 
\emph{anisotropic deformation} of the standard nearest neighbor lattice graphene model, defined in terms of the Bloch-Hamiltonian:
\begin{equation}
\hat{H}^0_{\eta}(\v{k};m):= \left(\begin{array}{cc}
     m & -t_1\Omega^*(\v{k})- t_1 \eta |\Omega(\v{k})|^2 \\
    -t_1\Omega(\v{k})- t_1\eta |\Omega(\v{k})|^2 & - m
\end{array} \right),
\end{equation}
where $t_1>0$ is the hopping parameter, $m\in \mbb{R}$ is the mass, and $\Omega(\v{k}):= 1+ e^{-i\v{k}\cdot\v{a}_1}+ e^{-i\v{k}\cdot\v{a}_2}$, where $\v{a}_1= \frac{1}{2}(3,-\sqrt{3})$ and $\v{a}_2= \frac{1}{2}(3,\sqrt{3})$ are the generators of the triangular lattice of lattice spacing $\sqrt{3}$. We assume the deformation parameter $\eta$ to satisfy $0\leq \eta<\big(\max_{\v{k}\in\mbb{B}}|\Omega(\v{k})|\big)^{-1}$, which ensures that the two energy bands touch only if $m=0$ and $\v{k}= \v{k}_{F,\pm}$, where $\v{k}_{F,+}\equiv\v{k}_{F,1}= \frac{4\pi}{3}(0,1)$ and $\v{k}_{F,-}\equiv \v{k}_{F,2}=-\v{k}_{F,1}$. For $\eta=0$ the model fulfills Assumption \ref{assum:ER} with $\a=\frac{2\pi}{3}$, $\mc{R}(\v{k};m)\equiv \mc{R}(\v{k}) = e^{-\frac{i}{2}\v{k}\cdot\v{a}_2(\mathds{1}_2-\s_3)}$ and $\v{v}=\v{0}$, see \cite[Lemma B.1]{GMP12}. For $\eta$ non-zero, such exact symmetry does not hold anymore. Indeed, one can easily check that the energy bands are not invariant under the spatial rotation $\v{k}\mapsto R\v{k}$. Conversely, for every admissible $\eta$, Assumption \ref{assum:R} is still satisfied, since $\Omega(\v{k}_{F,\pm}+\v{k}')=\frac{3}{2}(ik'_1\pm k'_2)+ \mc{O}(|\v{k}'|^2)$ and the perturbation is quadratic in $\Omega$.
\end{remark}

\section{Examples}
\label{sect:examples}

This section is devoted to presenting three lattice models of physical interest that satisfy Assumptions \ref{assum:H0}-\ref{assum:H}, so that Theorem \ref{thm:main} can be applied. The critical conductivity matrix is explicitly computed, providing a characterization of the TPT. The first example we consider is the Hofstadter model with magnetic flux $\frac{\pi}{2}$ and staggered potential, which displays the same critical conductivity matrix as the one of the Haldane model, in particular $2\pi \sigma_{12}(0)=-1/2$. The second examples is again  the Hofstadter model now with magnetic flux $\frac{\pi}{3}$ and anisotropic on-site potential; in this case the transverse conductivity attains a different half-integer value, namely $2\pi\sigma_{12}(0)=3/2$; similar examples with higher half-integer values of the conductivity can be easily constructed likewise. 
Finally, we compute $\s(0)$ for the strained Haldane model, and show that it is manifestly non-universal, i.e., 
it varies continuously with the strain strength. 

{For the analysis of the first and second examples, we have made use of a computing software \cite{Wol} in order to check 
the validity of Assumption 2. Specifically, the numerical computations for the first and second example are collected in the files \texttt{MathematicaNotebook1.nb} and \texttt{MathematicaNotebook2.nb} respectively, which are available among the ancillary material on the arXiv page of this article.}

\subsection{The Hofstadter model with flux $\Phi=\frac{\pi}{2}$ and staggered potential}
\label{ssect:Hofs_1}
As a first example, we consider one of the simplest models used for the study of the Quantum Hall Effect, namely the Hofstadter model \cite{Hofstadter}. Specifically, we start by considering the model with magnetic flux $\Phi=\frac{\pi}{2}$, in the \emph{square gauge}. Denoting by $\v{e}_1$ and $\v{e}_2$ the canonical basis of $\mbb{R}^2$, 
the lattice is generated by $\v{a}_1= 2\v{e}_1$ and $ \v{a}_2=2\v{e}_2$. Within each unit cell we consider four internal degrees of freedom, labeled by $\r\in\{1,\dots,4\}$, associated with the four sites in Fig. \ref{fig_Hofstadter_square}.

\begin{figure}[ht]
\centering
\includegraphics[scale=1]{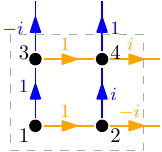}
\caption{graphical representation of the Hofstadter model with magnetic flux $\Phi= \frac{\pi}{2}$ in the square gauge. The dashed gray square identifies the unit cell; black dots represent the sites within the unit cell, indexed by $\r\in\{1,\dots,4\}$; the blue (resp. orange) arrows represent the vertical (resp. horizontal) hoppings with the related hopping strength.}
\label{fig_Hofstadter_square}
\end{figure}

The Hofstadter Hamiltonian on $\L_L$ reads:
\begin{equation}
\begin{split}
\mathcal{H}^{\pi/2}_{\mathrm{Hofs},L}:=& - \sum_{\v{x}\in\L_L} \Big( a^{\dg}_{\v{x},2} a_{\v{x},1}+ a^{\dg}_{\v{x},3} a_{\v{x},1} + i a^{\dg}_{\v{x},4} a_{\v{x},2}+ a^{\dg}_{\v{x},4} a_{\v{x},3}+\\
& -i a^{\dg}_{\v{x}+2\v{e}_1,1} a_{\v{x},2} + i a^{\dg}_{\v{x}+2\v{e}_1,3} a_{\v{x},4} -i a^{\dg}_{\v{x}+2\v{e}_2,1} a_{\v{x},3} + a^{\dg}_{\v{x}+ 2\v{e}_2,2} a_{\v{x},4} \Big) + \text{h.c.},
\end{split}
\end{equation}
where \virg{$\text{h.c.}$} stands for hermitian conjugate. In order to observe a TPT, it is convenient to add a \emph{staggered potential} of the form (see \cite{Nature15}):
\begin{equation}
\mc{V}_{1,L}= -\sum_{\v{x}\in\L_L} \Big( a^{\dg}_{\v{x},1}a_{\v{x},1} - a^{\dg}_{\v{x},4}a_{\v{x},4} \Big)
\end{equation}
to the Hamiltonian. 
We then let the finite-volume Hamiltonian be
$\mathcal{H}_L=\mathcal{H}^{\pi/2}_{\mathrm{Hofs},L}+d\mathcal{V}_{1,L} -\mu\mc{N}_L$, with $d\in\mbb{R}$ playing the role of the mass parameter and $\mu$ to be fixed in the following lines. From $\mc{H}_L$ we can readily obtain the Hamiltonian kernel $H^0$, and thus the Bloch Hamiltonian:
\begin{equation}
\label{eqn:Hofst_Hamiltonian}
\hat{H}^0(\v{k})= -\left(\begin{array}{cccc}
d  & 1-i e^{2ik_1} & 1-ie^{2ik_2} & 0 \\
1+i e^{-2ik_1}& 0& 0  & -i+ e^{2ik_2}\\
1+ i e^{-2ik_2} & 0 & 0 & 1+i e^{2ik_1}\\
0 & i+ e^{-2ik_2} & 1- i e^{-2i k_1} & -d
\end{array} \right).
\end{equation}
There are four energy bands in $k$-space whose behavior depends on the value of $d$. Specifically, as shown in the Figure \ref{fig:bands_1}, for $d=0$ only the central bands intersect at $\vec k=(\pm\frac{\pi}{4},\pm\frac{\pi}{4})$ and $\vec k=(\pm\frac{\pi}{4},\mp\frac{\pi}{4})$, while for $d=2$:
\begin{itemize}
\item the third energy band has a conical intersection with the fourth one at exactly one point $\v{k}_F:=(\frac{\pi}{4},\frac{\pi}{4})$;
\item the first energy band has a conical intersection with the second one at exactly $-\v{k}_F$;
\item the second and third bands have a conical intersection at $(\pm\frac{\pi}{4},\mp\frac{\pi}{4})$.
\end{itemize}

\begin{figure}[ht]
\centering
\includegraphics[scale=1]{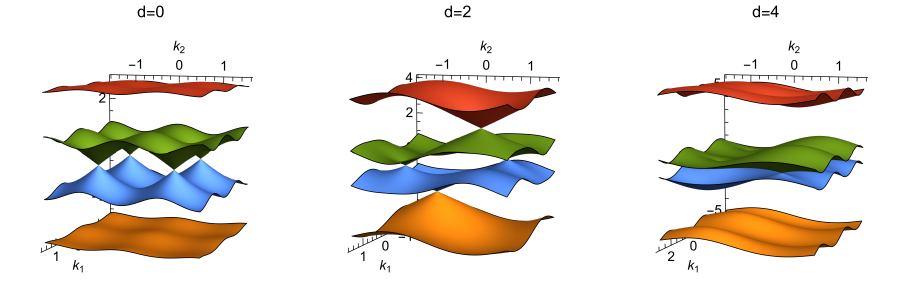}
\caption{plot of the energy bands of the Bloch Hamiltonian in Eq. \eqref{eqn:Hofst_Hamiltonian}, at the values $d=0$, $d=2$ and $d=4$ respectively.}
\label{fig:bands_1}
\end{figure}

Letting the mass parameter be $m:= d-2$ and 
$\hat H^0(\vec k;m):=\hat H^0(\vec k)|_{d=2+m}$, we then set the Fermi energy $\mu=\l_3(\v{k}_F;m)\big|_{m=0}= \l_4(\v{k}_F;m)\big|_{m=0}$.

In the \texttt{MathematicaNotebook1.nb} sheet we provide numerical evidence of the validity of Assumption \ref{assum:H}.\ref{it:1}, and specifically of the fact that for all
$m\in[-2,2]\setminus\{0\}$, $\min_{\v{k}\in\mbb{B}}(\l_4(\v{k};m)-\mu)>0$ and $\max_{\v{k}\in\mbb{B}}(\l_3(\v{k};m)-\mu)<0$. We expect that, with some additional technical effort, this can be fully proved analytically, but we prefer not to belabor the details of such a proof here. 
Since we deal with only one Fermi point, in the following we drop the label $\omega$ in all the involved quantities. The semi-metallic energy bands at dominant order can be computed in terms of the linearized 
Hamiltonian in the eigenspace associated with the second and third energy bands, given by (see Remark \ref{rmk:lin}): 
\begin{equation}
\label{eqn:53}
\hat{h}^{(1)}_{\r\r'}(\v{k}';m)= \sum_{j=1,2} k'_j \langle \phi_{\r}|\de_{k_j}\hat{H}^0(\v{k}_{F};0)| \phi_{\r'}\rangle + m \langle\phi_{\r}|\de_m\hat{H}^0(\v{k}_{F};0)| \phi_{\r'}\rangle,
\end{equation}
with $\r,\r'\in\{1,2\}$ and $\{\phi_1,\phi_2\}$ any system of orthonormal eigenvectors of $\hat{H}^0(\v{k}_F;0)$ with eigenvalue $\mu$. A straightforward computation shows that
\begin{align*}
\hat{h}^{(1)}(\v{k}';m)&=\tfrac{1}{3}m\mathds{1}_2 -\tfrac{2}{\sqrt{3}} k'_1 \s_2+ \tfrac{2}{\sqrt{3}}k'_2 \s_1 -\tfrac{2}{3}m\s_3,
\end{align*}
see \texttt{MathematicaNotebook1.nb}.
The eigenvalues of $\hat h^{(1)}(\vec k';m)$ coincide with $\lambda_\pm(\vec k';m)$ at dominant order in $\vec k'$ and $m$ (see Remark \ref{rmk:lin}), which leads to:
\[
\l_{\pm}(\v{k}_F+\v{k}';m)- \mu= \tfrac{1}{3}m \pm \sqrt{\tfrac{4}{3}|\v{k}'|^2 + \tfrac{4}{9}m^2} + \mc{O}(|\v{k}'|^2+ m^2),
\]
from which Assumption \ref{assum:H} is readily verified. Therefore, Theorem \ref{thm:main} yields:
\begin{equation}
\label{eqn:43}
\s(0)= \frac{1}{2} \Big(\s(0^+)+ \s(0^-) \Big)+ \frac{1}{16} \mathds{1}_2,
 \end{equation}
 where $\s(0^{\pm})\equiv \lim_{\e\to 0^+}\s(\pm \e)$. 

 \begin{remark}
Eq.\eqref{eqn:43} could also be derived as an application of Corollary \ref{cor:rotation}. Indeed, as one can easily check, the present model exhibits an exact rotation symmetry, namely it satisfies Assumption \ref{assum:ER} with $\a=\frac{\pi}{2}$, $\v{v}=\frac{\pi}{2}(1,0)$ and

\[
\mc{R}(\v{k};m)\equiv \mc{R}(\v{k}) = \left(\begin{array}{cccc}
1  & 0 & 0 & 0 \\
0& 0& 1  & 0\\
0 & ie^{-2ik_2} & 0 & 0\\
0 & 0& 0 & e^{-2ik_2}
\end{array} \right).
\]

It turns out that the transformation $\hat{H}^0(\v{k};m)\mapsto \mc{R}(\v{k}) \hat{H}^0(R\v{k}+ \v{v};m) \mc{R}^{\dg}(\v{k};m)$ can be rephrased, in position space, as the geometric rotation by $\frac{\pi}{2}$ in the counter-clockwise direction, accompanied by a gauge transformation which restores the square gauge.
 \end{remark}
 
Let us go back to Eq. \eqref{eqn:43}. Since, as already stated above, for $m\in [-2,0)\cup(0,2]$ the system is gapped, then trivially $\s_{11}(0^{\pm})=\s_{22}(0^{\pm})=0$. Moreover, $\s_{12}(0^{\pm})=-\frac{1}{2\pi}\mathrm{Ch}(\tilde P_{\l_4})(0^{\pm})$, where $\mathrm{Ch}(\tilde P_{\l})(m)$ is the Chern number associated with the projector $\tilde P_{\l}$ over the (isolated) energy band $\l$ at the mass value $m$. It is known that $\mathrm{Ch}(\tilde P_{\l_4})(0^-)=\mathrm{Ch}(\tilde P_{\l_4})(-2)=+1$ (see \cite{TKNN}). 
Hence we are left with computing $\mathrm{Ch}(\tilde P_{\l_4})(0^+)$, namely the value of the Chern number after the TPT. The answer is provided by \cite{B95} and \cite[Theorem 2]{D21} (see also \cite{MP14}), which establish:
\begin{equation}
\label{eqn:55}
\mathrm{Ch}(\tilde P_{\l_4})(0^+)-\mathrm{Ch}(\tilde P_{\l_4})(0^-) =\mathrm{sgn}\det J,
\end{equation}
where, letting $f_s(\v{k}';m):= \frac{1}{2}\mathrm{Tr}\big\{\s_s \hat{h}^{(1)}(\v{k}';m) \big\}$, with $s\in\{1,2,3\}$ and $\s_1,\s_2,\s_3$ the Pauli matrices, 
\begin{equation*}
J:= \left(\begin{array}{ccc}
\de_{k_1}f_1(\v{0};0) & \de_{k_2}f_1(\v{0};0) & \de_{m}f_1(\v{0};0)\\
\de_{k_1}f_2(\v{0};0) & \de_{k_2}f_2(\v{0};0) & \de_{m}f_2(\v{0};0)\\
\de_{k_1}f_3(\v{0};0) & \de_{k_2}f_3(\v{0};0) & \de_{m}f_3(\v{0};0)
\end{array} \right) \equiv \left(\begin{array}{cccc}
0 & \frac{2}{\sqrt{3}} & 0 \\
-\frac{2}{\sqrt{3}}& 0 & 0\\
0& 0 & -\frac{2}{3}
\end{array}\right). 
\end{equation*}
By inspection we have that $\det J<0$, so that $\mathrm{Ch}(\tilde P_{\l_4})(0^+)=0$.

Therefore, by Eq.\eqref{eqn:43}, we have that
 \begin{equation*}
\s(0)= \left(\begin{array}{cc}
         \frac{1}{16}& -\frac{1}{4\pi} \\
         \frac{1}{4\pi} & \frac{1}{16}
    \end{array} \right).
 \end{equation*}
Note incidentally that this conductivity matrix $\s(0)$ is the same obtained in the context of the (interacting) Haldane model, see \cite[Fig.1]{FGR25}. In order to get different values 
for the conductivity, in particular for the transverse one, it is enough to consider 
different values of the magnetic field, as discussed in the next subsection. 

\subsection{The Hofstadter model with flux $\Phi=\frac{\pi}{3}$ and on-site potential}
\label{ssect:Hofs_2}
A natural direction for exploring values of the Hall conductivity $\s_{12}(0)$ different from $\pm\frac{1}{2}\cdot\frac{1}{2\pi}$, is to investigate the Hofstadter model with a higher number of energy bands, where the insulating phases display higher values of $\s_{12}$.

We consider the Hofstadter model with magnetic flux $\Phi=\frac{\pi}{3}$; we choose the gauge such that the magnetic vector potential $\v{A}\colon\mbb{R}^2\to\mbb{R}^2$ is given by $\v{A}(x_1,x_2)=(-\frac{\pi}{3}x_2,0)$.

Denoting by $\v{e}_1$ and $\v{e}_2$ the canonical basis of $\mbb{R}^2$, 
the lattice is generated by $\v{a}_1= \v{e}_1$ and $ \v{a}_2=6\v{e}_2$. Within each unit cell we consider six internal degrees of freedom, labeled by $\r\in\{1,\dots,6\}$, associated with the six sites in Fig. \ref{fig_Hofstadter}. 

\begin{figure}[ht]
\centering
\includegraphics[scale=0.75]{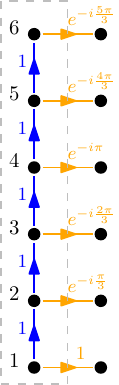}
    \caption{graphical representation of the Hofstadter model with magnetic flux $\Phi= \frac{\pi}{3}$ and one-site potential. The dashed gray square identifies the unit cell; black dots represent the sites within the unit cell, indexed by $\r\in\{1,\dots,6\}$; the blue (resp. orange) arrows represent the vertical (resp. horizontal) hoppings with the related hopping strengths.}
    \label{fig_Hofstadter}
\end{figure}

The Hamiltonian operator on $\L_L$ is defined as $\mathcal{H}_L=\mathcal{H}^{\pi/3}_{\mathrm{Hofs},L}+d\mathcal{V}_{2,L} -\mu\mc{N}_L$,
with:
\begin{align}
\mathcal{H}^{\pi/3}_{\mathrm{Hofs},L}&\label{eqn:Hofs_1a}:= -\sum_{\v{x}\in \L_L} \Big(\sum_{\r=1}^6 e^{-2\pi i (\frac{\r-1}{6})}a^{\dg}_{\v{x}+\v{e}_1,\r}  a_{\v{x},\r}+ \sum_{\r=1}^5 a^{\dg}_{\v{x},\r+1}  \,a_{\v{x},\r}  + a^{\dg}_{\v{x}+6\v{e}_2,1}  a_{\v{x},6}  \Big)+\text{h.c.}\\
\mathcal{V}_{2,L}&\label{eqn:Hofs_1b}:= - \sum_{\v{x}\in \Lambda_L} \Big( -\frac{1}{2}a^{\dg}_{\v{x},1}  a_{\v{x},1}+a^{\dg}_{\v{x},3}a_{\v{x},3}-\frac{1}{2}a^{\dg}_{\v{x},5}a_{\v{x},5} \Big),
\end{align}
where $d\in\mbb{R}$, $\mu$ to be fixed in the following lines, and \virg{h.c.} stands for \virg{Hermitian conjugate}. From $\mc{H}_L$ we can obtain the Hamiltonian kernel $H^0$, and thus the Bloch Hamiltonian $\hat{H}^0$, which is not reported here for the sake of brevity. There are six energy bands in $k$-space whose behavior depends on the value of $d$. Specifically, as shown in Figure \ref{fig:bands_2}, for $d=0$ only the two central bands have non-empty intersection, while\footnote{Here, $d_{cr}:=\frac{2}{3} \Big(4-\frac{4\cdot 2^{\frac{2}{3}}}{{(-13 + 3 \sqrt{33})}^{\frac{1}{3}}} + {(2 (-13 + 3 \sqrt{33}))}^{\frac{1}{3}}\Big)= 1.4088$ has been computed numerically in \texttt{MathematicaNotebook2.nb}}. for $d=d_{cr}$  the second energy band touches conically the third one at exactly one Fermi point $\v{k}_F=(-\frac{\pi}{3},0)$. 

\begin{figure}[ht]
\centering
\includegraphics[scale=1]{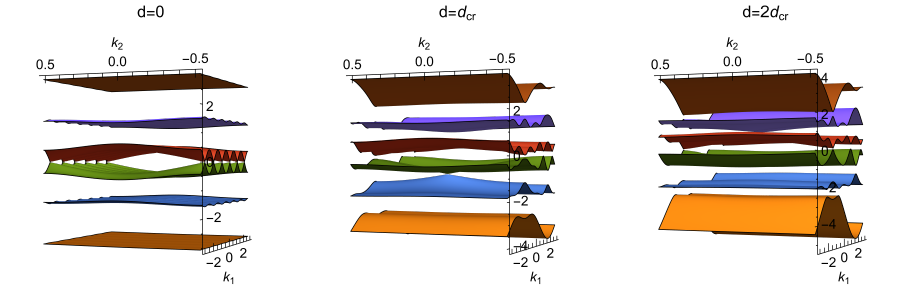}
\caption{plot of the energy bands of the model defined by Eqs. \eqref{eqn:Hofs_1a}-\eqref{eqn:Hofs_1b}, at the values $d=0$, $d=d_{cr}$ and $d=2d_{cr}$ respectively.}
\label{fig:bands_2}
\end{figure}

Letting $m:= d-d_{cr}$, we then set the Fermi energy $\mu=\l_2(\v{k}_F;m)\big|_{m=0}= \l_3(\v{k}_F;m)\big|_{m=0}$.
Again, since we deal with only one Fermi point, in the following we drop the $\omega$-dependence of the subsequent quantities. 
It can be easily checked numerically \cite{Wol} that the linearized Hamiltonian (compare with Eq. \eqref{eqn:53}) in the eigenspace associated with the second and third energy bands, is given by:
\begin{align*}
\hat{h}^{(1)}(\v{k}';m)&=-(\alpha_1\s_1+\alpha_3\s_3)k'_1+ \beta_2\s_2k'_2 -(-\gamma_0\id_2 - \gamma_1\s_1 +\gamma_3\s_3)m\\
&\equiv \gamma_0 m\id_2 -(\a_1k'_1- \gamma_1 m)\s_1 +\beta_2k'_2\s_2 -(\alpha_3k'_1+\gamma_3m)\s_3,
\end{align*}
where, the coefficients $\a_1,\a_3,\b_2,\gamma_0,\gamma_1,\gamma_3$ have values in the following intervals (see \\\texttt{MathematicaNotebook2.nb}):
\begin{equation}
\label{eqn:numerics}
\begin{split}
& 0.7776\le \a_1\le 0.7778, \quad 0.5546\le \a_3\le 0.5548, \quad 1.2770\le \b_2\le 1.2772,\\
&0.0140\le \gamma_0\le 0.0142, \quad 0.2157\le \gamma_1\le 0.2159, \quad 0.3025\le \gamma_3\le 0.3027.
\end{split}
\end{equation}

Thus, in analogy with the previous example (see also Remark \ref{rmk:lin}) we have that
\[
\l_{\pm}(\v{k}_F+\v{k}';m)-\mu=\gamma_0m\pm \sqrt{|A\v{k}'|^2 + (\gamma_1^2 + \gamma_3^2)m^2 +2(\a_1\gamma_1-\a_3\gamma_3) k'_1 m} + \mc{O}(|\v{k}'|^2+m^2),
\]

with $A= \left(\begin{array}{cc}
    \sqrt{\a_1^2+ \a_3^2} & 0 \\
    0 & \b_2
\end{array} \right)$. The validity of Assumption \ref{assum:H}.\ref{it:3} follows by direct inspection; the validity of Assumptions \ref{assum:H}.\ref{it:1} and \ref{assum:H}.\ref{it:2} has been checked numerically, see \\\texttt{MathematicaNotebook2.nb}; also in this case, as for the one discussed in the previous subsection, we expect that these assumptions can be fully proved analytically, with some additional technical effort, but we prefer not to belabor the details of such a proof here.
In view of these facts, Theorem \ref{thm:main} applies, yielding:
\begin{equation}
\label{eqn:42}
\s(0)= \frac{1}{2} \Big(\s(0^+)+ \s(0^-) \Big) + \frac{1}{16} \frac{(A^TA)}{|\det A|} \equiv \frac{1}{2} \Big(\s(0^+)+ \s(0^-) \Big)+ \frac{1}{16} \left(\begin{array}{cc}
    \frac{\sqrt{\a_1^2+\a_3^2}}{\b_2} & 0 \\
    0 & \frac{\b_2}{\sqrt{\a_1^2+\a_3^2}}
\end{array} \right),
 \end{equation}
 where $\s(0^{\pm})\equiv \lim_{\e\to 0^+}\s(\pm \e)$. Thus in order to compute $\s_{12}(0)$ we are left with computing $\s_{12}(0^{\pm})$.
 Notice that $\s_{12}(0^{\pm})=\frac{1}{2\pi}\mathrm{Ch}(\tilde P_{\l_1} + \tilde P_{\l_2})(0^{\pm})$. \\Moreover, it can be checked that the band $\l_1$ remains isolated from all the others for any $m\in[-d_{cr},\frac{1}{5}d_{cr}]$ (see \texttt{MathematicaNotebook2.nb}) and the band $\l_2$ remains isolated from all the others for any $m\in[-d_{cr},0)\cup(0,\frac{1}{5}d_{cr}]$. Hence we have that $\mathrm{Ch}(\tilde P_{\l_1} + \tilde P_{\l_2})(0^{\pm})= \mathrm{Ch}(\tilde P_{\l_1})(0^{\pm}) + \mathrm{Ch}(\tilde P_{\l_2})(0^{\pm})$, $\mathrm{Ch}(\tilde P_{\l_1})(0^+)= \mathrm{Ch}(\tilde P_{\l_1})(0^-)= \mathrm{Ch}(\tilde P_{\l_1})(-d_{cr})$ and $\mathrm{Ch}(\tilde P_{\l_2})(0^-) = \mathrm{Ch}(\tilde P_{\l_2})(-d_{cr})$. Moreover it is known \cite{TKNN} that $\mathrm{Ch}(\tilde P_{\l_1})(-d_{cr})= \mathrm{Ch}(\tilde P_{\l_2})(-d_{cr})= +1$. In order to compute $\mathrm{Ch}(\tilde P_{\l_2})(0^+)$, we proceed as in the previous example, namely we apply the analogue\footnote{Observe the negative sign in the r.h.s. of Eq. \eqref{eqn:55bis}, differently from the positive sign in the r.h.s. of Eq. \eqref{eqn:55}. This is due to the fact that Eq. \eqref{eqn:55} applies in general to the change of the Chern number associated to the \emph{upper band} (i.e. the band above $\mu$) \cite{B95}, while in the present case $\l_2$ plays the role of the \emph{lower band}.} of Eq. \eqref{eqn:55}:
 \begin{equation}
 \label{eqn:55bis}
 \mathrm{Ch}(\tilde P_{\l_2})(0^+)-\mathrm{Ch}(\tilde P_{\l_2})(0^-) = -\mathrm{sgn}\det J,
 \end{equation}
 where, letting $f_s(\v{k}';m):= \frac{1}{2}\mathrm{Tr}\big\{\s_s \hat{h}^{(1)}(\v{k}';m) \big\}$, with $s\in\{1,2,3\}$, 
\begin{equation*}
J:= \left(\begin{array}{ccc}
\de_{k_1}f_1(\v{0};0) & \de_{k_2}f_1(\v{0};0) & \de_{m}f_1(\v{0};0)\\
\de_{k_1}f_2(\v{0};0) & \de_{k_2}f_2(\v{0};0) & \de_{m}f_2(\v{0};0)\\
\de_{k_1}f_3(\v{0};0) & \de_{k_2}f_3(\v{0};0) & \de_{m}f_3(\v{0};0)
\end{array} \right) \equiv \left(\begin{array}{cccc}
    -\a_1 & 0 & \gamma_1 \\
     0& \b_2 & 0\\
     -\a_3& 0 & -\gamma_3
\end{array}\right). 
\end{equation*}

By inspection we have that $\det J>0$, so that $\mathrm{Ch}(\tilde P_{\l_2})(0^+)=0$. Consequently, we find that $\s_{12}(0^-)= 2\cdot\frac{1}{2\pi}$ and $\s_{12}(0^+)= \frac{1}{2\pi}$ and hence, by Eq. \eqref{eqn:42}, $\s_{12}(0)=\frac{3}{2}\cdot\frac{1}{2\pi}$. This is a novel value for the Hall conductance, compared to those observed in the Haldane model \cite{FGR25}. All in all for the conductivity matrix at the transition we get:

\begin{equation}
\s(0)= \left(\begin{array}{cc}
    \frac{\sqrt{\a_1^2+\a_3^2}}{16\b_2} & \frac{3}{4\pi} \\
    -\frac{3}{4\pi} & \frac{\b_2}{16\sqrt{\a_1^2+\a_3^2}}
\end{array} \right).    
\end{equation}

We finally observe that the anisotropy of the on-site potential causes $\s_{11}(0)\neq \s_{22}(0)$, which are both different from $\frac{1}{16}$, differently from the previous example. Indeed, by Eq. \eqref{eqn:numerics}, $0.747\cdot \frac{1}{16} \le \s_{11}(0)\le 0.749\cdot \frac{1}{16}$ and $1.336 \cdot \frac{1}{16} \le \s_{22}(0)\le 1.338 \cdot \frac{1}{16}$.

\subsection{The strained Haldane model}
\label{ssect:strain}

As a concluding example, we analyze a model which exhibits explicit non-universality for the Hall conductivity $\s_{12}$, namely lack of half-integer quantization. We consider a \emph{strained} version of the Haldane model, for which we refer to \cite{Mannai2020}. The lattice is generated by the vectors (we choose units for which $a\equiv \frac{1}{\sqrt{3}}|\v{a}_1|$ is equal to 1):

\[
\v{a}_1= \left(\begin{array}{cc}
\sqrt{3}  \\
 0  
\end{array}\right), \qquad \v{a}_2= \frac{1}{2}\left(\begin{array}{cc}
-\sqrt{3}  \\
3+2\eta  
\end{array}\right),
\]

with $\eta\in\mbb{R}$, to be fixed sufficiently small compared to 1.\footnote{Compared to \cite{Mannai2020}, in order to avoid confusion with the notation, we denote the strain parameter by $\eta$ instead of $\e$.} In units such that $\hslash=1$, the Bloch Hamiltonian reads (cf. \cite[Eq. (6)]{Mannai2020}\footnote{Note that, differently from our conventions, in \cite{Mannai2020} the Hamiltonian is supposed to be \emph{equivariant} rather than periodic: in the r.h.s. of Eq. \eqref{eqn:strain1} we multiplicate by $e^{\mp\frac{1}{2}i\v{k}\cdot\v{\t}_2(\mathds{1}_2-\s_2)}$ so that $\hat{H}^0$ is periodic.}):

\begin{equation}
\label{eqn:strain1}
\hat{H}^0(\v{k})= e^{-\frac{1}{2}i\v{k}\cdot\v{\t}_2(\mathds{1}_2-\s_3)}\left(\begin{array}{cc}
h_{AA}(\v{k})     & h^*_{AB}(\v{k}) \\
h_{AB}(\v{k})     & h_{BB}(\v{k})
\end{array} \right) e^{+\frac{1}{2}i\v{k}\cdot\v{\t}_2(\mathds{1}_2-\s_3)},
\end{equation}

where:

\begin{itemize}
\item $\v{\t}_1= \frac{1}{2}(\sqrt{3},1)$, $\v{\t}_2= \frac{1}{2}(-\sqrt{3},1)$, $\v{\t}_3= -(0,1+\eta)$ (see Fig. \ref{fig_Haldane_strain});
\item $h_{AA}(\v{k})= M+ 2t_2\Big(\cos\Phi \cos(\v{k}\cdot\v{a}_1)+ \tfrac{t'_2}{t_2}\cos\Phi'\big( \cos(\v{k}\cdot\v{a}_2)+ \cos(\v{k}\cdot\v{a}_3) \big) \Big) -2t_2 \Big(\sin\Phi \sin(\v{k}\cdot\v{a}_1)+ \tfrac{t'_2}{t_2}\sin\Phi'\big( \sin(\v{k}\cdot\v{a}_2) -\sin(\v{k}\cdot\v{a}_3) \Big)$, in which $\v{a}_3=\v{a}_1+\v{a}_2$ and $M,\Phi$ are real parameters, $\Phi'=\Phi(1+\eta)$; $t_1,t_2>0$ are hopping strengths and $t'_2= t_2\big(1-2\eta+ \frac{b}{2d}\eta \big)$, with $b,d>0$;
\item $h_{BB}(\v{k})= -M+ 2t_2\Big(\cos\Phi \cos(\v{k}\cdot\v{a}_1)+ \tfrac{t'_2}{t_2}\cos\Phi'\big( \cos(\v{k}\cdot\v{a}_2)+ \cos(\v{k}\cdot\v{a}_3) \big) \Big) +2t_2 \Big(\sin\Phi \sin(\v{k}\cdot\v{a}_1)+ \tfrac{t'_2}{t_2}\sin\Phi'\big( \sin(\v{k}\cdot\v{a}_2) -\sin(\v{k}\cdot\v{a}_3) \Big)$;
\item $h_{AB}(\v{k})= t'e^{i\v{k}\cdot\v{\t}_3}+ t\Big(e^{i\v{k}\cdot\v{\t}_1}+ e^{i\v{k}\cdot\v{\t}_2} \Big)$, with $t'=t(1-2\eta)$.
\end{itemize}

\begin{figure}[ht]
\centering
\includegraphics[scale=0.75]{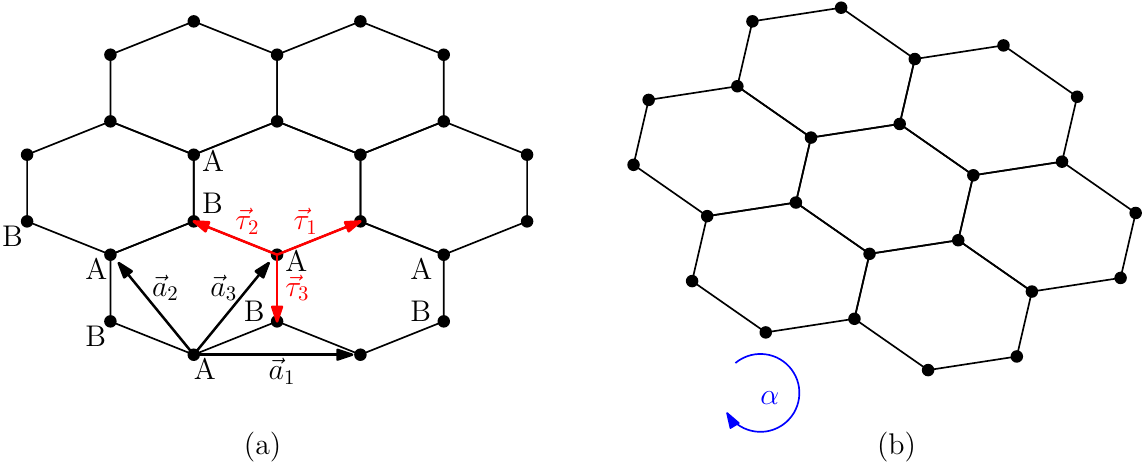}
    \caption{(a): lattice related to the strained Haldane model in Eq. \eqref{eqn:strain1}, as in \cite{Mannai2020}, where the labels A and B refer to the two sub-lattice; (b): lattice obtained from the original one by a clockwise rotation of angle $\a$.}
    \label{fig_Haldane_strain}
\end{figure}

The model has the same qualitative features as the Haldane model.
In particular there are two Fermi points \cite[Eq. (12)]{Mannai2020}: $ \v{k}_F^{\pm}= \pm \frac{2}{\sqrt{3}} \arccos \big(\frac{-t'}{2t}\big)(1,0)$, which are the only points at which the the two bands $\l_+$ and $\l_-$ can intersect; besides they intersect at $\v{k}_F^{\xi}$, for $\xi\in\{\pm\}$, only if 

\begin{equation*}
m_{\xi}:= M+ \xi2t_2 \left( \frac{2t'_2}{t_2}\sin \Phi' \sin \th - \sin\Phi\sin2\th\right)=0,
\end{equation*}

with $\th= \arccos\big(-\frac{t'}{2t} \big)$. For $\v{k}$ close to $\v{k}_F^{\xi}$ and $m_{\xi}$ close to $0$, the Bloch Hamiltonian takes the form \cite[Eq. (13)]{Mannai2020}:

\begin{equation}
\label{eqn:strain2}
\hat{H}^0(\v{k}_{F}^{\xi}+\v{k}')= \l_{+}(\v{k}_F^{\pm})+ \left(\begin{array}{cc}
    m_{\xi}+ \xi w_{0x} k'_1 & \xi w_xk'_1 - iw_yk'_2 \\
\xi w_{x} k'_1 +i w_yk'_2    & -m_{\xi}+ \xi w_{0x}k'_1
\end{array} \right)+ \mc{O}(|\v{k}'|^2 +m_{\xi}^2),
\end{equation}

where:

\begin{itemize}
\item $\l_+(\v{k}_F^{\xi})= \l_-(\v{k}_F^{\xi})= 2t_2 \Big(\cos\Phi\cos(\v{k}_F^+\cdot\v{a}_1)+ \frac{t'_2}{t_2}\cos\Phi' \big(\cos(\v{k}_F^+\cdot\v{a}_2)+ \cos(\v{k}_F^+\cdot\v{a}_3) \big) \Big)$ is independent of $M$;
\item $w_{0x}= -2\sqrt{3}\big(t_2\cos\Phi \sin(2\th) + t'_2\cos\Phi'\sin\th\big)$, $w_x= \frac{3}{2}t \big(1+ \frac{2}{3}\eta \big)$ and $w_y= \frac{3}{2}t\big(1-\frac{4}{3}\eta \big)$.
\end{itemize}

From Eq. \eqref{eqn:strain2} we see that the energy bands are given by \cite[Eq. (17)]{Mannai2020}:

\begin{equation}
\label{eqn:strain3}
\l_{\pm}(\v{k}_F^{\xi}+\v{k}') =\l_{+}(\v{k}_F^{\pm})+ \xi w_{0x} k'_1 \pm \sqrt{w_x^2 k_1'^2+ w_y^2 k_2'^2+ m_{\xi}^2}+ \mc{O}(|\v{k}'|^2+ m_{\xi}^2).
\end{equation}

\medskip

In order to investigate the lack of universality for $\s_{12}$, we study a simple generalization of the model just introduced, obtained from a 2D rotation of the lattice by an angle $\a$; equivalently we study the same model from a reference frame which is rotated by an angle $-\a$. More precisely, letting $R_{\a}= \left(\begin{array}{cc}
    \cos\a & -\sin\a \\
    \sin\a & \cos\a
\end{array}\right)$, we consider the lattice $\L:= \Span_{\mbb{Z}} \big\{R_{\a}^T\v{a}_1, R_{\a}^T\v{a}_2 \big\}$. It is straightforward to check that, according to Eq. \eqref{def:Hamiltonian}, the Bloch Hamiltonian of the model is $\hat{H}^0(R_{\a}\v{k})$, with $\hat{H}^0$ as in Eq. \eqref{eqn:strain1}. We study the Topological Phase Transition by fixing $\Phi>0$ such that $2t_2 \left( \frac{2t'_2}{t_2}\sin \Phi' \sin \th - \sin\Phi\sin2\th\right)>0$, so that, if $M\ge0$, the two energy bands can touch only at $\v{k}_{F}:= R_{\a}^T\v{k}_F^-$. We then set:

\begin{equation}
\hat H^0(\v{k};m):= \hat{H}^0(R_{\a}\v{k})\Big|_{M= m + 2t_2 \big( \frac{2t'_2}{t_2}\sin \Phi' \sin \th - \sin\Phi\sin2\th\big)}
\end{equation}

where the parameter $m$ implicitly corresponds to $m_-$. The energy bands $\l_{\pm}(\v{k}_{F}+\v{k}';m)$ are therefore given by

\begin{equation}
\begin{split}
&\l_{\pm}(\v{k}_{F}+\v{k}';m)- \l_{\pm}(\v{k}_{F};0)= -w_{0x}k'_1 \pm \sqrt{|A\v{k}'|^2 + m^2}+ \mc{O}(|\v{k}'|^2+m^2),
\end{split}
\end{equation}

with $A= \left(\begin{array}{cc}
w_x     & 0 \\
0     & w_y
\end{array}\right)R_{\a}$. We now apply Theorem \ref{thm:main}. Notice that the requirements of Assumption \ref{assum:H} are naturally fulfilled for $\eta$ small enough, since the energy bands are continuous w.r.t. $(\eta,\v{k},m)$ and for $\eta\to0$ the model reduces to the Haldane model \cite{H88}, which in turn verifies all the items of Assumption \ref{assum:H}. Therefore:

\begin{equation}
\s_{ij}(0)= \frac{1}{2}\Big(\s_{ij}(0^+)+ \s_{ij}(0^-)\Big)+ \frac{1}{16} \frac{(A^TA)_{ij}}{|\det A|}.
\end{equation}

Using that for $m>0$ small enough, $\s_{12}(m)=0,\s_{12}(-m)=+1$ \cite[Eq. (20)]{Mannai2020}, and using the explicit expression for the matrix $A$, we get that

\begin{align}
&\s_{12}(0)= \frac{1}{4\pi}+ \frac{1}{32}\sin 2\a \left(-\frac{3+2\eta}{3-4\eta}+ \frac{3-4\eta}{3+2\eta} \right)= \frac{1}{4\pi}- \frac{\eta}{8}\sin 2\a \left(1+ \mc{O}(\eta)\right),\\
&\s_{21}(0)= -\frac{1}{4\pi}+ \frac{1}{32}\sin 2\a \left(-\frac{3+2\eta}{3-4\eta}+ \frac{3-4\eta}{3+2\eta} \right)= -\frac{1}{4\pi}- \frac{\eta}{8}\sin 2\a \left(1+ \mc{O}(\eta)\right),
\end{align}

which are not quantized in units of $\frac{1}{4\pi}$, for $\eta\ne0$ and $\a$ not a multiple of $\frac{\pi}{2}$. Besides, one can also compute the value of the longitudinal conductivity:
\begin{align}
&\s_{11}(0)= \frac{1}{16} \left(\frac{3+2\eta}{3-4\eta} \cos^2\a+ \frac{3-4\eta}{3+2\eta}\sin^2\a \right)= \frac{1}{16} + \frac{\eta}{8}\cos2\a+ \mc{O}(\eta^2),\\
&\s_{22}(0)= \frac{1}{16} \left(\frac{3+2\eta}{3-4\eta} \sin^2\a+ \frac{3-4\eta}{3+2\eta}\cos^2\a \right)= \frac{1}{16} - \frac{\eta}{8}\cos 2\a+ \mc{O}(\eta^2),
\end{align}
which agree with \cite[Eq. (2.23)]{MPP26}\footnote{Note that in \cite{MPP26} the strain parameter was denoted by $\e$ instead of $\eta$.} for $\a=0$.

\section{Proof of the main theorem}
\label{sect:proof}

Throughout this section, we will always assume the validity of Assumptions \ref{assum:H0} and \ref{assum:H}, which are the hypotheses of Theorem \ref{thm:main}. The proof of Theorem \ref{thm:main} is structured in four subsequent parts:
\begin{enumerate}
    \item We reduce the $M$-band system to the semi-metallic two-band system, by using Kato-Nagy unitary $U_\o$, Eq. \eqref{eqn:KN}. 
    This strategy is similar to the one implemented in \cite[Section 8.3]{Cances}.
    
    \item The semi-metallic two-band system boils down to its linearization w.r.t.\ $\v{k}$ close to each Fermi point and $m$ close to zero. We follow  ideas analogous to those used in \cite[Section IV.C]{FGR25}. However, here we are able to reduce the computation to the relativistic model with $m=0$, showing that the two contributions from $m=\pm\e$ apriori cancel for the linearized system (cf. Remark \ref{rmk:1}.\ref{its:2}).
    \item We show that the linearized semi-metallic system is equivalent, via a unitary conjugation and a spatial rotation, to the Dirac Hamiltonian. We proceed similarly to \cite[Sections 8.3 - 8.4]{Cances}, even though we are able to include also the case of non-isotropic Dirac cones.   
    
    \item Finally, we perform the computation of the massless conductivity $\s(0)$ by relying on the explicit form of the Dirac system. For this step we combine ideas from \cite{GMP12}, where only the computation of the longitudinal conductivity of graphene was addressed, with  novel insights which are crucial for the adaptation to the present setting of generic semi-metallic systems. 
\end{enumerate}

As a starting point, we have that the current-current function can be computed via the fermionic Wick rule; denoting by $\bs{k}\equiv (k_0,\v{k})$ and $\bs{p}_0\equiv (p_0,\v{0})$, the following identity holds true:

\begin{equation}
\label{eq_3}
\hat{K}_{ij}(p_0;m)= - \int_{\mbb{R}\times\mbb{B}} \frac{d\bs{k}}{(2\pi)^3} \Tr_{\mbb{C}^M} \left\{ \Gamma_i(\v{k},\v{0};m) \hat{S}(\bs{k};m) \Gamma_j(\v{k},\v{0};m) \hat{S}(\bs{k}+\bs{p}_0;m) \right\}
\end{equation}

where $\hat{S}(\bs{k};m):= \big(-i k_0 + \hat{H}_0(\v{k};m) -\mu \big)^{-1}$ is the non-interacting two-point function.

\subsection{Reduction to two-band system}
\label{ssect:twobands}

Since the two semi-metallic bands $\l_\pm$ near the Fermi points are separated from the rest of the spectrum, the eigen-projector $\Pi$ associated to their eigenspace (see Eq. \eqref{def:Pi}) inherits the regularity of the Hamiltonian. This the content of the next result, whose proof is postponed in Appendix \ref{app:1_conical}.

\begin{lemma}
\label{lem:smoothP}
There exists $0<\kappa\leq m_0$, with $m_0$ as in Assumption \ref{assum:H}, such that for any $\o\in\{1,\dots,N\}$  the map $B_{\kappa}(\v{0})\times[-\kappa,\kappa]\ni (\v{k}';m)\mapsto\Pi(\v{k}_{F,\o}+\v{k}';m)$ is analytic.
\end{lemma}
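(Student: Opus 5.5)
The plan is to represent $\Pi$ by a Riesz integral whose contour is fixed, independent of $(\v{k}',m)$, on a neighbourhood of $(\v{k}_{F,\o},0)$, and then read off analyticity from the analyticity of the resolvent.

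The key input is the spectral separation at the Fermi point. By Assumption \ref{assum:H}.\ref{it:1}, at $(\v{k}_{F,\o},0)$ the two semi-metallic bands coincide with the value $\mu$. By Eq.\eqref{eqn:1}, every other eigenvalue $\l_j(\v{k}_{F,\o};0)$, $j\neq n,n+1$, lies at distance at least $\Delta'$ from $\mu$.

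Next I control the perturbation. Set $\gamma:=\{z\in\mbb{C}: |z-\mu|=\Delta'/2\}$. By Assumption \ref{assum:H0} and Assumption \ref{assum:H}, $\hat H^0(\v{k};m)$ is jointly analytic in $(\v{k},m)$ on a complex neighbourhood of $\mbb{B}\times[-m_0,m_0]$. In particular it is continuous there. Hence there exists $\kappa_1>0$ such that
\[
\|\hat H^0(\v{k}_{F,\o}+\v{k}';m)-\hat H^0(\v{k}_{F,\o};0)\|<\Delta'/4
\]
for all complex $\v{k}'$ and $m$ with $|\v{k}'|,|m|<\kappa_1$.

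On the real points of this polydisc I use Weyl's inequality for Hermitian matrices. It shows that:
\begin{itemize}
\item $\l_\pm(\v{k}_{F,\o}+\v{k}';m)$ lie within $\Delta'/4$ of $\mu$, so they sit inside $\gamma$;
\item every other eigenvalue lies outside the disc of radius $3\Delta'/4$ around $\mu$.
\end{itemize}
The ordering of eigenvalues is preserved because all the shifts are smaller than $\Delta'/4$. So exactly the eigenvalues $\l_n,\l_{n+1}$ are enclosed by $\gamma$, and none lies on $\gamma$.

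For complex $(\v{k}',m)$ I cannot use self-adjointness. Instead I bound the resolvent directly. For $z\in\gamma$ one has $\|(\hat H^0(\v{k}_{F,\o};0)-z)^{-1}\|\le 2/\Delta'$. A Neumann series then shows that $\hat H^0(\v{k}_{F,\o}+\v{k}';m)-z$ is invertible, with norm bounded uniformly in $z\in\gamma$ and in the complex polydisc.

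I then define
\[
\Pi(\v{k}_{F,\o}+\v{k}';m)=\frac{i}{2\pi}\oint_\gamma dz\,\big(\hat H^0(\v{k}_{F,\o}+\v{k}';m)-z\big)^{-1}.
\]
This holds on real arguments, where it equals $P_++P_-$ by the spectral theorem. The sign convention is the same one used for $P_\mu$ in Remark \ref{rmk:main}.\ref{it:thm2}.

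The integrand is jointly analytic in $(\v{k}',m)$, being the inverse of an analytic invertible matrix function. The convergence is uniform in $z\in\gamma$. Therefore I can differentiate under the integral sign, or equivalently apply Morera/Hartogs in each variable, and conclude that the integral is analytic on the polydisc.

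Finally I take $\kappa:=\min\{\kappa_1/2,m_0\}$, minimised over the finitely many $\o$. This gives analyticity on $B_\kappa(\v{0})\times[-\kappa,\kappa]$.

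The only delicate point is the uniformity of the contour. Specifically, one must check that no eigenvalue crosses $\gamma$ as $(\v{k}',m)$ varies near $(\v{0},0)$, even though the bands $\l_\pm$ themselves are not smooth at the conical point. The Riesz representation sidesteps that non-smoothness entirely, since it needs only the gap $\Delta'$ of Eq.\eqref{eqn:1} together with continuity of the Hamiltonian.
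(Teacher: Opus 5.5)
Your proof is correct and follows the paper's route: $\Pi$ is written as a Riesz integral over a fixed circle centred at $\mu$, and its analyticity is inherited from the resolvent of $\hat H^0$. The only difference is how you secure the spectral separation near $(\v{k}_{F,\o},0)$. The paper uses the conical bound \eqref{eq_6a} together with the uniform gap \eqref{eqn:1bis}, which gives the explicit radius $\kappa=\min\{\Delta/(4C'_\star),a\}$; you need only the gap \eqref{eqn:1} at the Fermi point plus continuity of $\hat H^0$ (via Weyl's inequality), and your Neumann-series bound makes explicit the invertibility on a complex neighbourhood that the paper leaves implicit.
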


Using the shorthand notation introduced after Eq. \eqref{eqn:R2}, namely:

\[
\hat{H}^0_{\o}(\v{k}';m)\equiv \hat{H}^0(\v{k}_{F,\o}+\v{k}';m),\qquad\Pi_{\o}(\v{k}';m)\equiv \Pi(\v{k}_{F,\o}+\v{k}';m),
\]

we fix a parameter $0<\d\le \kappa$ such that

\begin{enumerate}
\item Eq. \eqref{eq_6a} holds, namely $\frac{1}{2}C_{\star} \big(|\v{k}'|+|m|\big)\le \pm \big( \l_{\pm}(\v{k}'+\v{k}_{F,\o};m)- \mu \big) \le C'_{\star}\big(|\v{k}'|+|m|\big)$;
\item $B_{\d}(\v{0})\subset \mbb{B}$;\footnote{\label{footnote:B}Observe that in this way $\dist{\v{k}'}_{\mbb{B}}=|\v{k}'|$ for $\v{k}'\in B_{\d}(\v{0})$.}
\item $\d\le \frac{1}{2} \min_{1\le \o<\o'\le N} \dist{\v{k}_{F,\o}-\v{k}_{F,\o'}}_{\mbb{B}}$;
\item $\big\|\Pi_{\o}(\v{k}';m)- \Pi_{\o}(\v{0};0) \big\|\le \frac{1}{2}$ for $|\v{k}'|$, $|m|\le \d$ in view of Lemma \ref{lem:smoothP}.
\end{enumerate}

We can then construct the Kato-Nagy unitary:

\begin{equation}
\label{eqn:KN}
\begin{split}
U_{\o}(\v{k}';m):=& \Big(\Pi_{\o}(\v{k}';m)\Pi_{\o}(\v{0};0) + \big(\mathds{1}_M-\Pi_{\o}(\v{k}';m)\big)\big(\mathds{1}_M-\Pi_{\o}(\v{0};0)\big) \Big)\cdot\\
&\cdot\Big(\mathds{1}_M - \big(\Pi_{\o}(\v{k}';m)-\Pi_{\o}(\v{0};0)\big)^2\Big)^{-\frac{1}{2}}
\end{split}\end{equation}

which is analytic, since it has the same regularity as $\Pi_\o$ \cite[Chapter I, Section 4.6]{Kato} and maps any orthonormal basis of $\mathrm{Ran}( \Pi_{\o}(\v{0};0))$ into an orthonormal basis of $\mathrm{Ran} (\Pi_{\o}(\v{k}';m))$ (notice that $\hat{H}^0_\o(\v{0};0)\Pi_{\o}(\v{0};0)=\mu\Pi_{\o}(\v{0};0)$). Specifically, chosen any orthonormal basis $\big\{\phi_{\o;1}, \phi_{\o;2}\big\}$ of $\mathrm{Ran} (\Pi_{\o}(\v{0};0))$, we have that $\phi_{\o;1}(\v{k}';m):= U_{\o}(\v{k}';m) \phi_{\o;1}$ and $\phi_{\o;2}(\v{k}';m):= U_{\o}(\v{k}';m)\phi_{\o;2}$ form an orthonormal basis for $\mathrm{Ran}( \Pi_{\o}(\v{k}';m))$. Thus, we have the \emph{intertwining  property} of $U_\o$:
\begin{equation}
\label{eqn:2}
\Pi_\o(\v{k}';m)U_\o(\v{k}';m)=U_\o(\v{k}';m)\Pi_\o(\v{0};0)\quad \forall |\v{k}'|,|m|\le \d.
\end{equation}
Moreover, we choose an orthonormal basis $\{\phi_{\o;\rho}\}_{\rho=3}^{M}$ of $\mathrm{Ran}(\Pi_\o(\v{0};0)^\perp)$ consisting of eigenvectors for $\hat{H}^0_\o(\v{0};0)$. By the intertwining property of $U_\o(\v{k}';m)$, we have that $\{U_\o(\v{k}';m)\phi_{\o;\rho}\}_{\rho=3}^{M}$ is an orthonormal basis of $\mathrm{Ran}(\Pi_\o(\v{k}';m))^{\perp}$.

\begin{lemma}
\label{lemma_KN}
We have that
\begin{align*}
 &   U_\o^\dg(\v{k}';m)
\big(\hat{H}^0_\o(\v{k}';m)-\mu\big)U_\o(\v{k}';m)=
\\
&\sum_{1\leq \r,\r'\leq 2} \hat{h}_{\o;\r\r'}(\v{k}';m)\ket{\phi_{\o;\r}}\bra{\phi_{\o;\r'}}+\sum_{3\leq \r,\r'\leq M} \hat{r}_{\o;\r\r'}(\v{k}';m)\ket{\phi_{\o;\r}}\bra{\phi_{\o;\r'}},
\end{align*}

for every $|\v{k}'|,|m|\le \d$, where:

\begin{align}
\label{eqn:h1}
& \hat{h}_{\o;\r\r'}(\v{k}';m):= \bra{\phi_{\o;\r}} U_{\o}^{\dg}(\v{k}';m) \big( \hat{H}^0_{\o}(\v{k}';m)-\mu \big) U_{\o}(\v{k}';m) \ket{\phi_{\o;\r'}}, \; \r,\r'\in\{1,2\}\\
\label{eqn:h2}
&\hat{r}_{\o;\r\r'}(\v{k}';m):= \bra{\phi_{\o;\r}} U_{\o}^{\dg}(\v{k}';m) \big( \hat{H}^0_{\o}(\v{k}';m)-\mu \big) U_{\o}(\v{k}';m) \ket{\phi_{\o;\r'}}, \; \r,\r'\in\{3,...,M\}
\end{align}
are both analytic in $\v{k}',m$.
\end{lemma}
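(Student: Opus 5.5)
The plan is to combine the intertwining property \eqref{eqn:2} with the fact that the spectral projection $\Pi_\o(\v{k}';m)$ commutes with $\hat{H}^0_\o(\v{k}';m)$.

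\emph{Block diagonality.} Write $T(\v{k}';m):=U_\o^\dg(\v{k}';m)\big(\hat{H}^0_\o(\v{k}';m)-\mu\big)U_\o(\v{k}';m)$ and set $\Pi_0:=\Pi_\o(\v{0};0)$. Since $U_\o$ is unitary, taking the adjoint of \eqref{eqn:2} gives $U_\o^\dg\Pi_\o=\Pi_0U_\o^\dg$. Because $\Pi_\o$ is a spectral projector of $\hat{H}^0_\o$, it commutes with $\hat{H}^0_\o-\mu$. Hence
\[
\Pi_0 T=U_\o^\dg\Pi_\o(\hat{H}^0_\o-\mu)U_\o=U_\o^\dg(\hat{H}^0_\o-\mu)\Pi_\o U_\o=T\Pi_0 ,
\]
so $T$ commutes with $\Pi_0$. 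It follows that $(\mathds{1}_M-\Pi_0)T\Pi_0=0$ and $\Pi_0T(\mathds{1}_M-\Pi_0)=0$.

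\emph{Expansion in the basis.} Expand $T=\sum_{\r,\r'=1}^M\langle\phi_{\o;\r}|T|\phi_{\o;\r'}\rangle\,|\phi_{\o;\r}\rangle\langle\phi_{\o;\r'}|$ in the orthonormal basis $\{\phi_{\o;\r}\}_{\r=1}^M$ of $\mbb{C}^M$. Here $\phi_{\o;1},\phi_{\o;2}$ span $\mathrm{Ran}\,\Pi_0$ and $\phi_{\o;3},\dots,\phi_{\o;M}$ span its orthogonal complement. By the previous step, the mixed matrix elements (one index in $\{1,2\}$, the other in $\{3,\dots,M\}$) vanish. The remaining entries are exactly $\hat h_{\o;\r\r'}$ and $\hat r_{\o;\r\r'}$ as defined in \eqref{eqn:h1}--\eqref{eqn:h2}, which gives the claimed decomposition for all $|\v{k}'|,|m|\le\d$.

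\emph{Analyticity.} The entries are matrix elements of $U_\o^\dg(\hat H^0_\o-\mu)U_\o$ against fixed vectors. The Hamiltonian $\hat{H}^0_\o$ is analytic in $(\v{k}',m)$ by Assumption \ref{assum:H0} and Assumption \ref{assum:H}.

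For $U_\o$: by Lemma \ref{lem:smoothP}, $\Pi_\o$ is analytic, and by condition 4 in the choice of $\d$ we have $\|\Pi_\o-\Pi_0\|\le\frac12$. Therefore $\mathds{1}_M-(\Pi_\o-\Pi_0)^2$ has spectrum in $[\frac34,1]$, and its inverse square root is an analytic function of it, e.g.\ via a convergent binomial series or the Riesz functional calculus. So $U_\o$ is analytic.

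The one point needing care is the factor $U_\o^\dg$. In real variables it is the real-analytic function obtained by complex conjugation, so products remain real-analytic. Alternatively, one can note $U_\o^\dg$ equals the Kato--Nagy formula with the two projections interchanged, built from $\Pi_\o^\dg=\Pi_\o$, which is analytic with the same argument. With this, all entries $\hat h_{\o;\r\r'}$ and $\hat r_{\o;\r\r'}$ are analytic in $(\v{k}',m)$.
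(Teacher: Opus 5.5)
Your proof is correct and follows essentially the same route as the paper. Both arguments combine the intertwining property \eqref{eqn:2} with the fact that $\Pi_\o$ commutes with $\hat H^0_\o$. The paper phrases this as $U_\o^\dagger(\hat H^0_\o-\mu)\Pi_\o U_\o=\Pi_\o(\v{0};0)\,U_\o^\dagger(\hat H^0_\o-\mu)U_\o\,\Pi_\o(\v{0};0)$, and analogously for $\mathds{1}_M-\Pi_\o$, rather than as the vanishing of the off-diagonal blocks. Your justification of the analyticity of $U_\o$ and $U_\o^\dagger$ is more explicit than the paper's, which simply invokes the analyticity of the Kato--Nagy unitary.
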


Its proof is deferred to Appendix \ref{app:1_conical}. Observe that $\hat{h}_{\o}$ captures the contributions from the \emph{quasi-massless} degrees of freedom (namely those related to the bands $\l_{\pm}(\v{k};m)$ close to the Fermi points), still preserving the regularity of $\hat{H}^0$. 

\medskip

Now we want to isolate the contributions to the current-current function, Eq. \eqref{eq_3}, from the quasi-massless degrees of freedom. We first isolate the modes close to the Fermi points, by considering, for every $\o\in\{1,\dots,M\}$ and $0< |\bs{k}'|\le \d$ (with $\bs{k}'\equiv (k_0,\v{k}')$),

\begin{align}
&\hat{s}_{\o;\r\r'}(\bs{k}';m):= \big(-ik_0 + \hat{h}_{\o}(\v{k}';m) \big)^{-1}_{\r\r'},\\
&\label{eqn:gamma}
\gamma_{\o;i;\r\r'}(\v{k}';m):=\bra{\phi_{\o;\r}(\v{k}';m)}\Gamma_i(\v{k}_{F,\o}+\v{k}',\v{0};m)\ket{ \phi_{\o;\r'}(\v{k}';m)},
\end{align}

with $\r,\r'\in\{1,2\}$. Consequently, we let:

\begin{equation}
\label{eqn:12}
\begin{split}
\hat{K}_{ij}^{(\mathrm{sing})}(p_0;m):=&- \sum_{\o=1}^N \int_{\mbb{R}\times\mbb{B}} \frac{d\bs{k}'}{(2\pi)^3} \chi(\d^{-1}|\bs{k}'|) \chi(\d^{-1}|\bs{k}'+\bs{p}_0|)\cdot \\
&\cdot \Tr_{\mbb{C}^2} \left\{ \gamma_{\o;i}(\v{k}';m) \hat{s}_{\o}(\bs{k}';m) \gamma_{\o;j}(\v{k}';m) \hat{s}_{\o}(\bs{k}'+\bs{p}_0;m)\right\},
\end{split}    
\end{equation}

with $\chi\in\mc{C}^{\infty}(\mbb{R}_+)$ a smooth cut-off function, such that $\chi\ge0$, $\chi(t)=1$ for $t\le\frac{1}{2}$ and $\chi(t)=0$ for $t\ge1$. It turns out that $\hat{K}_{ij}^{(\mathrm{sing})}$ captures all the non-vanishing contributions to the l.h.s.\ of Eq. \eqref{eqn:main}, as established by the following lemma.

\begin{lemma}
\label{lemma:K_reg}
Letting $\hat{K}_{ij}^{(\mathrm{reg})}(p_0;m):= \hat{K}_{ij}(p_0;m)- \hat{K}_{ij}^{(\mathrm{sing})}(p_0;m)$, it is true that $\hat{K}_{ij}^{(\mathrm{reg})}(\cdot,m)$ is $\mathscr{C}^1$ w.r.t.\ $p_0\in\mbb{R}$ and that $\de_{p_0}\hat{K}_{ij}^{(\mathrm{reg})}(p_0;\cdot)$ is $\mathscr{C}^1$ w.r.t.\ $m\in [-\d,\d]$.
\end{lemma}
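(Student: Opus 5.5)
We would split $\hat K_{ij}(p_0;m)$ by a partition of unity in $(k_0,\v{k})$ around each Fermi point, and then, inside the support of the cut-off, split the full propagator $\hat S$ into its quasi-massless two-band block and its massive complement using the Kato-Nagy unitary $U_\o$. Concretely, we write $1=\prod_\o(1-\chi(\d^{-1}|\bs{k}-\bs{k}_{F,\o}|))+\sum_\o\chi(\d^{-1}|\bs{k}-\bs{k}_{F,\o}|)$ (with $\bs{k}_{F,\o}=(0,\v{k}_{F,\o})$), and insert it for both $\bs{k}$ and $\bs{k}+\bs{p}_0$. By item 3 in the choice of $\d$, the cutoffs at distinct Fermi points have disjoint supports. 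In the terms where at least one of the two propagators is evaluated away from every Fermi point, that propagator has a uniform bound. Indeed, by Remark \ref{rmk:bands}.\ref{itt:3} and \eqref{eqn:1bis} we have $\|\hat S(\bs k;m)\|\le C(1+|k_0|)^{-1}$ uniformly in $|m|\le\d$ on the region $\dist{\v{k}-\v{k}_{F,\o}}_{\mbb B}\ge\d/2$ or $|k_0|\ge \d/2$. The other propagator is singular at most like $(|k_0|+|\v{k}'|+|m|)^{-1}$, which is integrable in three dimensions together with its $p_0$- and $m$-derivatives up to one extra inverse power.

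Inside the singular region we use Lemma \ref{lemma_KN}. There we write $\hat S(\bs{k}_{F,\o}+\bs{k}';m)=U_\o\big[(-ik_0+\hat h_\o)^{-1}\oplus(-ik_0+\hat r_\o)^{-1}\big]U_\o^\dg$. The block $\hat r_\o$ has spectrum at distance at least $\Delta$ from zero by \eqref{eqn:1bis}, so $(-ik_0+\hat r_\o)^{-1}$ is analytic in $(\v{k}',m)$ and smooth in $k_0$, with the decay $(1+|k_0|)^{-1}$.

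Expanding the trace in \eqref{eq_3} yields four terms:
\begin{enumerate}
\item[(a)] the $\hat h$--$\hat h$ term, which is exactly $\hat K^{(\mathrm{sing})}$ of \eqref{eqn:12}, since $U_\o^\dg\Gamma_iU_\o$ projected on $\phi_{\o;1},\phi_{\o;2}$ gives $\gamma_{\o;i}$ by definition \eqref{eqn:gamma};
\item[(b)] two mixed terms, with one $\hat s_\o$ and one massive resolvent;
\item[(c)] the $\hat r$--$\hat r$ term.
\end{enumerate}
Term (c) is smooth in $(p_0,m)$ by dominated convergence. The cut-off region is bounded, so there are no issues at large $k_0$. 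Outside the cut-off, the $k_0$-integrability at infinity follows from the product of two $(1+|k_0|)^{-1}$ factors.

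The main work is the mixed terms (b) and the regularity of the complement region. For a mixed term, $\de_{p_0}$ acts on whichever resolvent carries $\bs{k}'+\bs{p}_0$. If this is the massive one, the derivative is harmless. If it is $\hat s_\o$, then $\de_{p_0}\hat s_\o=i\hat s_\o^2$, giving an integrand bounded by $C(|k_0|+|\v{k}'|+|m|)^{-2}$, which is still integrable in $\mbb{R}^3$. We also need one further derivative in $m$ of $\de_{p_0}\hat K^{(\mathrm{reg})}$. Using $\de_m\hat s_\o=-\hat s_\o(\de_m\hat h_\o)\hat s_\o$, this produces a bound $(|\bs{k}'|+|m|)^{-3}$, which is only borderline (logarithmically) non-integrable at $m=0$. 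This is the step we expect to be the obstacle.

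To overcome it, we would exploit the fact that in the mixed terms, after Kato-Nagy conjugation, the current vertex entries connecting the two-band space to its complement multiply a single singular propagator $\hat s_\o$. For such a product we can shift the integration variable $\bs{k}'\mapsto\bs{k}'-\bs{p}_0$ onto the smooth factor, so the $p_0$-derivative never lands on $\hat s_\o$. The $p_0$-derivative is then $\int \hat s_\o\cdot(\text{smooth})$, and its $m$-derivative is bounded by $(|\bs{k}'|+|m|)^{-2}$, which is integrable. The shift moves the cut-off function, and that error is supported where $|\bs{k}'|\ge\d/4$, where everything is smooth. The same shifting trick handles the complement region: we place $\bs p_0$ on the propagator that is evaluated away from the Fermi points.

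Continuity of the derivatives in $p_0$ and $m$ then follows from dominated convergence with these integrable majorants, using $\|\hat s_\o(\bs{k}';m)\|\le C(|k_0|+|\v{k}'|+|m|)^{-1}$. This bound follows from \eqref{eq_6a}, since the eigenvalues of $\hat h_\o$ are $\l_\pm-\mu$.
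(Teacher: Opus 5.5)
Your proposal is correct and is essentially the paper's proof. You use the same splitting of $\hat S$ into the cut-off two-band block, the cut-off massive block and the complement (the paper's $\hat S_{A;\o}$, $\hat S_{B;\o}$, $\hat S_C$), identify the $\hat h$--$\hat h$ term with $\hat K^{(\mathrm{sing})}$, and change variables $\bs{k}\mapsto\bs{k}-\bs{p}_0$ in the remaining mixed term so that the $p_0$-derivative always lands on the non-singular propagator; the majorants $\|\hat S_{A;\o}\|\lesssim|\bs{k}'|^{-1}$ and $\|\de_m\hat S_{A;\o}\|\lesssim|\bs{k}'|^{-2}$ then close the argument (the derivative of the shifted cut-off is just part of the bounded $\de_{k_0}$ of the smooth propagator, so your claim that it is supported in $|\bs{k}'|\ge\d/4$, valid only for small $|p_0|$, is not actually needed).
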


The proof of Lemma \ref{lemma:K_reg} is given in Appendix \ref{app:1_rem}. In view of Lemma \ref{lemma:K_reg}, the well posedness of $\s_{ij}(m)$ is equivalent to showing the existence of the limit:

\begin{equation}
\label{eqn:13}
\s_{ij}^{(\mathrm{sing})}(m):= \lim_{p_0\rightarrow 0^-} \frac{1}{p_0} \big(\hat{K}_{ij}^{(\mathrm{sing})}(p_0;m)- \hat{K}_{ij}^{(\mathrm{sing})}(0;m) \big).\end{equation}

This is obvious in the case $m\ne0$, since one can easily check that $\hat{K}_{ij}^{(\mathrm{sing})}$ is smooth w.r.t.\ both $p_0$ and $m$. Instead, in the massless case $m=0$, the existence of the r.h.s.\ of Eq. \eqref{eqn:13} is more delicate and will be discussed in the next sections. 

Observe that, introducing the \emph{symmetric difference} of $\s_{ij}$:

\begin{equation}
\mf{D}\s_{ij}(\e):= \s_{ij}(0)- \frac{1}{2}\s_{ij}(\e)- \frac{1}{2}\s_{ij}(-\e),
\end{equation}

by using again Lemma \ref{lemma:K_reg} we have that

\begin{equation}
\label{eqn:21}
\mf{D}\s_{ij}(\e)= \mf{D}\s_{ij}^{(\mathrm{sing})}(\e) + \mc{O}(\e).
\end{equation}

\subsection{Reduction to linear system}
\label{ssect:linear}

We are now going to show that, for $\e\to 0^+$, the quantity $\mf{D}\s_{ij}^{(\mathrm{sing})}(\e)$ is dominated by contributions from momenta close to the Fermi points. Specifically, for each $\o\in\{1,\dots,N\}$ we are going to consider an ellipsoid of the form:

\begin{equation}
\label{eqn:D}
\mc{D}_{\e,\o}:=\big\{\bs{k}'\in\mbb{R}^3:\; k_0^2+|A_{\o}\v{k}'|^2\le \e\big\},
\end{equation}

(recall that $\v{k}'= \v{k}-\v{k}_{F,\o}$) for any $\e\le \e_0$ such that $\mc{D}_{\e_0,\o}\subseteq B_{\d/2}(\bs{0})$ for every $\o=1,\dots,N$. Besides, we are going to replace the $2\times2$ two-point function $\hat{s}_{\o}(\bs{k}';m)$ by its \emph{relativistic} counterpart, $\hat{s}^{(1)}_{\o}(\bs{k}';m)= \big(-i k_0+ \hat{h}^{(1)}_{\o}(\v{k}';m) \big)^{-1}$, with: 

\begin{equation}
\label{eqn:rel}
\hat{h}^{(1)}_{\o}(\v{k}';m):= \sum_{j=1}^2 k'_{j} \de_{k'_j}\hat{h}_{\o}(\v{0};0)+ m\de_m\hat{h}_{\o}(\v{0};0)
\end{equation}

being the linearized $2\times2$ Hamiltonian around the Fermi point $\v{k}_{F,\o}$ (recall that $\hat{h}_{\o}(\v{0};0)=0$ by construction).

\begin{remark}
\label{rmk:lin}
The linearized $2\times2$ Hamiltonian $\hat{h}^{(1)}_{\o}$ admits the following rewriting in terms of the $M\times M$ Bloch Hamiltonian $\hat{H}^0$:
\begin{equation*}
\hat{h}^{(1)}_{\o;\r\r'}(\v{k}';m)= \sum_{j=1,2} k'_j \langle \phi_{\o;\r}|\de_{k'_j}\hat{H}^0_{\o}(\v{0};0)| \phi_{\o;\r'}\rangle + m \langle\phi_{\o;\r}|\de_m\hat{H}^0_{\o}(\v{0};0)| \phi_{\o;\r'}\rangle,
\end{equation*}
for $\r\in\{1,2\}$, with $\phi_{\o;\r}$ introduced after Eq. \eqref{eqn:2}. The previous formula readily follows from the definitions of $\hat{h}^{(1)}_{\o}$ and $\hat{h}_{\o}$, Eqs. \eqref{eqn:rel} and \eqref{eqn:h1} respectively, together with the observation that $U_{\o}(\v{0};0)=\mathds{1}_M$ and $\big(\hat{H}^0_{\o}(\v{0};0)-\mu\big)|\phi_{\o;\r}\rangle=0$ for $\r\in\{1,2\}$. Moreover, since $\hat{h}_{\o}(\v{k}';m)$ has eigenvalues $\l_{\pm}(\v{k}_{F,\o}+\v{k}';m)-\mu$, it follows that $\hat{h}^{(1)}_{\o}(\v{k}';m)$ has eigenvalues:
\begin{equation*}
\v{c}_{\o}\cdot\v{k}'+ u_{\o}m \pm \sqrt{|A_{\o}\v{k}'|^2+ w_{\o}^2m^2+\v{k}'\!\cdot \v{b}_\o m},
\end{equation*}
with $\v{c}_{\o}, u_{\o}, A_{\o}, w_{\o}, b_{\o}$ as in Assumption \ref{assum:H}.\ref{it:3}.
\end{remark}

\begin{proposition}
\label{prop:linear}
Assuming that for every $m\in[-\d,\d]$ the function:

\begin{equation}
\label{eqn:14}
\begin{split}
\s^{(1)}_{ij;\o}(p_0;m):=& - \frac{1}{p_0} \int_{\mc{D}_{\e,\o}} \frac{d\bs{k}'}{(2\pi)^3} \Tr_{\mbb{C}^2} \Big\{\gamma_{\o;i}(\v{0};0) \hat{s}^{(1)}_{\o}(\bs{k}';m) \gamma_{\o;j}(\v{0};0) \cdot\\
&\cdot\Big(\hat{s}^{(1)}_{\o}(\bs{k}'+ \bs{p}_0;m)- \hat{s}^{(1)}_{\o}(\bs{k}';m) \Big) \Big\},
\end{split}
\end{equation}

admits a well defined limit as $p_0\to0^-$, then we have that $\s_{ij}^{(\mathrm{sing})}(m)$ is well defined for every $m\in[-\d,\d]$ and, for $\e\le \e_0$,

\begin{equation}
\label{eqn:linear}
\mf{D}\s_{ij}^{(\mathrm{sing})}(\e)= \sum_{\o=1}^N \lim_{p_0\to0^-}\s_{ij;\o}^{(1)}(p_0;0) + \mc{O}(\sqrt{\e}).\end{equation}

Moreover, it holds true that $\gamma_{\o;i}(\v{0};0)= -\de_{k'_i}\hat{h}_{\o}(\v{0};0)$.

\end{proposition}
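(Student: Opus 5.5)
The plan is to compare, term by term, the singular correlator $\hat{K}^{(\mathrm{sing})}$ with its linearized counterpart on the ellipsoid $\mc{D}_{\e,\o}$, and to show that everything outside the ellipsoid, as well as every correction from linearization, contributes $\mc{O}(\sqrt{\e})$ to the symmetric difference.

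\textbf{The vertex identity.} I start with $\gamma_{\o;i}(\v{0};0)=-\de_{k'_i}\hat{h}_{\o}(\v{0};0)$. By \eqref{def_vertex}, $\Gamma_i=-\de_{k_i}\hat H^0 - i(\v\d_\r-\v\d_{\r'})_i\hat H^0_{\r\r'}$. The second term equals the commutator $-i[D_i,\hat H^0]$, with $D_i$ the diagonal matrix of the sublattice offsets. Sandwiched between eigenvectors of $\hat H^0_\o(\v 0;0)$ with common eigenvalue $\mu$, this commutator gives $\mu\langle\phi,[D_i,\mathds 1]\phi\rangle=0$. For the first term, differentiate $U^\dagger_\o(\hat H^0_\o-\mu)U_\o$ at $(\v{0};0)$. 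Since $U_\o(\v 0;0)=\mathds 1$ and $(\hat H^0_\o(\v 0;0)-\mu)\phi_{\o;\r}=0$, the terms containing $\de U$ drop out, and we obtain $\de_{k'_i}\hat h_\o(\v0;0)=\langle\phi_{\o;\r},\de_{k_i}\hat H^0\phi_{\o;\r'}\rangle$, as in Remark~\ref{rmk:lin}.

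\textbf{Well-posedness of $\s^{(\mathrm{sing})}$.} For each $\o$, write $\gamma_{\o;i}(\v k';m)=\gamma_{\o;i}(\v0;0)+\mc O(|\v k'|+|m|)$ and $\hat s_\o=\hat s^{(1)}_\o+\hat s_\o\,(\hat h^{(1)}_\o-\hat h_\o)\,\hat s^{(1)}_\o$. From \eqref{eq_6a} and \eqref{eqn:bands2b} we have $\|\hat s_\o\|,\|\hat s^{(1)}_\o\|\le C(|k_0|+|\v k'|+|m|)^{-1}$, while $\hat h_\o-\hat h^{(1)}_\o=\mc O(|\v k'|^2+m^2)$ by analyticity (Lemma~\ref{lemma_KN}).

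The $p_0$-difference quotient of $\hat K^{(\mathrm{sing})}$ has an integrand containing $p_0^{-1}(\hat s(\bs k'+\bs p_0)-\hat s(\bs k'))=\hat s(\bs k'+\bs p_0)\,i\,\hat s(\bs k')$ (up to cutoff terms, which are smooth). Such an integrand is bounded by $|\bs k'|^{-3}$ in the worst case. Each correction term gains a factor $|\bs k'|$, giving a bound $|\bs k'|^{-2}$, which is integrable in 3D uniformly in $p_0$ and $m$. Dominated convergence then yields the limit $p_0\to0^-$, given the assumed existence of the limit of the leading term \eqref{eqn:14}. The contribution outside $\mc{D}_{\e,\o}$, where the cutoffs differ, is similar: there $|\bs k'|\gtrsim\sqrt\e$, so all integrands are bounded.

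\textbf{The $\mc O(\sqrt\e)$ estimate.} This is the main obstacle. Splitting each integral into $\mc D_{\e,\o}$ and its complement:
\begin{enumerate}
\item On $\mc D_{\e,\o}$, the correction terms integrate $|\bs k'|^{-2}$ over a ball of radius $\sim\sqrt\e$, giving $\mc O(\sqrt\e)$ for each of $m=0,\pm\e$.
\item On the complement, the integrand is smooth in $m$ with $|\de_m(\cdot)|\lesssim|\bs k'|^{-4}$ and $|\de^2_m(\cdot)|\lesssim|\bs k'|^{-5}$. The symmetric combination $f(0)-\tfrac12 f(\e)-\tfrac12 f(-\e)$ is second order in $\e$, giving $\e^2\int_{|\bs k'|\ge\sqrt\e}|\bs k'|^{-5}=\mc O(\e)$. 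One must check that the Taylor expansion is legitimate, since $|m|=\e\ll|\bs k'|$ there.
\item The linear part on $\mc D_{\e,\o}$ at $m=\pm\e$: rescale $\bs k'=\e\bs q$, using the homogeneity of $\hat h^{(1)}_\o$ in $(\v k',m)$. The contributions at $m=\pm\e$ become integrals over a region of size $\e^{-1/2}$ at unit mass. Their large-$|\bs q|$ behaviour is odd in $m$ at order $|\bs q|^{-4}$; I expect this to cancel in the $\pm$ average by $\bs q\mapsto-\bs q$ symmetry, leaving a remainder of order $\sqrt\e$. This is the cancellation referred to in Remark~\ref{rmk:1}.
\item What remains is exactly $\sum_\o\lim_{p_0\to0^-}\s^{(1)}_{ij;\o}(p_0;0)$.
\end{enumerate}

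I would first try to prove the $\pm\e$ cancellation directly by writing $\hat h^{(1)}_\o(\v k';\pm\e)$ and changing variables $\bs k'\to-\bs k'$ together with the sign of $m$.
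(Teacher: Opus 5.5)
Your architecture matches the paper's. The vertex identity is proved exactly as there. Your split of $\mathfrak{D}\sigma^{(\mathrm{sing})}_{ij}(p_0;\epsilon)-\sum_\omega\mathfrak{D}\sigma^{(1)}_{ij;\omega}(p_0;\epsilon)$ into the complement of $\mathcal{D}_{\epsilon,\omega}$ plus the linearization errors on $\mathcal{D}_{\epsilon,\omega}$ at $m=0,\pm\epsilon$ is the paper's $I_1,I_2,I_3$ decomposition (Lemma~\ref{lemma:linear}), with the same $|\boldsymbol{k}'|^{-2}$ bounds. The gain there is really a factor $|\vec{k}'|\le\min\{|\boldsymbol{k}'|,|\boldsymbol{k}'+\boldsymbol{p}_0|\}$, which is what absorbs $|\boldsymbol{k}'+\boldsymbol{p}_0|^{-1}$. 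On the complement your second-order Taylor expansion is valid: $\|\hat{s}_\omega(\boldsymbol{k}';m)\|\le C|\boldsymbol{k}'|^{-1}$ uniformly in $m$, so no condition $\epsilon\ll|\boldsymbol{k}'|$ is needed. It is, however, unnecessary. The paper writes the difference as $\int_0^\epsilon dm\,\partial_m(\cdots)$ and uses only $|\partial_m(\cdots)|\lesssim|\boldsymbol{k}'|^{-4}$, which already gives $\epsilon\cdot\epsilon^{-1/2}=\sqrt{\epsilon}$.

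The gap is item 3, which carries the actual content of the proposition (cf.\ Remark~\ref{rmk:1}.\ref{its:2}). Your rescaling and large-$|\boldsymbol{q}|$ heuristic cannot deliver it. Each of $\lim_{p_0\to0^-}\sigma^{(1)}_{ij;\omega}(p_0;\pm\epsilon)$ is in general of order one, not $\mathcal{O}(\sqrt{\epsilon})$; for instance, its transverse part is essentially the half-quantized Hall response of a massive Dirac fermion, with opposite signs for $m=\pm\epsilon$. In rescaled variables, their average tends to a constant built from the whole integrand, and asymptotic expansions in $|\boldsymbol{q}|$ cannot tell you that this constant vanishes.

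The missing step is the exact identity $\hat{s}^{(1)}_\omega(-\boldsymbol{k}';-m)=-\hat{s}^{(1)}_\omega(\boldsymbol{k}';m)$. It is immediate because $-ik_0+\hat{h}^{(1)}_\omega(\vec{k}';m)$ is linear in $(\boldsymbol{k}',m)$. Combine it with $\mathcal{D}_{\epsilon,\omega}=-\mathcal{D}_{\epsilon,\omega}$ and the substitution $\boldsymbol{k}'\to-\boldsymbol{k}'$. This gives $\hat{K}^{(1)}_{ij;\omega}(p_0;-\epsilon)=\hat{K}^{(1)}_{ij;\omega}(-p_0;\epsilon)$. For $m\neq0$ the integrand is smooth, so $\lim_{p_0\to0^-}\sigma^{(1)}_{ij;\omega}(p_0;\pm\epsilon)=\partial_{p_0}\hat{K}^{(1)}_{ij;\omega}(0;\pm\epsilon)$. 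Hence the two limits cancel exactly for every $\epsilon$, with no $\sqrt{\epsilon}$ remainder.

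Your closing sentence points at precisely this argument. Replace the rescaling discussion with it; item 4 then follows and your proof coincides with the paper's.
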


\begin{remark}
\label{rmk:1}

Some comments follow.

\begin{enumerate}

\item\label{its:1} We stress that the assumption in the above proposition is non-trivial only in the massless case $m=0$. This will be proved by combing Proposition \ref{prop:sigma_Dirac}
 and Proposition \ref{prop:main}.
Instead, for $m\neq 0$ the existence of the limit as $p_0\to 0^-$ is a corollary of Dominated Convergence Theorem.  

\item\label{its:2} In view of equality \eqref{eqn:linear}, we notice that only the massless term $\s_{ij;\o}^{(1)}(p_0;0)$ contributes to  the symmetric difference of $\s_{ij}^{(\mathrm{sing})}$. As clear from the proof, this is ultimately a consequence of the oddness of the relativistic two-point function $\hat{s}_{\o}^{(1)}$.
\end{enumerate}

\end{remark}

\begin{proof}

It is possible to prove that (see Appendix \ref{app:1_rem}, Lemma \ref{lemma:linear}), letting: 

\begin{equation}
\label{def:sigma_sing}
\s^{(\mathrm{sing})}_{ij}(p_0;m):= \frac{1}{p_0}\Big(\hat{K}_{ij}^{(\mathrm{sing})}(p_0;m)- \hat{K}_{ij}^{(\mathrm{sing})}(0;m) \Big),
\end{equation}

one has that the difference:

\begin{equation}
\begin{split}
\mf{D}\s_{ij}^{(\mathrm{sing})}(p_0;\e)- \sum_{\o=1}^N \mf{D} \s^{(1)}_{ij;\o}(p_0;\e)
\end{split}
\end{equation}

admits a finite limit as $p_0\rightarrow0^-$, which is bounded in absolute value by const.$\sqrt{\e}$. Hence, using the assumption and the fact that $\s_{ij}^{(\mathrm{sing})}(m)$ is apriori well posed for $m\ne0$ (recall the discussion after Eq. \eqref{eqn:13}), we have that

\begin{equation*}
\begin{split}
\mf{D}\s_{ij}^{(\mathrm{sing})}(\e)
&= \lim_{p_0\to 0^-}\mf{D}\s_{ij}^{(\mathrm{sing})}(p_0;\e) = \sum_{\o=1}^N \lim_{p_0\to0^-}\mf{D} \s^{(1)}_{ij;\o}(p_0;\e) + \mc{O}(\sqrt{\e}),
\end{split}
\end{equation*}

implying that also $\s_{ij}^{(\mathrm{sing})}(0)$ is a well defined quantity. A key observation is that, for every $\o\in\{1,\dots,N\}$ and $i,j\in\{1,2\}$,

\begin{equation}
\label{eqn:15}
\lim_{p_0\to0^-} \Big(\s^{(1)}_{ij;\o}(p_0;\e)+ \s^{(1)}_{ij;\o}(p_0;-\e) \Big)=0,
\end{equation}

implying the validity of Eq. \eqref{eqn:linear}. 
Eq. \eqref{eqn:15} is a straightforward consequence of the oddness of the relativistic two-point function:

\begin{equation}
\label{eqn:odd}
\hat{s}^{(1)}_{\o}(-\bs{k}';-m)=- \hat{s}^{(1)}_{\o}(\bs{k}';m).
\end{equation}

Indeed, by the regularity in $\bs{k}'$ of $\hat{s}^{(1)}_{\o}(\bs{k}';\pm\e)$, we have that $\lim_{p_0\to0^-}\s^{(1)}_{ij;\o}(p_0;\pm\e)= \de_{p_0}\hat{K}^{(1)}_{ij;\o}(0;\pm\e)$,  where
\begin{equation*}
\hat{K}^{(1)}_{ij;\o}(p_0;m):= -\int_{\mc{D}_{\e,\o}} \frac{d\bs{k}'}{(2\pi)^3} \Tr_{\mbb{C}^2} \Big\{\gamma_{\o;i}(\v{0};0) \hat{s}^{(1)}_{\o}(\bs{k}';m) \gamma_{\o;j}(\v{0};0) \hat{s}^{(1)}_{\o}(\bs{k}'+ \bs{p}_0;m) \Big\}.
\end{equation*}
Using Eq. \eqref{eqn:odd} and the fact that the domain $\mc{D}_{\e,\o}$ is symmetric under $\bs{k}'\rightarrow-\bs{k}'$, we have that
\begin{equation*}
\begin{split}
\hat{K}^{(1)}_{ij;\o}(-p_0;\e)&= -\int_{\mc{D}_{\e,\o}} \frac{d\bs{k}'}{(2\pi)^3} \Tr_{\mbb{C}^2} \Big\{\gamma_{\o;i}(\v{0};0) \hat{s}^{(1)}_{\o}(\bs{k}';\e) \gamma_{\o;j}(\v{0};0) \hat{s}^{(1)}_{\o}(\bs{k}'- \bs{p}_0;\e) \Big\}\\
&= -\int_{\mc{D}_{\e,\o}} \frac{d\bs{k}'}{(2\pi)^3} \Tr_{\mbb{C}^2} \Big\{\gamma_{\o;i}(\v{0};0) \hat{s}^{(1)}_{\o}(-\bs{k}';\e) \gamma_{\o;j}(\v{0};0) \hat{s}^{(1)}_{\o}(-\bs{k}'- \bs{p}_0;\e) \Big\}\\
&= -\int_{\mc{D}_{\e,\o}} \frac{d\bs{k}'}{(2\pi)^3} \Tr_{\mbb{C}^2} \Big\{\gamma_{\o;i}(\v{0};0) \hat{s}^{(1)}_{\o}(\bs{k}';-\e) \gamma_{\o;j}(\v{0};0) \hat{s}^{(1)}_{\o}(\bs{k}'+ \bs{p}_0;-\e) \Big\}\\
&= \hat{K}^{(1)}_{ij;\o}(p_0;-\e).
\end{split}
\end{equation*}

Therefore:
\begin{equation*}
\begin{split}
\lim_{p_0\to0^-}\s^{(1)}_{ij;\o}(p_0;-\e)&= \de_{p_0}\hat{K}^{(1)}_{ij;\o}(p_0;-\e)\big|_{p_0=0}= \de_{p_0} \hat{K}^{(1)}_{ij;\o}(-p_0;\e)\big|_{p_0=0}\\
&= -\de_{p'_0}\hat{K}^{(1)}_{ij;\o}(p'_0;\e)\big|_{p'_0=0}= -\lim_{p_0\to 0^-}\s^{(1)}_{ij;\o}(p_0;\e),
\end{split}
\end{equation*}

proving Eq. \eqref{eqn:15}.\\

We are left with showing that $\gamma_{\o;i}(\v{0};0)= -\de_{k'_i}\hat{h}_{\o}(\v{0};0)$. By Eqs \eqref{eqn:gamma} and  \eqref{def_vertex},

\[\gamma_{\o;i;\r\r'}(\v{0};0)= - \bra{\phi_{\o;\r}} \Big( \de_{k_i} \hat{H}^0(\v{k}_{F,\o};0)+ i \big[\Delta_i, \hat{H}^0(\v{k}_{F,\o};0)\big] \Big) \ket{\phi_{\o;\r'}}, \]

where $(\Delta_i)_{\r_1\r_2}:= \d_{\r_1,\r_2} (\v{\d}_{\r_1})_i$. Notice that, since $\hat{H}^0(\v{k}_{F,\o};0) \ket{\phi_{\o;\r'}}= \l_{\pm}(\v{k}_{F,\o};0) \ket{\phi_{\o;\r'}}$, we have that $\bra{\phi_{\o;\r}} \big[\Delta_i, \hat{H}^0(\v{k}_{F,\o};0) \big] \ket{\phi_{\o;\r'}}=0$, hence:

\begin{equation}
\label{eqn:5}
\begin{split}
\gamma_{\o;i;\r\r'}(\v{0};0)= - \bra{\phi_{\o;\r}} \de_{k_i} \hat{H}^0(\v{k}_{F,\o};0) \ket{\phi_{\o;\r'}}.
\end{split}
\end{equation}

Furthermore, recalling the definition of $\hat{h}_{\o}$ in Lemma \ref{lemma_KN}, we have that

\[\begin{split}
\de_{k'_i}\hat{h}_{\o}(\v{0};0) =& \de_{k'_i} \bra{\phi_{\o;\r}} U_{\o}^{\dg}(\v{k}';0) \Big(\hat{H}^0_{\o}(\v{k}';0)- \mu \Big) U_{\o}(\v{k}';0) \ket{\phi_{\o;\r'}}\Big|_{\v{k}'=\v{0}}\\
=& \bra{\phi_{\o;\r}} \de_{k'_i}\hat{H}^0_{\o}(\v{0};0) \ket{\phi_{\o;\r'}}+ \bra{\phi_{\o;\r}} \de_{k'_i} U_{\o}^{\dg}(\v{0};0) \big(\hat{H}^0_{\o}(\v{0};0)-\mu\big) \ket{\phi_{\o;\r'}}\\
& +\bra{\phi_{\o;\r}} \big(\hat{H}^0_{\o}(\v{0};0)-\mu\big) \de_{k'_i}U_{\o}(\v{0};0) \ket{\phi_{\o;\r'}},
\end{split}\]

where we also used that $U_{\o}(\v{0};0)= \mathds{1}_{\mbb{C}^M}$. Recalling that $\big(\hat{H}^0_{\o}(\v{0};0)-\mu\big) \ket{\phi_{\o;\r}}= 0$ for every $\r\in\{1,2\}$, we see that $-\de_{k'_i}\hat{h}_{\o}(\v{0};0)$ equals the r.h.s. of Eq. \eqref{eqn:5}, thus proving the claim.

\end{proof}

\subsection{Reduction to the Dirac Hamiltonian}
\label{ssect:Dirac}
By Proposition \ref{prop:linear} we are left to analyze $\s_{ij}^{(1)}(p_0;m)$ with $m=0$, Eq. \eqref{eqn:14}.
We now show that, up to a spatial rotation and a unitary conjugation, we can reduce to the case where $\hat{h}^{(1)}_{\o}$ is the Dirac Hamiltonian. 

First of all, it is worth looking more closely at the conical structure of the energy bands, Eq. \eqref{eqn:bands1} for $m=0$. Observe that $|A_{\o}\v{k}'|^2= \v{k}'\cdot (A^T_{\o}A_{\o}\v{k}')$, with $A_{\o}^T$ the transpose of $A_{\o}$. By Eq. \eqref{eqn:bands2b} we must have that $A^T_{\o}A_{\o}\ge C_{\star}^2>0$, hence, being $A^T_{\o}A_{\o}$ symmetric, there exists $R_{\o}\in \mbb{SO}(2)$,\footnote{Hereafter $\mbb{SO}(2)$ is the group of special, orthogonal, real matrices, while $\mbb{SU}(2)$ denotes the group of special, unitary, complex matrices.}

\begin{equation}
\label{eqn:17}
R_{\o}= \left(\begin{array}{cc}
\cos\a_{\o}     & -\sin\a_{\o} \\
\sin\a_{\o}     & \cos\a_{\o}
\end{array}\right), \qquad \a_{\o}\in[0,2\pi),
\end{equation}

such that $R^T_{\o}A^T_{\o}A_{\o}R_{\o}= \mathrm{diag}\big(v_{\o;1}^2, v_{\o;2}^2 \big)$, for suitable $0<v_{\o;1}\leq v_{\o;2}$.

\begin{lemma}
\label{lemma:Dirac}
For every $\o\in\{1,...,N\}$ there exists $W_{\o}\in \mbb{SU}(2)$ such that the following is true. Letting $\s_1,\s_2,\s_3$  be the three Pauli matrices,

\begin{equation}
\label{eqn:22}
\hat{h}^{(D)}_{\o}(\v{k}'):= W_{\o}^{\dg}\, \hat{h}^{(1)}_{\o}(R_{\o}\v{k}';0)\, W _{\o}= \v{c'}_{\o}\cdot\v{k}' \mathds{1}_2+ v_{\o;2}k'_2\s_1+ v_{\o;1}k'_1\s_2, 
\end{equation}

where $\v{c'}_{\o}:= R^T_{\o}\v{c}_{\o}$, with the rotation matrix $R_{\o}$, the velocities $v_{\o;1}$ and $v_{\o;2}$ defined as in Eq. \eqref{eqn:17} and following lines.
\end{lemma}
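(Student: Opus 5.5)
The plan is to expand the linearized Hamiltonian at zero mass in the Pauli basis and match coefficients. Since $\hat h^{(1)}_\o(\v{k}';0)$ is a Hermitian $2\times 2$ matrix depending linearly on $\v{k}'$, we write
\[
\hat h^{(1)}_\o(\v{k}';0)= (\v{d}_0\cdot\v{k}')\mathds{1}_2+\sum_{s=1}^3 (\v{d}_s\cdot\v{k}')\s_s ,\qquad \v{d}_0,\v{d}_s\in\mbb{R}^2 .
\]
Its eigenvalues are $\v{d}_0\cdot\v{k}'\pm\big|D\v{k}'\big|$, where $D$ is the $3\times 2$ real matrix with rows $\v{d}_1^T,\v{d}_2^T,\v{d}_3^T$. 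Remark \ref{rmk:lin} at $m=0$ says these eigenvalues equal $\v{c}_\o\cdot\v{k}'\pm|A_\o\v{k}'|$ for all $\v{k}'$. The trace gives $\v{d}_0=\v{c}_\o$, and half the difference gives $|D\v{k}'|^2=|A_\o\v{k}'|^2$ for all $\v{k}'$. Hence $D^TD=A_\o^TA_\o$, by polarization of the quadratic form. After the substitution $\v{k}'\mapsto R_\o\v{k}'$, the scalar term becomes $(R_\o^T\v{c}_\o)\cdot\v{k}'=\v{c'}_\o\cdot\v{k}'$. The new coefficient matrix is $DR_\o$, with $(DR_\o)^T(DR_\o)=\mathrm{diag}(v_{\o;1}^2,v_{\o;2}^2)$.

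It follows that the two columns $\v{e}:=DR_\o\v{e}_1$ and $\v{f}:=DR_\o\v{e}_2$ of $DR_\o$, as vectors in $\mbb{R}^3$, are orthogonal, with $|\v{e}|=v_{\o;1}$ and $|\v{f}|=v_{\o;2}$. Thus
\[
\hat h^{(1)}_\o(R_\o\v{k}';0)=\v{c'}_\o\cdot\v{k}'\mathds{1}_2+k'_1\,\v{e}\cdot\v{\s}+k'_2\,\v{f}\cdot\v{\s}.
\]
Next I would use the standard double cover $\mbb{SU}(2)\to\mbb{SO}(3)$: for every $Q\in\mbb{SO}(3)$ there is $W\in\mbb{SU}(2)$ with $W^\dg(\v{x}\cdot\v{\s})W=(Q\v{x})\cdot\v{\s}$. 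Choose $Q$ to map the unit vector $\v{e}/v_{\o;1}$ to $\hat{\v{e}}_2$ and $\v{f}/v_{\o;2}$ to $\hat{\v{e}}_1$. Then the third orthonormal vector $\v{e}\times\v{f}/(v_{\o;1}v_{\o;2})$ must go to $\pm\hat{\v{e}}_3$, with the sign fixed by requiring $\det Q=+1$. This is always possible, since two orthonormal vectors can be sent to any prescribed orthonormal pair by a rotation. Conjugating with the corresponding $W_\o$ yields exactly $v_{\o;2}k'_2\s_1+v_{\o;1}k'_1\s_2$, plus the unchanged scalar part, which is Eq.~\eqref{eqn:22}.

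The only delicate point is the orientation issue. In three dimensions it is harmless, because the image of $\v{e}\times\v{f}$ is free, so no chirality obstruction arises; the chirality is recorded instead in the sign of $\det$ on the full $3\times3$ level, which does not enter here. I also need $D$ to have rank 2 so that the normalization makes sense; this follows from $D^TD=A_\o^TA_\o\ge C_\star^2>0$. The remaining checks, namely the Pauli expansion and the explicit form of the $\mbb{SU}(2)$ lift (for instance, $W=e^{-i\theta\,\v{n}\cdot\v{\s}/2}$ for a rotation of angle $\theta$ about the axis $\v{n}$), are routine.
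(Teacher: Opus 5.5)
Your proposal is correct and follows essentially the same route as the paper: expand in the Pauli basis, match against the cone data to get two orthogonal coefficient columns of lengths $v_{\o;1},v_{\o;2}$, complete them to a matrix in $\mbb{SO}(3)$, and lift it to $W_\o\in\mbb{SU}(2)$. The only difference is cosmetic. You cite Remark \ref{rmk:lin} for the exact eigenvalues of $\hat{h}^{(1)}_\o$, whereas the paper Pauli-decomposes the full analytic $\hat{h}_\o(R_\o\v{k}';0)$ and linearizes in-line. That in-line step is effectively the proof of that remark at $m=0$: both sides are degree-one homogeneous and differ by $\mc{O}(|\v{k}'|^2)$, so they must coincide.
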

Its proof is deferred to Appendix \ref{app:1_Dirac}.

\begin{proposition}
\label{prop:sigma_Dirac}
For every $\o\in\{1,...,N\}$, $i,j\in\{1,2\}$ and $p_0\ne0$, it holds true that
$$
\s^{(1)}_{\o;ij}(p_0;0)= \sum_{k,l=1,2} (R_{\o})_{ik} \,(R_{\o})_{jl} \,\s^{(D)}_{\o;kl}(p_0),
$$
where $R_{\o}$ is as in Eq. \eqref{eqn:17},

\begin{equation}
\label{eqn:Dirac}
\s^{(D)}_{\o;ij}(p_0):= -\frac{1}{p_0} \int_{\mc{A}_{\e,\o}} \frac{d\bs{k}'}{(2\pi)^3} \Tr_{\mbb{C}^2} \Big\{\gamma^{(D)}_{\o;i} \hat{s}^{(D)}_{\o}(\bs{k}') \gamma^{(D)}_{\o;j} \big( \hat{s}^{(D)}_{\o}(\bs{k}'+ \bs{p}_0)- \hat{s}^{(D)}_{\o}(\bs{k}') \big) \Big\},
\end{equation}

$\hat{s}^{(D)}_{\o}(\bs{k}'):= \big( -ik_0+ \hat{h}^{(D)}_{\o}(\v{k}')\big)^{-1}$, $\hat{h}^{(D)}_{\o}(\v{k}')$ as in Eq. \eqref{eqn:22}, $\gamma^{(D)}_{\o;i}:= -\big(\de_{k'_i} \hat{h}^{(D)}_{\o}\big)(\v{0})$ and $\mc{A}_{\e,\o}:= \big\{\bs{k}'\in\mbb{R}^3: \; k_0^2+ v_{\o;1}^2k_1'^2+ v_{\o;2}^2k_2'^2\le \e\big\}$.
\end{proposition}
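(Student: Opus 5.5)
The proof is a change of variables in the integral \eqref{eqn:14} at $m=0$, combined with the unitary conjugation of Lemma \ref{lemma:Dirac}. First I rewrite the vertices. By Proposition \ref{prop:linear}, $\gamma_{\o;i}(\v{0};0)= -\de_{k'_i}\hat{h}_{\o}(\v{0};0)$. Since $\hat h^{(1)}_\o(\cdot;0)$ is linear in $\v k'$, this equals $-\de_{k'_i}\hat h^{(1)}_\o(\v 0;0)$, so
\[
\hat h^{(1)}_\o(\v k';0)=-\sum_i k'_i\,\gamma_{\o;i}(\v 0;0).
\]
From \eqref{eqn:22}, $\hat h^{(1)}_\o(\v q;0)=W_\o\,\hat h^{(D)}_\o(R_\o^T\v q)\,W_\o^\dg$. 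Differentiating in $q_i$ with the chain rule gives
\[
\gamma_{\o;i}(\v 0;0)=\sum_{k}(R_\o^T)_{ki}\,W_\o\gamma^{(D)}_{\o;k}W_\o^\dg=\sum_k (R_\o)_{ik}\,W_\o\gamma^{(D)}_{\o;k}W_\o^\dg .
\]

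Next I change variables in \eqref{eqn:14} (with $m=0$) by setting $\bs k'=(k_0,R_\o\v q)$. The Jacobian is $1$ because $R_\o\in\mbb{SO}(2)$. The shift $\bs p_0=(p_0,\v 0)$ is unaffected, since it has no spatial component. The propagator becomes
\[
\hat s^{(1)}_\o((k_0,R_\o\v q);0)=\big(-ik_0+W_\o\hat h^{(D)}_\o(\v q)W_\o^\dg\big)^{-1}=W_\o\,\hat s^{(D)}_\o(k_0,\v q)\,W_\o^\dg .
\]
The domain maps as $\mc D_{\e,\o}\to\{k_0^2+\v q\cdot R_\o^TA_\o^TA_\o R_\o\v q\le\e\}$. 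By the definition of $R_\o$ in \eqref{eqn:17} this is exactly $\mc A_{\e,\o}$.

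I then substitute these expressions into the trace. The conjugations by $W_\o$ cancel pairwise by unitarity and cyclicity of $\Tr_{\mbb C^2}$. Pulling the coefficients $(R_\o)_{ik}(R_\o)_{jl}$ out of the trace by bilinearity gives exactly $\sum_{k,l}(R_\o)_{ik}(R_\o)_{jl}\,\s^{(D)}_{\o;kl}(p_0)$.

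For $p_0\neq0$ every integral involved is finite: the singularity of $\hat s$ at $\bs k'=0$ is $O(|\bs k'|^{-1})$, which is integrable in three dimensions, and the same holds for the shifted propagator. So the change of variables is legitimate.

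The only delicate point is keeping track of transposes, i.e.\ checking that the derivative really produces $(R_\o)_{ik}$ and not $(R_\o)_{ki}$. Linearity of $\hat h^{(1)}_\o$ makes this a direct verification. No limiting argument is needed at this stage, since the statement concerns $p_0\neq0$.
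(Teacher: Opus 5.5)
Your proposal is correct and is essentially the paper's argument. The paper packages your substitution $\vec k'=R_\omega\vec q$, together with the chain-rule identity $\gamma_{\omega;i}(\vec 0;0)=\sum_k (R_\omega)_{ik}W_\omega\gamma^{(D)}_{\omega;k}W_\omega^\dagger$, into Lemma~\ref{lem:sigmatensor} applied with $O=R_\omega^T$; it then conjugates by $W_\omega$ inside the trace and uses $[\mathcal{D}_{\epsilon,\omega}]_{R_\omega^T}=\mathcal{A}_{\epsilon,\omega}$, exactly as you do. One small correction to your integrability remark: the term $\hat{s}^{(1)}_\omega(\boldsymbol{k}';0)\,\gamma_{\omega;j}(\vec 0;0)\,\hat{s}^{(1)}_\omega(\boldsymbol{k}';0)$ behaves like $|\boldsymbol{k}'|^{-2}$, not $|\boldsymbol{k}'|^{-1}$ (the bound $\|\hat{s}^{(1)}_\omega(\boldsymbol{k}';0)\|\le C|\boldsymbol{k}'|^{-1}$ comes from \eqref{eqn:bands2b}), but this is still integrable in three dimensions, so your conclusion is unaffected.
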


Proposition \ref{prop:sigma_Dirac} is actually a corollary of a more general property, namely that the \emph{relativistic conductivity matrix} $\s^{(1)}_{\o}$ transforms as a tensor under spatial rotations, as established by the following lemma.

\begin{lemma}
\label{lem:sigmatensor}
For every $\o\in\{1,...,N\}$, orthogonal $2\times2$ matrix $O$, $m\in[-\delta,\delta]$ and $p_0\ne0$, we have that
\[
\s^{(1)}_{\o;ij}(p_0;m)= \sum_{k,l=1,2} O_{ki} \,O_{lj} \,\big[{\s}^{(1)}_{\o;kl}\big]_O(p_0;m)
\qquad \forall i,j\in \{1,2\},
\]
where: 
\begin{equation}
\label{eqn:16}
\begin{split}
\big[\s^{(1)}_{\o;ij}\big]_O(p_0;m)&:= -\frac{1}{p_0} \int_{{[\mc{D}_{\e,\o}]}_O}\frac{d\bs{k}'}{(2\pi)^3} \Tr_{\mbb{C}^2} \Big\{\big[\gamma_{\o;i}(\v{0};0)\big]_O \big[\hat{s}^{(1)}_{\o}(\bs{k}';m)\big]_O \big[\gamma_{\o;j}(\v{0};0)\big]_O\cdot\\
&\cdot \Big( \big[\hat{s}^{(1)}_{\o}(\bs{k}'+ \bs{p}_0;m)\big]_O- \big[\hat{s}^{(1)}_{\o}(\bs{k}';m)\big]_O \Big) \Big\},
\end{split}\end{equation}

$\big[\hat{s}^{(1)}_{\o}(\bs{k}';m)\big]_O:= \big( -ik_0+ [\hat{h}^{(1)}_{\o}(\v{k}';m)]_O\big)^{-1}$, $[\hat{h}^{(1)}_{\o}(\v{k}';m)]_O:= \hat{h}^{(1)}_{\o}(O^T\v{k}';m)$, $[\gamma_{\o;i}(\v{0};0)]_O := -\big(\de_{k'_i} [\hat{h}^{(1)}_{\o}]_O\big)(\v{0};0)$ and $[\mc{D}_{\e,\o}]_O:= \big\{(k_0,O\v{k}'):\, \bs{k}' \in \mc{D}_{\e,\o}\big\}$, with $\mc{D}_{\e,\o}$ as in Eq. \eqref{eqn:D}.
\end{lemma}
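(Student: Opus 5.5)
The plan is a change of variables $\v{k}'\mapsto O\v{k}'$ in the integral defining $\s^{(1)}_{\o;ij}(p_0;m)$ in Eq. \eqref{eqn:14}, followed by a chain-rule identification of the vertices. First I record how the rotated objects relate to the original ones. By definition $[\hat{h}^{(1)}_{\o}(\v{q};m)]_O=\hat{h}^{(1)}_{\o}(O^T\v{q};m)$, hence
$$[\hat{s}^{(1)}_{\o}((k_0,\v{q});m)]_O=\hat{s}^{(1)}_{\o}((k_0,O^T\v{q});m).$$
Since $p_0$ enters only the frequency component, the same holds for the shifted propagator, because $\bs{p}_0=(p_0,\v{0})$ is invariant under spatial rotations.

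For the vertices I use Proposition \ref{prop:linear}, which gives $\gamma_{\o;i}(\v{0};0)=-\de_{k'_i}\hat{h}_{\o}(\v{0};0)=-\de_{k'_i}\hat{h}^{(1)}_{\o}(\v{0};0)$; this holds because $\hat{h}^{(1)}_{\o}$ is the linearization of $\hat{h}_{\o}$, so their first derivatives at the origin coincide. By the chain rule,
$$[\gamma_{\o;k}(\v{0};0)]_O=-\sum_{l}(O^T)_{lk}\,\de_{k'_l}\hat{h}^{(1)}_{\o}(\v{0};0)=\sum_l O_{kl}\,\gamma_{\o;l}(\v{0};0).$$
Inverting with orthogonality ($O^TO=\mathds{1}$) yields
$$\gamma_{\o;i}(\v{0};0)=\sum_k O_{ki}\,[\gamma_{\o;k}(\v{0};0)]_O.$$
This is the source of the factors $O_{ki}O_{lj}$ in the statement.

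Next I substitute this expansion for both vertices in Eq. \eqref{eqn:14}. Using linearity of the trace and the integral, $\s^{(1)}_{\o;ij}$ becomes $\sum_{k,l}O_{ki}O_{lj}$ times an integral over $\mc{D}_{\e,\o}$ containing $[\gamma_{\o;k}]_O$, $[\gamma_{\o;l}]_O$ and the original propagators evaluated at $\bs{k}'$ and $\bs{k}'+\bs{p}_0$. I then change variables $\bs{k}''=(k_0,O\v{k}')$. The Jacobian is $|\det O|=1$ and the image domain is exactly $[\mc{D}_{\e,\o}]_O$. Under this change the propagators become $\hat{s}^{(1)}_{\o}((k_0,O^T\v{k}'');m)=[\hat{s}^{(1)}_{\o}(\bs{k}'';m)]_O$, and similarly for the shifted one. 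The resulting integral is precisely $[\s^{(1)}_{\o;kl}]_O(p_0;m)$ in Eq. \eqref{eqn:16}.

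Integrability is not an issue for $p_0\neq0$: the propagators are $\mc{O}(|\bs{k}'|^{-1})$ by Eq. \eqref{eqn:bands2b}, hence locally integrable in three dimensions, and the substitution is legitimate. The only point needing care is the bookkeeping of $O$ versus $O^T$ in the chain rule. I expect no real obstacle beyond checking that the index placement matches the statement, namely $O_{ki}$ rather than $O_{ik}$.
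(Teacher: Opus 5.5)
Your proposal is correct and follows the paper's proof: change variables $\bs{q}=(k_0,O\v{k}')$ so that the domain becomes $[\mc{D}_{\e,\o}]_O$ and the propagators become the rotated ones, then use the chain rule with $\gamma_{\o;i}(\v{0};0)=-\de_{k'_i}\hat{h}_{\o}(\v{0};0)$ from Proposition~\ref{prop:linear} to get $\v{\gamma}_{\o}(\v{0};0)=O^T[\v{\gamma}_{\o}(\v{0};0)]_O$. The only difference is cosmetic: the paper changes variables before expanding the vertices, whereas you expand the vertices first, and your index bookkeeping $O_{ki}O_{lj}$ is right.
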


Observe that by applying Lemma \ref{lem:sigmatensor} with $O=R_{\o}^T$, we find that

\[\s^{(1)}_{\o;ij}(p_0;m)= \sum_{k,l=1,2}(R_{\o})_{ik} (R_{\o})_{jl} [\s^{(1)}_{\o;kl}]_{R^T_{\o}}(p_0;m),\]

which holds true in particular for $m=0$. The claim of Proposition \ref{prop:sigma_Dirac} follows by performing, within the trace in the r.h.s.\ of Eq. \eqref{eqn:16}, the conjugation by the unitary $W_{\o}$ as in Lemma \ref{lemma:Dirac}, and noting that $[\mc{D}_{\e,\o}]_{R^T_{\o}}= \mc{A}_{\e,\o}$, so that $[\s^{(1)}_{\o;kl}]_{R^T_{\o}}(p_0;0)= {\s}^{(D)}_{\o;kl}(p_0)$ by applying Eq. \eqref{eqn:22}.

\begin{proof}[Proof of Lemma \ref{lem:sigmatensor}.]
From the very definition of $\s^{(1)}_{\o;ij}$ in Eq. \eqref{eqn:14}, it suffices to analyze:
\begin{align*}
 &\int_{\mc{D}_{\e,\o}} \frac{d\bs{k}'}{(2\pi)^3} \Tr_{\mbb{C}^2} \Big\{\gamma_{\o;i}(\v{0};0) \hat{s}^{(1)}_{\o}(\bs{k}';m) \gamma_{\o;j}(\v{0};0)\Big(\hat{s}^{(1)}_{\o}(\bs{k}'+ \bs{p}_0;m)- \hat{s}^{(1)}_{\o}(\bs{k}';m) \Big) \Big\}=\\
 &\int_{[\mc{D}_{\e,\o}]_O} \frac{d\bs{q}}{(2\pi)^3} \Tr_{\mbb{C}^2} \Big\{\gamma_{\o;i}(\v{0};0) [\hat{s}^{(1)}_{\o}(\bs{q};m)]_O \gamma_{\o;j}(\v{0};0)\Big([\hat{s}^{(1)}_{\o}(\bs{q}+ \bs{p}_0;m)]_O- [\hat{s}^{(1)}_{\o}(\bs{q};m)]_O \Big) \Big\},
\end{align*}

where we have implemented the change of variable $\bs{q}:=(k_0,O\v{k}')$. The thesis then follows from the basic observation that

\[ [\gamma_{\o;i}(\v{0};0)]_O= -\de_{q_i} \hat{h}^{(1)}_{\o}(O^T\v{q};0)\big|_{\v{q}=\v{0}}= -\sum_{l=1,2} O_{il} \de_{k'_l} \hat{h}_{\o}(\v{0};0), \]

which, in view of the identity $\gamma_{\o;i}(\v{0};0)= -\de_{k'_i}\hat{h}_{\o}(\v{0};0)$ from Proposition \ref{prop:linear}, implies: $\v{\gamma}_{\o}(\v{0};0)= O^T[\v{\gamma}_{\o}(\v{0};0)]_O$. 
\end{proof}

\subsection{Computation of the conductivity matrix}
\label{ssect:comp}

In view of the previous subsections, the proof of Theorem \ref{thm:main} is now reduced to the explicit computation of $\s^{(D)}_{\o;ij}$.

\begin{proposition}
\label{prop:main}
For every $\o\in\{1,\dots,N\}$ and $i,j\in\{1,2\}$, it holds true that

\[ \lim_{p_0\to0^-}\s_{\o;ij}^{(D)}(p_0) = \frac{v_{\o;i}^2}{16 v_{\o;1} v_{\o;2}} \d_{i,j}. \]
\end{proposition}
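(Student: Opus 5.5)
We compute $\s^{(D)}_{\o;ij}(p_0)$ explicitly. Drop the index $\o$ and write $h(\v{k}')=\v{c'}\cdot\v{k}'\,\id_2+v_2k_2'\s_1+v_1k_1'\s_2$, so that by Lemma \ref{lemma:Dirac} the vertices are $\gamma^{(D)}_1=-(c'_1\id_2+v_1\s_2)$ and $\gamma^{(D)}_2=-(c'_2\id_2+v_2\s_1)$.

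\textbf{Rescaling.} We first rescale $k_1'=q_1/v_1$, $k_2'=q_2/v_2$. This maps $\mc{A}_{\e}$ to the ball $\{k_0^2+|\v{q}|^2\le\e\}$ with Jacobian $(v_1v_2)^{-1}$. The Hamiltonian becomes $\tilde{\v{c}}\cdot\v{q}+q_2\s_1+q_1\s_2$ with $\tilde c_i=c'_i/v_i$, and by \eqref{eqn:bands2b} we have $|\tilde{\v{c}}|<1$. The propagator then takes the explicit form
$$\hat s(\bs{k})=\frac{-ik_0+\tilde{\v c}\cdot\v q-(q_2\s_1+q_1\s_2)}{(-ik_0+\tilde{\v c}\cdot\v q)^2-|\v q|^2}.$$

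\textbf{Removing the identity part of the vertices.} We handle the $\id_2$ parts via the Ward-type identity $\de_{q_i}\hat s=-\hat s\,(\de_{q_i}h)\,\hat s$. For the tilted cone it is cleaner to use a Lorentz-like boost. Since $|\tilde{\v c}|<1$, the change of variables $k_0\mapsto k_0$ shifted in imaginary direction is not allowed, so instead we argue as follows. The $\id_2$ contribution to the vertex produces terms of the form $\int \Tr\{\hat s(\bs k)\,\gamma\,\de_{q}\hat s\}$-like total derivatives in $\v q$, plus a $p_0$-difference. After dividing by $p_0$ and letting $p_0\to0^-$, these should reduce to boundary integrals on $\de\mc{A}_\e$. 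Those boundary integrals are bounded uniformly and, by oddness under $\bs{k}\to-\bs{k}$ (as in \eqref{eqn:odd}), should vanish or cancel.

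\textbf{Fallback: diagonalizing the tilt.} If this cancellation is not clean, the fallback is to compute directly with the denominator $(-ik_0+\tilde{\v c}\cdot\v q)^2-|\v q|^2$. We would integrate $k_0$ first by residues. The poles lie at $k_0=-i(\tilde{\v c}\cdot\v q\pm|\v q|)$, on opposite sides of the real axis precisely because $|\tilde{\v c}|<1$. After residues, the tilt drops out of the singular part. We expect this to be the \textbf{main obstacle}: showing that the tilt $\tilde{\v c}$, and the mismatch between the ellipsoidal cut-off and the tilted cone, produce no contribution to the $p_0\to0^-$ limit.

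\textbf{Isotropic computation.} For $\tilde{\v c}=0$ the problem reduces to the isotropic Dirac computation of \cite{GMP12}. With $\gamma_1=-v_1\s_2$ and $\gamma_2=-v_2\s_1$, the trace is computed using Pauli algebra. The off-diagonal entries vanish by the reflection $q_1\to-q_1$ (or $q_2\to-q_2$) combined with $k_0\to-k_0$ symmetries. In particular, the antisymmetric, $\s_3$-generated Chern–Simons-like term integrates to zero at zero mass. For the diagonal entries, the limit $p_0\to0^-$ is computed by writing the difference quotient in spherical coordinates in $(k_0,\v q)$. The integral is dominated by $|\bs k|\sim|p_0|$ and is independent of $\e$ by scaling, since the integrand is homogeneous of degree $-3$ after the $p_0$-difference. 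This gives the universal value $\frac{1}{16}$ per unit rescaled vertex.

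\textbf{Collecting factors.} Including the vertex factors $v_i^2$ and the Jacobian $(v_1v_2)^{-1}$ yields $\frac{v_i^2}{16v_1v_2}\d_{i,j}$, as claimed.
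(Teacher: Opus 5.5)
Your rescaling and the bookkeeping of the factor $v_iv_j/(v_1v_2)$ match the paper. Once $\tilde{\v{c}}=\v{0}$, your isotropic step is also fine: conjugation by $\s_1$ sends $h(q_1,q_2)$ to $h(-q_1,q_2)$ and $\s_2$ to $-\s_2$, which kills the off-diagonal entries, and the value $\frac{1}{16}$ is imported from \cite{GMP12}, exactly as in the paper. The gap is that you never actually remove the tilt $\tilde{\v{c}}$ from the propagators. Both paragraphs devoted to it are conditional (``should reduce'', ``should vanish or cancel'', ``if this cancellation is not clean''), and they end by calling it the main obstacle.

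The one definite claim you make there is backwards: $|\tilde{\v{c}}|<1$ is precisely what makes an imaginary shift of $k_0$ legitimate. For fixed $\v{q}\neq\v{0}$, the poles of $\hat{s}(\bs{k})$ and $\hat{s}(\bs{k}+\bs{p}_0)$ lie at $k_0=-i(\tilde{\v{c}}\cdot\v{q}\pm|\v{q}|)$ and at these points translated by $-p_0$. So both the real axis and the line $\mathrm{Im}\,k_0=-\tilde{\v{c}}\cdot\v{q}$ pass strictly between the upper and lower poles. Since the integrand decays like $k_0^{-2}$, Cauchy's theorem allows the substitution $k_0\mapsto k_0-i\tilde{\v{c}}\cdot\v{q}$. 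This turns $-ik_0+\tilde{\v{c}}\cdot\v{q}$ into $-ik_0$, and $-i(k_0+p_0)+\tilde{\v{c}}\cdot\v{q}$ into $-i(k_0+p_0)$, so the tilt disappears from both propagators exactly, not just from a ``singular part''. What forbids this on $\mc{A}_{\e}$ is only that the $k_0$-range there is bounded and $\v{q}$-dependent. That is the ``mismatch'' you mention, and it equally blocks your residue fallback.

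The missing idea is the paper's enlargement of the domain (after rescaling, the ball $B_{\sqrt{\e}}(\bs{0})$) to the cylinder $\mc{C}_{\e}=\mbb{R}\times\{|\v{q}|\le\sqrt{\e}\}$. The added region $\mc{C}_{\e}\setminus B_{\sqrt{\e}}(\bs{0})$ stays at distance $\sqrt{\e}$ from the origin. Hence, for $|p_0|\le\sqrt{\e}/4$, its contribution $\widehat{\d K}(p_0)$ is $\mathscr{C}^1$, since the $p_0$-derivative of the integrand is bounded by $C|\bs{k}|^{-3}$, which is integrable there. It is also even in $p_0$, because $\hat{s}(-\bs{k})=-\hat{s}(\bs{k})$ (true with the tilt, since $h$ is linear) and the region is invariant under $\bs{k}\to-\bs{k}$. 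Therefore $\de_{p_0}\widehat{\d K}(0)=0$ and the limit $p_0\to0^-$ is unchanged. On $\mc{C}_{\e}$ the contour shift is available for every $\v{q}$, which leaves only the $\tilde{c}_i\id_2$ parts of the vertices.

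The paper removes these with Lemma \ref{lemma:kcal}: a $\frac{\pi}{2}$-rotation symmetry for the mixed terms, and a residue computation giving $\int_{\mbb{R}}dk_0\,\Tr\{\hat{s}(\bs{k})\hat{s}(\bs{k}+\bs{p}_0)\}=0$. Your Ward-type idea could replace that lemma, but the relevant identity is not a total $\v{q}$-derivative. It is the $k_0$-resolvent identity $\hat{s}(\bs{k})\hat{s}(\bs{k}+\bs{p}_0)=\frac{1}{ip_0}\big(\hat{s}(\bs{k}+\bs{p}_0)-\hat{s}(\bs{k})\big)$, where the two factors commute. On $\mc{C}_{\e}$ it turns every term with an $\id_2$ vertex into a $k_0$-translation difference whose integral over each line $\mbb{R}\times\{\v{q}\}$ vanishes. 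On the ball you would instead get boundary terms that drop out of the $p_0$-derivative by parity, as you guessed. Either way, this handles only the vertices. Without the cylinder-plus-contour-shift step the tilt remains inside the $\s$--$\s$ terms, and your reduction to the isotropic computation is not justified.
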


Before proving the above proposition, we combine the previous results to complete the proof of the main theorem.

\paragraph{Proof of Theorem \ref{thm:main}.}  As a first step, observe that  Propositions \ref{prop:sigma_Dirac} and \ref{prop:main} imply that

\begin{equation}
\label{eqn:18}
\begin{split}
\lim_{p_0\to0^-}\s_{\o;ij}^{(1)}(p_0;0)&=  \sum_{k,l=1,2} (R_{\o})_{ik} (R_{\o})_{jl} \lim_{p_0\to0^-} \s^{(D)}_{\o;kl}(p_0)\\
&= \frac{1}{16 v_{1;\o} v_{2;\o}} \big(R_{\o} \,\mathrm{diag}(v_{\o;1}^2, v_{\o;2}^2) \,R_{\o}^T \big)_{ij}\\
&= \frac{1}{16|\det A_{\o}|} (A_{\o}^TA_{\o})_{ij},
\end{split}
\end{equation}

where we used that, by construction, $R^T_{\o}A^T_{\o}A_{\o}R_{\o}= \mathrm{diag}\big(v_{\o;1}^2, v_{\o;2}^2 \big)$, as discussed at the beginning of Subsection \ref{ssect:Dirac}. In force of Eq. \eqref{eqn:18}, Proposition \ref{prop:linear} implies\footnote{Eq. \eqref{eqn:18} in particular establishes that the limit $p_0\to0^-$ of $\s^{(1)}_{\o;ij}(p_0;m)$ exists finite for $m=0$; besides, as mentioned in Remark \ref{rmk:1}, this is trivially true in the case $m\ne0$.} that $\s_{ij}^{(\mathrm{sing})}(m)$ is well defined (and so is $\s_{ij}(m)$) and

\begin{equation}
\label{eqn:19}
\mf{D}\s_{ij}^{(\mathrm{sing})}(\e)= \sum_{\o=1}^N \frac{1}{16|\det A_{\o}|} (A_{\o}^TA_{\o})_{ij} + \mc{O}(\sqrt{\e}).
\end{equation}

Finally, in view of Eq. \eqref{eqn:21}, we have that

\begin{equation}
\begin{split}
\mf{D}\s_{ij;\o}(\e)= \mf{D}\s^{(\mathrm{sing})}_{ij;\o}(\e) + \mc{O}(\e)= \sum_{\o=1}^N \frac{1}{16|\det A_{\o}|} (A_{\o}^TA_{\o})_{ij} + \mc{O}(\sqrt{\e}).   
\end{split}
\end{equation}

Taking the limit $\e\to0^+$ at both sides proves the claim of Theorem \ref{thm:main}.

\paragraph{Proof of Proposition \ref{prop:main}.}

In order to simplify the notation, throughout this proof, we drop the dependence upon the valley index $\o$ in all the quantities involved.
Conversely, it is convenient to make the dependence of $\s^{(D)}$ upon $\v{v}$ and $\v{c'}$ explicit: $\s^{(D)}_{ij}(p_0)\equiv \s^{(D)}_{ij}(p_0;\v{v},\v{c'})$. 

As a first step, we notice that, by performing the change of variable $(k'_1,k'_2)\to (q_1,q_2):= (v_1k'_1, v_2k'_2)$ in the integral at the r.h.s.\ of Eq. \eqref{eqn:Dirac}, we get that

\begin{equation}
\label{eqn:23}
\begin{split}
\s^{(D)}_{ij}(p_0;\v{v},\v{c'})& = \frac{v_iv_j}{v_1v_2} \s^{(D)}_{ij}\big(p_0;\v{1},\v{c''} \big),
\end{split}
\end{equation}

where $\v{1}\equiv (1,1)$, $\v{c''}\equiv (\frac{c'_1}{v_1},\frac{c'_2}{v_2})$. As a next step, it is convenient to extend the integral in the definition of $\s^{(D)}_{ij}\big(p_0;\v{1},\v{c''} \big)$ (see Eq. \eqref{eqn:Dirac}) from the ball $B_{\sqrt{\e}}(\bs{0})$ to the infinite cylinder:

\begin{equation}
\mc{C_{\e}}:= \big\{\bs{k}'\in\mbb{R}^3: \; |\v{k}'|\le \sqrt{\e}\big\}.
\end{equation}

To this purpose, we introduce the function:

\[\widehat{\d K}^{(D)}_{ij}(p_0):=- \int_{\mc{C}_{\e}\setminus B_{\sqrt\e}(\bs{0})} \frac{d\bs{k}'}{(2\pi)^3} \mathrm{Tr}_{\mbb{C}^2}\Big\{\gamma^{(D)}_i \hat{s}^{(D)}(\bs{k}') \gamma_j^{(D)} \hat{s}^{(D)}\big( \bs{k}'+ \bs{p}_0\big) \Big\}\Big|_{(\v{1},\v{c''})},\]

which, as we shall discuss below, is both $\mathscr{C}^1$ and even w.r.t.\ $p_0$, for $|p_0|\le \frac{\sqrt\e}{4}$. In this way $\de_{p_0} \widehat{\d K}^{(D)}_{ij}(0)=0$ and so:

\begin{equation}
\lim_{p_0\to 0^-} \s_{ij}^{(D)}(p_0;\v{1},\v{c''})= \lim_{p_0\to 0^-} \tilde{\s}_{ij}^{(D)}(p_0;\v{1},\v{c''}),
\end{equation}

where: 

\begin{equation}
\label{def:sigma_tilde}
\begin{split}
\tilde{\s}_{ij}^{(D)}(p_0;\v{1},\v{c''})&:= \s_{ij}^{(D)}(p_0;\v{1},\v{c''})+ \frac{1}{p_0}\Big( \widehat{\d K}^{(D)}_{ij}(p_0)- \widehat{\d K}^{(D)}_{ij}(0) \Big)\\
&\equiv -\frac{1}{p_0} \int_{\mc{C}_{\e}} \frac{d\bs{k}'}{(2\pi)^3} \Tr_{\mbb{C}^2} \Big\{\gamma_i^{(D)} \hat{s}^{(D)}(\bs{k}') \gamma_j^{(D)} \big(\hat{s}^{(D)}(\bs{k}'+\bs{p}_0)- \hat{s}^{(D)}(\bs{k}') \big) \Big\}\Big|_{(\v{1},\v{c''})}.
\end{split}
\end{equation}

The evenness of $\widehat{\d K}^{(D)}_{ij}$ follows immediately from the oddness of $\hat{s}^{(D)}$ and the symmetry of the domains $\mc{C}_{\e}$ and $B_{\sqrt{\e}}(\bs{0})$. Its regularity instead follows from purely dimensional considerations. Indeed, observe that

\[\begin{split}
&\frac{1}{q_0} \Big( \widehat{\d K}^{(D)}_{ij}(p_0+ q_0)- \widehat{\d K}^{(D)}_{ij}(p_0) \Big)=\\
& -  \int_{\mc{C}_{\e}\setminus B_{\sqrt{\e}}(\bs{0})} \frac{d\bs{k}'}{(2\pi)^3} \int_0^1 dt \Tr_{\mbb{C}^2} \Big\{\gamma^{(D)}_i \hat{s}^{(D)}(\bs{k}') \gamma^{(D)}_j \de_{k_0}\hat{s}^{(D)}\big(\bs{k}'+ \bs{p}_0+ t\bs{q}_0\big) \Big\}\Big|_{(\v{1},\v{c''})}.
\end{split}\]

Using also that $|\hat{s}^{(D)}(\bs{k}')|\le \frac{C}{|\bs{k}'|}$, $|\de_{k_0}\hat{s}^{(D)}(\bs{k}')|\le \frac{C}{|\bs{k}'|^2}$, for a suitable constant $C>0$, and noting that for $|p_0|,|q_0|\le\frac{\sqrt\e}{4}$, $|\bs{k}'+ \bs{p}_0+ t\bs{q}_0|\ge \frac{1}{2}|\bs{k}'|$, we see that

\begin{equation*}
\Big|\Tr_{\mbb{C}^2} \Big\{\gamma^{(D)}_i \hat{s}^{(D)}(\bs{k}') \gamma^{(D)}_j \de_{k_0}\hat{s}^{(D)}\big(\bs{k}'+ \bs{p}_0+ t\bs{q}_0\big) \Big\}\Big|\le \frac{C'}{|\bs{k}'|^3},
\end{equation*}

for some $C'>0$, and the r.h.s. is in $L^1\big(\mc{C}_{\e}\setminus B_{\sqrt{\e}}(\bs{0})\big)$. Hence, by an application of the Dominated Convergence Theorem, we find that

\[\begin{split}
&\lim_{q_0\to0} \frac{1}{q_0} \Big( \widehat{\d K}^{(D)}_{ij}(p_0+ q_0)- \widehat{\d K}^{(D)}_{ij}(p_0) \Big)=\\
& -  \int_{\mc{C}_{\e}\setminus B_{\sqrt{\e}}(\bs{0})} \frac{d\bs{k}'}{(2\pi)^3} \Tr_{\mbb{C}^2} \Big\{\gamma^{(D)}_i \hat{s}^{(D)}(\bs{k}') \gamma^{(D)}_j \de_{k_0}\hat{s}^{(D)}\big(\bs{k}'+ \bs{p}_0\big) \Big\}\Big|_{(\v{1},\v{c''})},
\end{split}\]

which is also continuous w.r.t.\ $p_0$.  All in all, for the moment we have shown that

\begin{equation}
\label{eqn:44}
\lim_{p_0\to0^-}\s^{(D)}_{ij}(p_0;\v{v},\v{c'})= \frac{v_iv_j}{v_1v_2} \lim_{p_0\to 0^-} \tilde{\s}^{(D)}_{ij}(p_0;\v{1},\v{c''}),
\end{equation}

so we are left with analyzing $\tilde{\s}^{(D)}_{ij}(p_0;\v{1},\v{c''})$. We shall show that 

\begin{equation}
\label{eqn:45}
\tilde{\s}^{(D)}_{ij}(p_0;\v{1},\v{c''})= \d_{i,j} \tilde{\s}^{(D)}_{11}(p_0;\v{1},\v{0}).
\end{equation}

Notice that the computation of $\lim_{p_0\to0^-}\tilde{\s}^{(D)}_{11}(p_0;\v{1},\v{0})$ is performed in \cite[Appendix A]{GMP12}: it is half of the longitudinal conductivity of non-interacting graphene  and equals $\frac{1}{16}$. In this way, in view of Eq. \eqref{eqn:44}, the claim of Proposition \ref{prop:main} follows. In order to prove Eq. \eqref{eqn:45}, we first show that we can set $\v{c''}=\v{0}$ within the two-point functions $\hat{s}^{(D)}$ appearing in the r.h.s.\ of Eq. \eqref{def:sigma_tilde}. It suffices to show that

\begin{equation}
\label{eqn:47}
\begin{split}
&\int_{\mc{C}_{\e}} \frac{d\bs{k}'}{(2\pi)^3} \Tr_{\mbb{C}^2} \Big\{\gamma_i^{(D)} \hat{s}^{(D)}(\bs{k}') \gamma_j^{(D)} \hat{s}^{(D)}(\bs{k}'+\bs{p}_0) \Big\}=\\
&\int_{\mc{C}_{\e}} \frac{d\bs{k}'}{(2\pi)^3} \Tr_{\mbb{C}^2} \Big\{\gamma_i^{(D)} \hat{s}^{(D)}(\bs{k}')\big|_{\v{c''}=\v{0}} \,\gamma_j^{(D)} \hat{s}^{(D)}(\bs{k}'+\bs{p}_0)\big|_{\v{c''}=\v{0}} \Big\},
\end{split}
\end{equation}

for every $|p_0|\le \frac{\sqrt\e}{4}$, with the understanding that $\v{v}=\v{1}$. In fact, denoting:

\begin{equation*}
f_{\v{k}',p_0}(k_0):= \Tr_{\mbb{C}^2} \Big\{\gamma_i^{(D)} \hat{s}^{(D)}(\bs{k}') \gamma_j^{(D)} \hat{s}^{(D)}(\bs{k}'+\bs{p}_0) \Big\},
\end{equation*}

for every $0<|\v{k}'|<\sqrt\e$ and $|p_0|\le\frac{\sqrt\e}{4}$ fixed, by Fubini's Theorem we can write:

\begin{equation}
\int_{\mc{C}_{\e}} \frac{d\bs{k}'}{(2\pi)^3} \Tr_{\mbb{C}^2} \Big\{\gamma_i^{(D)} \hat{s}^{(D)}(\bs{k}') \gamma_j^{(D)} \hat{s}^{(D)}(\bs{k}'+\bs{p}_0) \Big\}= \int_{0<|\v{k}'|\le\sqrt{\e}} \frac{d\v{k}'}{(2\pi)^2} \int_{\mbb{R}} \frac{dk_0}{2\pi} f_{\v{k}',p_0}(k_0).
\end{equation}

Observe that $f_{\v{k}',p_0}$ is analytic away from the poles

\begin{equation*}
\left\{\begin{array}{cc}
k_0=& -i \big( \v{c''}\cdot\v{k}' \pm |\v{k}'| \big),  \\
k_0=& -i \big( \v{c''}\cdot\v{k}' \pm |\v{k}'| \big) -p_0,
\end{array}\right.
\end{equation*}

whose imaginary parts are such that $  |\v{k}'| \pm\v{c''}\cdot\v{k}' > 0$ in view of Eq. \eqref{eqn:bands2b}. Since $f_{\v{k}',p_0}(k_0)$ decays as $\frac{1}{k_0^2}$ for $k_0$ large and since $f_{\v{k}',p_0}$ is analytic on the strip between the real axis and $\big\{\mathrm{Im}(k_0)=-\v{c''}\cdot\v{k}' \big\}$, by the Cauchy Theorem we get that

\[
\int_{\mbb{R}} \frac{d k_0}{2\pi} f_{\v{k}',p_0}(k_0)= \int_{\mbb{R}-i\v{c''}\cdot\v{k}'} \frac{d k_0}{2\pi} f_{\v{k}',p_0}(k_0)= \int_{\mbb{R}} \frac{d k_0}{2\pi} f_{\v{k}',p_0}(k_0- i\v{c''}\cdot\v{k}').
\]

Eq. \eqref{eqn:47} follows by observing that 

\begin{equation*}
f_{\v{k}',p_0}(k_0-i\v{c''}\cdot\v{k}')= \Tr_{\mbb{C}^2} \Big\{\gamma_i^{(D)} \hat{s}^{(D)}(\bs{k}')\big|_{\v{c''}=\v{0}}\, \gamma_j^{(D)} \hat{s}^{(D)}(\bs{k}'+\bs{p}_0)\big|_{\v{c''}=\v{0}} \Big\}.
\end{equation*}

Now, recalling that for $\v{v}=\v{1}$, $\gamma_1^{(D)}= -\s_2- c''_1\mathds{1}_2$ and $\gamma_2^{(D)}=-\s_1 - c''_2\mathds{1}_2$ and by applying Lemma \ref{lemma:kcal} below, we find that

\begin{equation*}
\begin{split}
&\int_{\mc{C}_{\e}} \frac{d\bs{k}'}{(2\pi)^3} \Tr_{\mbb{C}^2} \Big\{\gamma_i^{(D)} \hat{s}^{(D)}(\bs{k}')\big|_{\v{c''}=\v{0}} \,\gamma_j^{(D)} \hat{s}^{(D)}(\bs{k}'+\bs{p}_0)\big|_{\v{c''}=\v{0}} \Big\}=\\
& \hat{\mc{K}}_{ij}(p_0)- c''_i \hat{\mc{K}}_{0j}(p_0) - c''_j \hat{\mc{K}}_{i0}(p_0) + c''_ic''_j\hat{\mc{K}}_{00}(p_0)= \d_{i,j}\hat{\mc{K}}_{11}(p_0),
\end{split}
\end{equation*}

(with $\hat{\mc{K}}_{\tau\nu}$ as in Eq. \eqref{eqn:kcal} below) which immediately implies Eq. \eqref{eqn:45}.

\begin{lemma}
\label{lemma:kcal}
Defining, for every $|p_0|\le \frac{\sqrt\e}{4}$ and $\nu,\tau\in\{0,1,2\}$,

\begin{equation}
\label{eqn:kcal}
\hat{\mc{K}}_{\t\nu}(p_0):= -\int_{\mc{C}_{\e}} \frac{d\bs{k}'}{(2\pi)^3} \Tr_{\mbb{C}^2} \Big\{\gamma_{\tau}^{(D)} \hat{s}^{(D)}(\bs{k}') \gamma_{\nu}^{(D)} \hat{s}^{(D)}(\bs{k}'+\bs{p}_0) \Big\}\Big|_{(\v{v},\v{c})=(\v{1},\v{0})},
\end{equation}

with $\gamma_0^{(D)}:=\mathds{1}_2$, it holds true that
\[
\hat{\mc{K}}_{\tau\nu}=\delta_{\tau,\nu}(1-\delta_{\tau,0})\hat{\mc{K}}_{11}.
\]
In particular, $\hat{\mc{K}}$ vanishes if either $\tau$ or $\nu$ is zero, or $\tau\neq \nu$.
\end{lemma}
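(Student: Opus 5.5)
With $(\v{v},\v{c})=(\v{1},\v{0})$ we have $\hat{h}^{(D)}(\v{k}')=k'_2\s_1+k'_1\s_2$, so $\gamma^{(D)}_1=-\s_2$, $\gamma^{(D)}_2=-\s_1$ and $\gamma^{(D)}_0=\mathds{1}_2$. The plan is to rule out every component except the diagonal spatial ones by symmetries of the integrand that also leave the cylinder $\mc{C}_\e$ invariant, and then to show $\hat{\mc K}_{11}=\hat{\mc K}_{22}$ by a rotation. For $\bs{k}'\neq \bs 0$ the propagator is explicit:
\[
\hat s^{(D)}(\bs{k}')=\frac{ik_0+k'_2\s_1+k'_1\s_2}{k_0^2+|\v{k}'|^2}.
\]
Each integrand is then a rational function whose trace we compute with $\Tr\{\s_a\s_b\}=2\delta_{ab}$ and $\Tr\{\s_a\s_b\s_c\}=2i\epsilon_{abc}$. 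For fixed $|p_0|\le\sqrt\e/4$ we first check absolute integrability on $\mc{C}_\e$: the integrand decays like $k_0^{-2}$ at large $|k_0|$, and its singularities at $\bs{k}'=\bs 0$ and $\bs{k}'=-\bs p_0$ are like $|\bs k'|^{-1}$, which is integrable in three dimensions. This justifies every change of variables used below.

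\textbf{Mixed time--space components.} Here one of $\tau,\nu$ is $0$ and the other is spatial, say $\gamma_\nu^{(D)}=-\s_a$. Expanding the trace, the only surviving terms are of the form $i k_0\,\hat k_a$ or $i(k_0+p_0)\,\hat k_a$, where $\hat k_a$ stands for $k'_2$ or $k'_1$. These are odd under $\v{k}'\mapsto-\v{k}'$ with $k_0$ fixed, the denominators are even, and the map preserves $\mc{C}_\e$. Hence $\hat{\mc K}_{0a}=\hat{\mc K}_{a0}=0$.

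\textbf{Time--time component.} For $\hat{\mc K}_{00}$ the trace equals
\[
2\,\frac{-k_0(k_0+p_0)+|\v{k}'|^2}{(k_0^2+|\v{k}'|^2)\big((k_0+p_0)^2+|\v{k}'|^2\big)}.
\]
We integrate in $k_0$ first, at fixed $\v{k}'\neq\v 0$, by residues. It is convenient to write the trace as $\frac12\Tr\big\{(-ik_0+h)^{-1}(-i(k_0+p_0)+h)^{-1}\big\}$ summed over the eigenvalues $\pm|\v k'|$ of $h$. Each eigenvalue contributes $\int dk_0\,(-ik_0+\l)^{-1}(-ik_0-ip_0+\l)^{-1}$. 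For this integral both poles lie on the same side of the real axis, so closing the contour on the other side gives zero; this is charge conservation, $\hat{\mc K}_{00}\equiv0$. The $k_0$-integral converges absolutely because the integrand is $O(k_0^{-2})$, so the contour closure is legitimate.

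\textbf{Off-diagonal spatial components.} For $\hat{\mc K}_{12}$ the trace $\Tr\{\s_2(ik_0+\slashed k)\s_1(i(k_0+p_0)+\slashed k)\}$, with $\slashed k=k'_2\s_1+k'_1\s_2$, contains three kinds of terms. The terms $k'_1k'_2$ are odd under the reflection $k'_1\mapsto-k'_1$. The terms $k_0(k_0+p_0)$ come with the trace $\Tr\{\s_2\s_1\}=0$. The linear terms are $i k_0\,\Tr\{\s_2\s_1\slashed k\}$ and $i(k_0+p_0)\Tr\{\s_2\slashed k\s_1\}$, and each of these is proportional to $\s_3$-components, of which $\slashed k$ has none, so they vanish. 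The only point needing care is that some terms may survive as $\epsilon$-tensor terms; if any do, I would kill them with the reflection $k'_1\mapsto -k'_1$ combined with the corresponding parity of $k'_1$. Indeed the degree-one terms are not identically zero: $\Tr\{\s_2\s_1\s_2\}$ times $k'_1$ is nonzero only through the components of $\slashed k$, and each such term carries a single power of $k'_1$ or $k'_2$, so the corresponding reflection kills it.

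\textbf{Equality of the diagonal components.} We show $\hat{\mc K}_{11}=\hat{\mc K}_{22}$ using the unitary $V=e^{i\pi\s_3/4}$ together with the rotation $(k'_1,k'_2)\mapsto(k'_2,-k'_1)$ (or a suitable variant). Under this combined map $\hat h^{(D)}$ is covariant, $\gamma_1^{(D)}$ and $\gamma_2^{(D)}$ are exchanged up to sign, and $\mc C_\e$ is invariant, which gives the equality. Concretely, $V^\dagger\s_1V=\pm\s_2$ and $V^\dagger\s_2V=\mp\s_1$.

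The main obstacle is the careful bookkeeping of the $\epsilon$-tensor, or $\s_3$-type, trace terms in the mixed and off-diagonal components, especially for $\hat{\mc K}_{12}$. The trace $\Tr\{\s_2\s_1\}$ multiplying the $k_0(k_0+p_0)$ term actually vanishes, but terms of the form $\Tr\{\s_2\s_1\s_b\}k_b$ need $b=3$ and are therefore absent. Making sure each term dies by either a trace identity or an odd reflection is the crux. A secondary point is justifying the contour argument for $\hat{\mc K}_{00}$ and confirming the absolute integrability needed for Fubini.
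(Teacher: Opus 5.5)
Your argument is correct, but it is organized differently from the paper's. The paper never expands a trace. It first records $\hat{\mathcal{K}}_{\tau\nu}(p_0)=\hat{\mathcal{K}}_{\nu\tau}(p_0)=\hat{\mathcal{K}}_{\nu\tau}(-p_0)$, which follows from cyclicity of the trace, oddness of $\hat s^{(D)}$, and invariance of the infinite cylinder $\mathcal{C}_\epsilon$ under $\boldsymbol{k}'\mapsto-\boldsymbol{k}'-\boldsymbol{p}_0$. It then uses only the quarter-turn covariance $W^\dagger\hat h^{(D)}(k_1',k_2')W=\hat h^{(D)}(-k_2',k_1')$ with $W=e^{-i\frac{\pi}{4}\sigma_3}$:
\begin{itemize}
\item on the mixed entries it gives $\hat{\mathcal{K}}_{10}=\hat{\mathcal{K}}_{20}=-\hat{\mathcal{K}}_{10}$, which is in effect a half turn, i.e.\ your parity $\vec k'\mapsto-\vec k'$;
\item on the spatial block it gives $\hat{\mathcal{K}}_{11}=\hat{\mathcal{K}}_{22}$ and $\hat{\mathcal{K}}_{12}=-\hat{\mathcal{K}}_{21}$, and together with the transposition identity this forces $\hat{\mathcal{K}}_{12}=0$;
\item $\hat{\mathcal{K}}_{00}$ is handled by computing the two residues in the upper half plane and checking that they cancel.
\end{itemize}

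You instead reduce each off-diagonal entry to an explicit scalar integrand via Pauli trace identities, and kill it with a reflection of $\mathcal{C}_\epsilon$. For $\hat{\mathcal{K}}_{12}$ the only surviving term is $\Tr\{\sigma_2\hat h^{(D)}\sigma_1\hat h^{(D)}\}=4k_1'k_2'$, over an even denominator. For $\hat{\mathcal{K}}_{00}$ you split the trace over the eigenvalues $\lambda=\pm|\vec k'|$ of $\hat h^{(D)}$, so that each band's integrand has both poles in the same half plane.

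The paper's route is more structural and avoids all trace bookkeeping. Yours is more explicit and does not need the transposition identity. Your band-by-band argument for $\hat{\mathcal{K}}_{00}$ is cleaner than the residue evaluation, and it covers $p_0=0$ directly, without the paper's continuity-in-$p_0$ step.

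A few slips to fix in the write-up:
\begin{itemize}
\item In the off-diagonal paragraph the hedging is unnecessary and partly wrong. The terms of degree one in $\vec k'$ vanish identically, since $\Tr\{\sigma_2\sigma_1\sigma_b\}=2i\epsilon_{21b}$ forces $b=3$ and $\hat h^{(D)}$ has no $\sigma_3$ component; also $\Tr\{\sigma_2\sigma_1\sigma_2\}=0$. The sentence claiming they are ``not identically zero'' should go.
\item In the $\hat{\mathcal{K}}_{00}$ paragraph the factor $\frac12$ is spurious: the trace equals exactly $\sum_{\lambda=\pm|\vec k'|}(-ik_0+\lambda)^{-1}(-i(k_0+p_0)+\lambda)^{-1}$.
\item For integrability, at $p_0=0$ the two singularities merge into one of order $|\boldsymbol{k}'|^{-2}$, not $|\boldsymbol{k}'|^{-1}$. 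This is still integrable in three dimensions.
\item No ``suitable variant'' is needed in the last step. With $V=e^{i\pi\sigma_3/4}$ one has $V^\dagger\sigma_1V=\sigma_2$, $V^\dagger\sigma_2V=-\sigma_1$ and $V^\dagger\hat h^{(D)}(k_1',k_2')V=\hat h^{(D)}(k_2',-k_1')$, so your choice works as stated.
\end{itemize}
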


The proof of this lemma is deferred to Appendix \ref{app:1_Dirac}.

\paragraph{Acknowledgments.} 
We gratefully acknowledge M. Porta for enlightening advice and valuable discussions.  We are thankful to A. Marrazzo for sharing with us a different perspective about topological phases and suggesting physically relevant models. We sincerely appreciate discussions with D. Monaco and G. Panati, willing to share with us their geometrical insights.

This work was supported by the National Group of
Mathematical Physics (GNFM–INdAM) within the project \virg{Progetto Giovani GNFM 2025} -- CUP E5324001950001. G.~M.\ gratefully acknowledges financial support from the European Research Council through the ERC CoG UniCoSM, grant agreement n.724939. S.~F.\ acknowledges financial support from the ERC-StG MaMBoQ, grant agreement n.802901, and the  MUR, PRIN 2022 project MaIQuFi cod. 20223J85K3.

\vspace{0.5cm}

\noindent\textbf{Statements and Declarations.}\\ 
\textbf{Competing Interests.} All authors declare that they have no conflicts of interest to disclose.

\vspace{0.5cm}

\noindent\textbf{Data Availability.} Data sharing is not applicable to this article as no datasets were generated or
analyzed during the current study.

\appendix

\section{Systems with a rotation symmetry}
\label{app:rot}

In this appendix we will first discuss the proof of Corollary \ref{cor:rotation}, which is a direct application of Theorem \ref{thm:main} to systems with an emergent rotation symmetry. Secondly, in Lemma \ref{lemma:R_exact} we will show that a rotation symmetry which is exact is also emergent (see Assumptions \ref{assum:R} and \ref{assum:ER}).

\begin{proof}[Proof of Corollary \ref{cor:rotation}]
In force of Theorem \ref{thm:main}, the proof of Corollary \ref{cor:rotation} reduces to showing that the $2\times2$ matrix $|A_{\o}|^2\equiv A_{\o}^TA_{\o}$ is proportional to the identity, for every $\o\in\{1,\dots,N\}$. As a convenient starting point, we have the rephrasing of Eq. \eqref{eqn:35} in terms of the $2\times2$ Hamiltonian $\hat{h}_{\o}$, defined in Eq. \eqref{eqn:h1}. According to Lemma \ref{lemma_2by2} below, it turns out that, for every $\o\in\{1,\dots,N\}$,

\begin{equation}
\label{eqn:38}
\mc{S}^{\dg}_{\o} \,\hat{h}_{\o}(\v{k}';m) \,\mc{S}_{\o}= \hat{h}_{\o}(R_{\o}\v{k}';m) + \mc{O}(|\v{k}'|^2+m^2),
\end{equation}

for a suitable $2\times2$ unitary matrix $\mc{S}_{\o}$, where $R_\o$ is the rotation matrix introduced in Assumption \ref{assum:R}. Recalling that $\hat{h}_{\o}(\v{k}';m)$ has eigenvalues $\l_{\pm}(\v{k}_{F,\o}+ \v{k}';m)-\mu$, from Eq. \eqref{eqn:38} it follows that

\begin{equation}
\label{eqn:39}
\l_{\pm}(\v{k}_{F,\o}+\v{k}';m)= \l_{\pm}(\v{k}_{F,\o}+R_{\o}\v{k}';m) + \mc{O}\big(|\v{k}'|^2+m^2 \big), \qquad \forall\o\in\{1,\dots,M\},
\end{equation}

which is readily checked in view of the Pauli-matrices decomposition of $\hat{h}_{\o}$. Combining Eq. \eqref{eqn:39} with Eq. \eqref{eqn:bands1}, we find that 

\begin{equation}
\begin{split}
&\v{c}_{\o}\cdot\v{k}'+ u_{\o}m \pm \sqrt{|A_{\o}\v{k}'|^2+ w_{\o}^2m^2+\v{k}'\!\cdot \v{b}_\o m} =\\
&\v{c}_{\o}\cdot (R_{\o}\v{k}')+ u_{\o}m \pm \sqrt{|A_{\o}R_{\o}\v{k}'|^2+ w_{\o}^2m^2+(R_{\o}\v{k}')\!\cdot \v{b}_\o m} + \mc{O}\big(|\v{k}'|^2+m^2 \big),
\end{split}
\end{equation}

which implies:

\begin{align}
&(\v{c}_{\o}- R_{\o}^T\v{c}_{\o})\cdot\v{k}'=0,\\
& \v{k}'\cdot \big(A_{\o}^TA_{\o}- R_{\o}^TA_{\o}^TA_{\o}R_{\o} \big)\v{k}' + m(\v{b}_{\o}- R_{\o}^T\v{b}_{\o})\cdot\v{k}'=0.
\end{align}

From the arbitrariness of $(\v{k}',m)$ small enough, we must have that $R_{\o}^T\v{c}_{\o}=\v{c}_{\o}$ and $R_{\o}^T\v{b}_{\o}=\v{b}_{\o}$, but since $R_{\o}^T$ clearly cannot have 1 as eigenvalue, we must have $\v{c}_{\o}=\v{b}_{\o}=\v{0}$. Besides, since both $A_{\o}^TA_{\o}$ and $R_{\o}^TA_{\o}^TA_{\o}R_{\o}$ are symmetric matrices, we must also have:

\begin{equation}
\label{eqn:40}
R_{\o}^TA_{\o}^TA_{\o}R_{\o} = A_{\o}^TA_{\o}.
\end{equation}

As we are going to show, Eq. \eqref{eqn:40} is sufficient to conclude that $A_{\o}^TA_{\o}$ is proportional to $\mathds{1}_2$. Indeed, assuming that 

\begin{equation*}
A^T_{\o}A_{\o}= \left( \begin{array}{cc}
    a & b \\
    b & d
\end{array}\right),
\end{equation*}

with $a,b,d\in\mbb{R}$ (the dependence upon $\o$ is suppressed to simplify the notation), Eq. \eqref{eqn:40} implies that

\begin{equation}
\label{eqn:41}
\left\{\begin{array}{cc}
a= a \cos\a_{\o}^2 + 2b \sin\a_\o \cos\a_\o + d \sin^2\a_\o,  \\
b= (d-a)\sin\a_\o \cos\a_\o + b(\cos^2\a_\o -\sin^2\a_\o),\\
d= a\sin^2\a_\o -2b \sin\a_\o \cos\a_\o +d \cos^2\a_\o.
\end{array}\right.
\end{equation}

Taking the difference of the first and the third of Eq. \eqref{eqn:41}, we find that

\begin{equation*}
(d-a)\big(1-\cos^2\a_\o+ \sin^2\a_\o\big)=-4b \sin\a_\o \cos\a_\o \Rightarrow d-a = -\tfrac{2\cos\a_\o}{\sin\a_\o} b,
\end{equation*}

where recall that by Assumption \ref{assum:R}, $\sin\a_{\o}\ne0$. Plugging this equality in the second of Eq. \eqref{eqn:41}, we further get:

\begin{equation*}
b=-2b\cos^2\a_\o + b(\cos^2\a_\o - \sin^2\a_\o) \equiv -b,
\end{equation*}

so that: $b=0$, $d=a$ and hence $A_{\o}^TA_{\o}= a \mathds{1}_2$, as desired.
\end{proof}

\begin{lemma}
\label{lemma_2by2}
Under Assumptions \ref{assum:H0}, \ref{assum:H} and \ref{assum:R}, we have that for every $\o\in\{1,...,N\}$ there exists a $2\times2$ unitary matrix $\mc{S}_{\o}$ such that

\begin{equation}
\label{eqn:38bis}
\mc{S}^{\dg}_{\o} \,\hat{h}_{\o}(\v{k}';m) \,\mc{S}_{\o}= \hat{h}_{\o}(R_{\o}\v{k}';m) + \mc{O}(|\v{k}'|^2+m^2),
\end{equation}

where $R_\o$ is the rotation matrix introduced in Assumption \ref{assum:R}.
\end{lemma}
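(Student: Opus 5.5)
The plan is to pass from the $M\times M$ statement \eqref{eqn:35} to the $2\times 2$ one through the Kato-Nagy frame. By Lemma \ref{lemma_KN} and the intertwining property \eqref{eqn:2}, we can write
\[
\hat{h}_{\o;\r\r'}(\v{k}';m)=\bra{\phi_{\o;\r}(\v{k}';m)}\hat{H}^0_{\Pi_\o}(\v{k}';m)\ket{\phi_{\o;\r'}(\v{k}';m)},\qquad \r,\r'\in\{1,2\},
\]
because $\phi_{\o;\r}(\v{k}';m)=U_\o(\v{k}';m)\phi_{\o;\r}$ lies in $\mathrm{Ran}\,\Pi_\o(\v{k}';m)$. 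Since $\hat{h}_\o$ is analytic and $\hat{h}_\o(\v{0};0)=0$, we have $\hat{h}_\o=\hat{h}^{(1)}_\o+\mc{O}(|\v{k}'|^2+m^2)$. It is therefore enough to prove \eqref{eqn:38bis} with $\hat{h}^{(1)}_\o$ on both sides. By Remark \ref{rmk:lin}, $\hat h^{(1)}_\o$ only involves the matrix elements of the first-order part of $\hat{H}^0_\o$ between the fixed vectors $\phi_{\o;1},\phi_{\o;2}$ spanning $V_\o:=\mathrm{Ran}\,\Pi_\o(\v{0};0)$.

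First I would extract the zeroth-order information from \eqref{eqn:35}. Since $\hat{H}^0_{\Pi_\o}$ is analytic near $(\v{0};0)$ (by Lemma \ref{lem:smoothP}) and vanishes there, \eqref{eqn:35} gives an identity between first-order Taylor coefficients:
\[
\R_\o^\dg\, D\hat{H}^0_{\Pi_\o}(\v{k}';m)\,\R_\o=D\hat{H}^0_{\Pi_\o}(R_\o\v{k}';m),
\]
where $D$ denotes the linear part at $(\v{0};0)$. Differentiating $\Pi(\hat H-\mu)\Pi$ at the origin, where $(\hat H^0_\o(\v{0};0)-\mu)\Pi_\o(\v{0};0)=0$, yields
\[
D\hat{H}^0_{\Pi_\o}=\Pi_0\, D\hat{H}^0_\o\,\Pi_0+\Pi_0(\hat{H}^0_\o(\v{0};0)-\mu)\,D\Pi+D\Pi\,(\hat{H}^0_\o(\v{0};0)-\mu)\Pi_0,
\]
with $\Pi_0:=\Pi_\o(\v{0};0)$. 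The last two terms vanish after sandwiching between vectors of $V_\o$. So the compression of $D\hat{H}^0_{\Pi_\o}$ to $V_\o$ is exactly $\hat{h}^{(1)}_\o$ written in the basis $\{\phi_{\o;1},\phi_{\o;2}\}$.

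The main obstacle is showing that $\R_\o$ leaves $V_\o$ invariant, so that its compression is a unitary on $V_\o$. I would try first to use that $\mathrm{Ran}\,D\hat H^0_{\Pi_\o}(\v{k}';m)\subseteq V_\o+\mathrm{Ran}\,D\Pi$ with the cross terms mapping $V_\o$ into $V_\o^\perp$, and exploit the block structure. More cleanly, for $(\v{k}';m)\neq 0$ small, \eqref{eqn:bands2b} gives that $\hat{H}^0_{\Pi_\o}(\v{k}';m)$ has rank exactly two, with nonzero eigenvalues of order $|\v{k}'|+|m|$ and range $\mathrm{Ran}\,\Pi_\o(\v{k}';m)$. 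Conjugating, the range of the left side of \eqref{eqn:35} is $\R_\o^\dg\mathrm{Ran}\,\Pi_\o(\v{k}';m)$. The right side is the range of $\hat H^0_{\Pi_\o}(R_\o\v{k}';m)$ perturbed by $\mc{O}(|\v{k}'|^2+m^2)$. Since the nonzero eigenvalues are $\gtrsim C_\star(|\v{k}'|+|m|)$, a Riesz-projection (Davis-Kahan) argument gives
\[
\R_\o^\dg\,\Pi_\o(\v{k}';m)\,\R_\o=\Pi_\o(R_\o\v{k}';m)+\mc{O}(|\v{k}'|+|m|).
\]
Letting $(\v{k}';m)\to 0$ gives $\R_\o^\dg\Pi_0\R_\o=\Pi_0$, so $\R_\o V_\o=V_\o$.

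With invariance established, I define $(\mc{S}_\o)_{\r\r'}:=\bra{\phi_{\o;\r}}\R_\o\ket{\phi_{\o;\r'}}$, which is unitary on $\mbb{C}^2$. Compressing the first-order identity to $V_\o$, and using that $\R_\o$ commutes with $\Pi_0$, gives $\mc{S}_\o^\dg\hat h^{(1)}_\o(\v{k}';m)\mc{S}_\o=\hat h^{(1)}_\o(R_\o\v{k}';m)$ exactly. Restoring the quadratic remainders on both sides, using $\hat h_\o=\hat h^{(1)}_\o+\mc{O}(|\v{k}'|^2+m^2)$, then yields \eqref{eqn:38bis}.
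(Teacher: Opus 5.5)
Your proof is correct and follows essentially the same route as the paper: $\mc{S}_\o$ is the compression of $\mc{R}_\o$ to $\mathrm{Ran}\,\Pi_\o(\v{0};0)$. Its unitarity follows from showing that the constant unitary $\mc{R}_\o$ leaves this subspace invariant, because the $\mc{O}(|\v{k}'|^2+m^2)$ error in \eqref{eqn:35} is dominated by the conical gap of order $|\v{k}'|+|m|$.

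The difference is only technical. The paper keeps the full $\hat{h}_\o$, writes \eqref{eqn:35} in block form, and kills the off-diagonal blocks $\mc{R}_{3;\o},\mc{R}_{4;\o}$ by inverting $\hat{h}_\o$. You instead linearize first and use a Riesz-projection bound, which also works. Note that your cross terms in fact vanish identically, since $\Pi_0(\hat{H}^0_\o(\v{0};0)-\mu)=(\hat{H}^0_\o(\v{0};0)-\mu)\Pi_0=0$. So invariance follows even more directly by comparing the ranges of the two sides of your exact first-order identity.
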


\begin{proof}
Let us rewrite Eq. \eqref{eqn:35} in the basis $\{\phi_{\o;\r}\}_{\r=1,\dots,M}$, using Lemma \ref{lemma_KN}:

\begin{equation*}
\begin{split}
& \sum_{\r_1,\r_2=1}^2 \hat{h}_{\o;\r_1\r_2}(\v{k}';m) \bra{\phi_{\o;\r}} \mc{R}_{\o}^{\dg}U_{\o}(\v{k}';m)\ket{\phi_{\o;\r_1}}\bra{\phi_{\o;\r_2}}  U_{\o}^{\dg}(\v{k}';m) \mc{R}_{\o} \ket{\phi_{\o;\r'}}=\\
&\sum_{\r_1,\r_2=1}^2 \hat{h}_{\o;\r_1\r_2}(R_{\o}\v{k}';m) \bra{\phi_{\o;\r}} U_{\o}(R_{\o}\v{k}';m) \ket{\phi_{\o;\r_1}} \bra{\phi_{\o;\r_2}} U_{\o}^{\dg} (R_{\o}\v{k}';m) \ket{\phi_{\o;\r'}} +\mc{O}(|\v{k}'|^2+m^2),
\end{split}   
\end{equation*}

for any $\r,\r'=1,\dots,M$. Using that $U_{\o}(\v{k}';m)= \mathds{1}_M+ \mc{O}(|\v{k}'|+|m|)$, we further get:

\begin{equation}
\label{eqn:36}
\left(\begin{array}{cc}
\mc{S}^{\dg}_{\o} & \mc{R}_{4;\o}^{\dg} \\
\mc{R}_{3;\o}^{\dg} & \mc{R}_{2;\o}^{\dg}
\end{array} \right) \left( \begin{array}{cc}
\hat{h}_{\o}(\v{k}';m) &  0\\
0 & 0
\end{array}\right) \left(\begin{array}{cc}
\mc{S}_{\o} & \mc{R}_{3;\o} \\
\mc{R}_{4;\o} & \mc{R}_{2;\o} \end{array} \right)= \left( \begin{array}{cc}
\hat{h}_{\o}(R_{\o}\v{k}';m) &  0\\
0 & 0
\end{array}\right)+ \mc{O}(|\v{k}'|^2+m^2),
\end{equation}

where $(\mc{S}_{\o})_{\r\r'}:= \bra{\phi_{\o;\r}} \mc{R}_{\o} \ket{\phi_{\o;\r'}}$ with $\r,\r'=1,2$, $(\mc{R}_{2;\o})_{\r\r'}:= \bra{\phi_{\o;\r}} \mc{R}_{\o} \ket{\phi_{\o;\r'}}$ with $\r,\r'=3,\dots, M$, $(\mc{R}_{3;\o})_{\r\r'}:= \bra{\phi_{\o;\r}} \mc{R}_{\o} \ket{\phi_{\o;\r'}}$ with $\r=1,2$ and $\r'=3,\dots,M$, $(\mc{R}_{4;\o})_{\r\r'}:= \bra{\phi_{\o;\r}} \mc{R}_{\o} \ket{\phi_{\o;\r'}}$ with $\r=3,\dots, M$ and $\r'=1,2$. Eq. \eqref{eqn:36} readily implies Eq. \eqref{eqn:38}. We are left with showing that $\mc{S}_{\o}$ is unitary. By rewriting:

\begin{equation}
\label{eqn:37}
\left( \begin{array}{cc}
\hat{h}_{\o}(\v{k}';m) &  0\\
0 & 0
\end{array}\right) \left(\begin{array}{cc}
\mc{S}_{\o} & \mc{R}_{3;\o} \\
\mc{R}_{4;\o} & \mc{R}_{2;\o} \end{array} \right)= \left(\begin{array}{cc}
\mc{S}_{\o} & \mc{R}_{3;\o} \\
\mc{R}_{4;\o} & \mc{R}_{2;\o} \end{array} \right) \left( \begin{array}{cc}
\hat{h}_{\o}(R_{\o}\v{k}';m) &  0\\
0 & 0
\end{array}\right)+ \mc{O}(|\v{k}'|^2+m^2),
\end{equation}

we see that

\begin{itemize}
\item $\hat{h}_{\o}(\v{k}';m) \mc{R}_{3;\o} =\mc{O}(|\v{k}'|^2+m^2)$, from which: $\mc{R}_{3;\o}= \hat{h}_{\o}(\v{k}';m)^{-1} \mc{O}(|\v{k}'|^2+m^2)= \mc{O}(|\v{k}'|+|m|)$, hence $\mc{R}_{3;\o}=0$;
\item $\mc{R}_{4;\o}\hat{h}_{\o}(\v{k}';m) =\mc{O}(|\v{k}'|^2+m^2)$, from which: $\mc{R}_{4;\o}= \mc{O}(|\v{k}'|^2+m^2) \hat{h}_{\o}(\v{k}';m)^{-1}= \mc{O}(|\v{k}'|+|m|)$, hence $\mc{R}_{4;\o}=0$.
\end{itemize}

Therefore, the condition $\mc{R}_{\o}^{\dg} \mc{R}_{\o}= \mathds{1}_M$ implies $\mc{S}_{\o}^{\dg}\mc{S}_{\o}= \mathds{1}_2$.

\end{proof}

\begin{lemma}
\label{lemma:R_exact}
Under Assumptions \ref{assum:H0}, \ref{assum:H} and \ref{assum:ER}, then Assumption \ref{assum:R} holds true with $R_{\o}$ and $\mc{R}_{\o}$ equal to $R$ and $\mc{R}(\v{k}_{F,\o};0)$ respectively.
\end{lemma}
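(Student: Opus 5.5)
The plan is to evaluate the exact covariance \eqref{eqn:R1} at the Fermi point and expand to first order in $(\v{k}',m)$. Fix $\o$ and set $\mc{R}_\o:=\R(\v{k}_{F,\o},0)$, a constant unitary. Because $R\v{k}_{F,\o}+\v{v}=\v{k}_{F,\o}$ modulo the reciprocal lattice and $\hat H^0$ is periodic, substituting $\v{k}=\v{k}_{F,\o}+\v{k}'$ into \eqref{eqn:R1} gives
\[
\R^\dg(\v{k}_{F,\o}+\v{k}',m)\,\hat H^0_\o(\v{k}';m)\,\R(\v{k}_{F,\o}+\v{k}',m)=\hat H^0_\o(R\v{k}';m).
\]
Two consequences follow. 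First, unitary conjugation preserves spectral projectors, so the spectral projector of $\hat H^0_\o(\v{k}';m)$ onto the two semi-metallic bands is mapped to that of $\hat H^0_\o(R\v{k}';m)$. These two bands are isolated from the rest of the spectrum by \eqref{eqn:1bis}, and they are the same two band indices $n,n+1$ on both sides. Hence
\[
\R^\dg(\cdot)\,\Pi_\o(\v{k}';m)\,\R(\cdot)=\Pi_\o(R\v{k}';m),
\]
and therefore
\[
\R^\dg(\cdot)\,\hat H^0_{\Pi_\o}(\v{k}';m)\,\R(\cdot)=\hat H^0_{\Pi_\o}(R\v{k}';m)
\]
holds exactly, with the $k$-dependent unitary.

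The remaining task is to replace $\R(\v{k}_{F,\o}+\v{k}',m)$ by the constant $\mc{R}_\o$ at the cost of $\mc{O}(|\v{k}'|^2+m^2)$. The map is $\mathscr{C}^1$, so
\[
\R(\v{k}_{F,\o}+\v{k}',m)=\mc{R}_\o+E,\qquad E=\mc{O}(|\v{k}'|+|m|).
\]
On the other hand, $\hat H^0_{\Pi_\o}(\v{k}';m)=\mc{O}(|\v{k}'|+|m|)$, because its eigenvalues are $\l_\pm-\mu$, which are $\mc{O}(|\v{k}'|+|m|)$ by \eqref{eq_6a}, and the operator is Hermitian. Expanding the conjugation, every cross term contains at least one factor $E$ and at least one factor of $\hat H^0_{\Pi_\o}$, so each cross term is of second order. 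This yields \eqref{eqn:35} with $R_\o=R$ and $\mc{R}_\o=\R(\v{k}_{F,\o};0)$. Since $\a\in(0,2\pi)\setminus\{\pi\}$ by assumption, the angle restriction in Assumption \ref{assum:R} is satisfied automatically.

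The main point of care is the smallness of $\hat H^0_{\Pi_\o}$. This is exactly why the statement uses the projected Hamiltonian and not $\hat H^0-\mu$ itself: the other bands are not small, and for them the first-order error from $E$ would not be suppressed. One must also check that the projector identity picks out the same two bands on both sides. This holds because the conjugation preserves the ordered eigenvalues, so the band indices match.
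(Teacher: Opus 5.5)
Your proof is correct and follows the paper's three-step structure. You restrict \eqref{eqn:R1} to a neighbourhood of $\v{k}_{F,\o}$ to get \eqref{eqn:46}. You upgrade this to the exact covariance \eqref{eqn:51} of the projected Hamiltonian, with the $\v{k}$-dependent unitary. Finally, you freeze the unitary at $(\v{k}_{F,\o},0)$ using that both $E$ and $\hat{H}^0_{\Pi_{\o}}$ are $\mc{O}(|\v{k}'|+|m|)$; this last step is exactly the paper's.

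The difference is in how \eqref{eqn:51} is obtained. The paper splits $\mc{R}_{\o}(\v{k}';m)$ into the blocks $\sharp_{\o}(\v{k}';m)\,\mc{R}_{\o}(\v{k}';m)\,\sharp'_{\o}(R\v{k}';m)$, with $\sharp,\sharp'\in\{\Pi,\Sigma\}$. It rewrites \eqref{eqn:46} as the fixed-point relation \eqref{eqn:49} for the off-diagonal blocks. It then kills those blocks by a contraction estimate based on $\|(\hat{H}^0)^{-1}_{\Sigma_{\o}}\|\le\Delta^{-1}$ and $\|\hat{H}^0_{\Pi_{\o}}\|\le C'_{\star}(|\v{k}'|+|m|)$. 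This is a hands-on Sylvester-equation argument in the spirit of Lemma \ref{lemma_2by2}, and it needs $\hat{H}^0_{\o}-\mu$ to be invertible, i.e.\ $(\v{k}',m)\neq(\v{0},0)$.

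You instead note that the vanishing of these blocks is exactly the intertwining $\Pi_{\o}(\v{k}';m)\,\mc{R}=\mc{R}\,\Pi_{\o}(R\v{k}';m)$. That follows at once by conjugating the resolvent inside the Riesz formula of Lemma \ref{lem:smoothP}. Your route is shorter, basis-free, uniform down to $(\v{k}',m)=(\v{0},0)$, and needs no contraction estimate. The paper's route is a self-contained block-matrix computation with explicit constants.

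One point to state more carefully is the isolation of the two bands. Inequality \eqref{eqn:1bis} alone does not separate $\l_{\pm}$ from the other bands. You need $\max\{|\v{k}'|,|m|\}\le\kappa$, so that $|\l_{\pm}-\mu|\le\Delta/2$ by \eqref{eq_6a}. Then the same contour $|z-\mu|=\tfrac34\Delta$ defines $\Pi_{\o}$ both at $\v{k}'$ and at $R\v{k}'$, since $|R\v{k}'|=|\v{k}'|$. This also makes your remark about matching band indices automatic.
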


\begin{proof} 
Letting $\mc{R}_{\o}(\v{k}';m):= \mc{R}(\v{k}_{F,\o}+ \v{k}';m)$ and using the fact that for every $\o\in\{1,...,N\}$, $R\v{k}_{F,\o}+ \v{v}= \v{k}_{F,\o}$ mod. $\Span_{\mbb{Z}}\{\v{g}_1,\v{g}_2\}$, we find that
\begin{equation}
\label{eqn:46}
\mc{R}^{\dg}_{\o}(\v{k}';m) \big( \hat{H}^0_{\o}(\v{k}';m)  -\mu\mathds{1}_M \big) \mc{R}_{\o}(\v{k}';m)= \hat{H}^0_{\o}(R\v{k}';m) -\mu\mathds{1}_M, \;\; \forall\o\in\{1,...,N\}.
\end{equation}
We want to show that Eq. \eqref{eqn:46} keeps holding with $\hat{H}^0_{\o}$ replaced by $\hat{H}^0_{\Pi_{\o}}$, the latter defined in Eq. \eqref{eqn:R2}, namely:
\begin{equation}
\label{eqn:51}
\mc{R}^{\dg}_{\o}(\v{k}';m) \hat{H}^0_{\Pi_\o}(\v{k}';m) \mc{R}_{\o}(\v{k}';m)= \hat{H}^0_{\Pi_\o}(R\v{k}';m), \;\; \forall\o\in\{1,...,N\},
\end{equation}
for $\v{k}'$ and $m$ small enough (say $\max\{|\v{k}'|,|m|\} \le \d'$ for some $\d'>0$). Taking for granted Eq. \eqref{eqn:47}, and using the fact that $\hat{H}^0_{\Pi_{\o}}(\v{k}';m)= \mc{O}(|\v{k}'|+|m|)$ and $\mc{R}_{\o}(\v{k}';m)=\mc{R}_{\o}(\v{0};0)+ \mc{O}(|\v{k}'|+|m|)$, we see that Assumption \ref{assum:R} follows immediately:
\[\mc{R}^{\dg}_{\o}(\v{0};0) \hat{H}^0_{\Pi_\o}(\v{k}';m) \mc{R}_{\o}(\v{0};0)= \hat{H}^0_{\Pi_\o}(R\v{k}';m) + \mc{O}(|\v{k}'|^2+m^2), \;\; \forall\o\in\{1,...,N\}.\]
We shall therefore prove Eq. \eqref{eqn:51}. Recalling that due to Assumption \ref{assum:H}, $\hat{H}^0_{\o}(\v{k}';m)- \mu\mathds{1}_M$ is invertible for any $(\v{k}',m)\ne(\v{0},0)$, we can rewrite Eq. \eqref{eqn:46} as:
\begin{equation}
\label{eqn:48}
\mc{R}_{\o}(\v{k}';m)= \big(\hat{H}^0_{\o}(\v{k}';m)- \mu\mathds{1}_M\big)^{-1}\mc{R}_{\o}(\v{k}';m) \big(\hat{H}^0_{\o}(R\v{k}';m)- \mu\mathds{1}_M\big).
\end{equation}
If we introduce $\Sigma_{\o}(\v{k}';m):= \mathds{1}_M - \Pi_{\o}(\v{k}';m)$ and we let, for any $\sharp,\sharp'\in\{\Pi,\Sigma\}$,
\[\begin{split} &\mc{R}_{\sharp_{\o},\sharp'_{\o}}(\v{k}';m):= \sharp_{\o}(\v{k}';m) \mc{R}_{\o}(\v{k}';m) \sharp'_{\o}(R\v{k}';m), \\
&(\hat{H}^0)^p_{\sharp_{\o}}(\v{k}';m):= \sharp_{\o}(\v{k}';m) \big(\hat{H}^0_{\o}(\v{k}';m) -\mu\mathds{1}_M\big)^p \sharp_{\o}(\v{k}';m), \; p\in\mbb{Z},
\end{split}\]
Eq. \eqref{eqn:48} gives:
\begin{equation}
\label{eqn:49}
\mc{R}_{\Sigma_{\o},\Pi_{\o}}(\v{k}';m)= \big(\hat{H}^0\big)^{-1}_{\Sigma_{\o}}(\v{k}';m)\mc{R}_{\Sigma_{\o},\Pi_{\o}}(\v{k}';m) \hat{H}^0_{\Pi_{\o}}(R\v{k}';m).\end{equation}
Note that after Eq. \eqref{eqn:1bis} we have that $\big\|\big(\hat{H}^0_{\o}\big)^{-1}_{\Sigma_{\o}}(\v{k}';m)\big\|\le \Delta^{-1}$ and in view of Eq. \eqref{eq_6a}, $\big\|\hat{H}^0_{\Pi_{\o}}(\v{k}';m)\big\|\le C'_{\star} (|\v{k}'|+|m|)$ for $\max\{|\v{k}'|,|m|\}\le a$. Therefore, Eq. \eqref{eqn:49}, regarded as a linear equation for the matrix elements of $\mc{R}_{\Sigma_{\o},\Pi_{\o}}(\v{k}';m)$, implies, for $|\v{k}'|$ and $|m|$ small enough, that $\mc{R}_{\Sigma_{\o},\Pi_{\o}}(\v{k}';m)=0$. Besides, reverting Eq. \eqref{eqn:48}, we find the analogue of Eq. \eqref{eqn:49}:
\[\mc{R}_{\Pi_{\o},\Sigma_{\o}}(\v{k}';m)= \hat{H}^0_{\Pi_{\o}}(\v{k}';m)\mc{R}_{\Pi_{\o},\Sigma_{\o}}(\v{k}';m) \big(\hat{H}^0\big)^{-1}_{\Sigma_{\o}}(R\v{k}';m),\]
which analogously implies $\mc{R}_{\Pi_{\o},\Sigma_{\o}}(\v{k}';m)=0$ for $\v{k}'$ and $m$ small enough. Therefore, since $\Pi_{\o}+\Sigma_{\o}=\mathds{1}_M$, we find that 
\begin{equation}
\label{eqn:50}
\mc{R}_{\o}(\v{k}';m)= \mc{R}_{\Pi_{\o},\Pi_{\o}}(\v{k}';m)+ \mc{R}_{\Sigma_{\o},\Sigma_{\o}}(\v{k}';m).
\end{equation}
At this point we go back to Eq. \eqref{eqn:46}, and we find:
\begin{equation}
\begin{split}
&\hat{H}^0_{\Pi_{\o}}(R\v{k}';m)=\\
&\Pi_{\o}(R\v{k}';m)\mc{R}^{\dg}_{\o}(\v{k}';m) \big( \hat{H}^0_{\o}(\v{k}';m) -\mu\mathds{1}_M\big) \mc{R}_{\o}(\v{k}';m) \Pi_{\o}(R\v{k}';m)= \\
&\big(\mc{R}_{\Pi_{\o},\Pi_{\o}}(\v{k}';m)\big)^{\dg} \big( \hat{H}^0_{\o}(\v{k}';m) -\mu \mathds{1}_M\big) \mc{R}_{\Pi_{\o},\Pi_{\o}}(\v{k}';m)= \\
&\big(\mc{R}_{\Pi_{\o},\Pi_{\o}}(\v{k}';m)\big)^{\dg} \hat{H}^0_{\Pi_{\o}}(\v{k}';m) \mc{R}_{\Pi_{\o},\Pi_{\o}}(\v{k}';m)= \mc{R}_{\o}^{\dg}(\v{k}';m) \hat{H}^0_{\Pi_{\o}}(\v{k}';m) \mc{R}_{\o}(\v{k}';m),
\end{split}
\end{equation}
where for the last equality we used Eq. \eqref{eqn:50} and the fact that $\hat{H}^0_{\Pi_{\o}}(\v{k}';m)\mc{R}_{\Sigma_{\o},\Sigma_{\o}}(\v{k}';m)=0$, thus proving Eq. \eqref{eqn:51}.
\end{proof}

\section{Technical results}
\label{app:tech}

This appendix is devoted to collecting the proof of several claims stated in Section \ref{sect:proof} which serve as building blocks for the proof of the main theorem.

\subsection{The conical splitting}
\label{app:1_conical}

Here we focus on the proof of two lemmas related to the \emph{conical splitting}, namely the block-diagonalization induced by the Kato-Nagy unitary.

\begin{proof}[Proof of Lemma \ref{lem:smoothP}.]
In view of Remark \ref{rmk:bands}.\ref{itt:1}-\ref{itt:4}, for every $\o\in\{1,\dots,N\}$, $j\ne n,n+1$ and $\max\{|\v{k}'|,|m|\}\le a$, we have that

\begin{align*}
&\big|\l_{\pm}(\v{k}_{F,\o}+\v{k}';m)- \mu\big|\le C'_{\star}(|\v{k}'|+|m|),\\
&\big|\lambda_j(\v{k}_{F,\o}+\v{k}';m)-\mu\big|\ge \Delta.
\end{align*}

Moreover, setting $\kappa:= \min\big\{\frac{\Delta}{4C'_{\star}}, a \big\}$, we see that by restricting $\max\{|\v{k}'|,|m|\}\le \kappa$, then $\big|\l_{\pm}(\v{k}_{F,\o}+\v{k}';m)- \mu\big|\le \frac{\Delta}{2}$. Therefore, considering the complex contour $\gamma:=\{z\in\mbb{C}:\; |z-\mu|=\frac{3}{4}\Delta\}$, enclosing only  $\l_\pm(\v{k}_{F,\o}+\v{k}';m)$ and having a positive distance from all the eigenvalues $\lambda_j(\v{k}_{F,\o}+\v{k}';m)$ with $j=1,\dots,M$, we can employ the Riesz formula as:  
\[
\Pi(\v{k}_{F,\o}+\v{k}';m)=\frac{i}{2\pi}\oint_{\gamma}dz\,\big(\hat{H}^0(\v{k}_{F,\o}+\v{k}';m)-z\big)^{-1}.
\]
Thus, from the analyticity of the map $B_{\kappa}(\v{0})\times[-\kappa,\kappa]\ni (\v{k}';m)\mapsto\hat{H}^0(\v{k}_F^{\o}+\v{k}';m)$, we deduce the analyticity of $\Pi$.
\end{proof}

\begin{proof}[Proof of Lemma \ref{lemma_KN}.]
Obviously, we have that 
\begin{equation}
\label{eqn:sumH}
\hat{H}^0_\o(\v{k}';m)-\mu=\big(\hat{H}^0_\o(\v{k}';m)-\mu\big)\Pi_\o(\v{k}';m)+\big(\hat{H}^0_\o(\v{k}';m)-\mu\big)\big(\mathds{1}_M- \Pi_\o(\v{k}';m)\big).
\end{equation}
Now, we unitarily conjugate the first summand in the r.h.s by using the Kato-Nagy unitary:
\begin{align*}
&U_{\o}^{\dg}(\v{k}';m)(\hat{H}^0_\o(\v{k}';m)-\mu)\Pi_\o(\v{k}';m)U_{\o}(\v{k}';m)\\
&=U_{\o}^{\dg}(\v{k}';m)\Pi_\o(\v{k}';m)U_{\o}(\v{k}';m)U_{\o}^{\dg}(\v{k}';m)(\hat{H}^0_\o(\v{k}';m)-\mu)U_{\o}(\v{k}';m)\cdot\\
&\quad\cdot U_{\o}^{\dg}(\v{k}';m)\Pi_\o(\v{k}';m)U_{\o}(\v{k}';m)\\
&=\Pi_\o(\v{0};0)U_{\o}^{\dg}(\v{k}';m)(\hat{H}^0_\o(\v{k}';m)-\mu)U_{\o}(\v{k}';m)\Pi_\o(\v{0};0)\\
&=\sum_{1\leq \r,\r'\leq 2} \ket{\phi_{\o,\r}} \bra{\phi_{\o,\r}}U_{\o}^{\dg}(\v{k}';m)(\hat{H}^0_\o(\v{k}';m)-\mu)U_{\o}(\v{k}';m)\ket{\phi_{\o,\r'}}\bra{\phi_{\o,\r'}}\\
&=\sum_{1\leq \r,\r'\leq 2} \hat{h}_{\o;\r\r'}(\v{k}';m)\ket{\phi_{\o,\r}}\bra{\phi_{\o,\r'}},
\end{align*}
where we have used that $U_{\o}(\v{k}';m)U_{\o}^{\dg}(\v{k}';m)=\id_M$ and the intertwining property of $U_{\o}(\v{k}';m)$, Eq. \eqref{eqn:2}. From the analyticity of the Kato-Nagy operator follows the analyticity of the $2\times 2$ matrix $\hat{h}_{\o}$.
Analogously for the second summand in the r.h.s.\ of Eq. \eqref{eqn:sumH}.
\end{proof}

\subsection{The remainders to the relativistic conductivity}
\label{app:1_rem}

This part of the appendix focuses on the steps allowing to express $\mf{D}\s_{ij}(\e)$ in terms of its \emph{relativistic} counterpart, namely $\mf{D}\s_{ij}^{(1)}(\e)$.

\begin{proof}[Proof of Lemma \ref{lemma:K_reg}.]
By introducing $\dist{\bs{k}}_{\mbb{R}\times\mbb{B}}:=\sqrt{k_0^2 + \dist{\v{k}}_{\mbb{B}}^2}$, with $\dist{\v{k}}_{\mbb{B}}$ defined after Eq. \eqref{def:Brillouin}, we can write the two-point function as:

\begin{equation}
\label{eqn:A1}
\begin{split}
\hat{S}(\bs{k};m)= \sum_{\o=1}^N \frac{\chi(\d^{-1}\dist{\bs{k}-\bs{k}_{F,\o}}_{\mbb{R}\times\mbb{B}})}{-ik_0+ \hat{H}^0(\v{k};m)-\mu}+ \underbrace{\frac{1- \sum_{\o=1}^N \chi(\d^{-1}\dist{\bs{k}-\bs{k}_{F,\o}}_{\mbb{R}\times\mbb{B}})}{-ik_0+ \hat{H}^0(\v{k};m)-\mu}}_{=:\hat{S}_C(\bs{k};m)},
\end{split}
\end{equation}

where $\bs{k}_{F,\o}\equiv (0,\v{k}_{F,\o})$. Furthermore, in view of Lemma \ref{lemma_KN}, we can write:

\begin{equation}
\frac{\chi(\d^{-1}\dist{\bs{k}-\bs{k}_{F,\o}}_{\mbb{R}\times\mbb{B}})}{-ik_0+ \hat{H}^0(\v{k};m)-\mu} =   \hat{S}_{A;\o}(\bs{k}- \bs{k}_{F,\o};m) + \hat{S}_{B;\o}(\bs{k}-\bs{k}_{F,\o};m),
\end{equation}

where:

\begin{equation}
\label{eqn:A5a}
\begin{split}
\hat{S}_{A;\o}(\bs{k}';m)&= \chi(\d^{-1}\dist{\bs{k}'}_{\mbb{R}\times\mbb{B}}) U_{\o}(\v{k}';m) \sum_{1\le \r,\r'\le 2} \big( -ik_0 + \hat{h}_{\o}(\v{k}';m) \big)^{-1}_ {\r\r'}\ket{\phi_{\o;\r}} \bra{\phi_{\o;\r'}}   U^{\dg}_{\o}(\v{k}';m),  \\ &\equiv\chi(\d^{-1}\dist{\bs{k}'}_{\mbb{R}\times\mbb{B}})  \sum_{1\le \r,\r'\le 2}  \hat{s}_{\o;\r\r'}(\bs{k}';m)\ket{\phi_{\o;\r}(\v{k}';m)} \bra{\phi_{\o;\r'}(\v{k}';m)}   
\end{split}
\end{equation}

\begin{equation}
\label{eqn:A5b}
\begin{split}
\hat{S}_{B;\o}(\bs{k}';m)&= \chi(\d^{-1}\dist{\bs{k}'}_{\mbb{R}\times\mbb{B}}) U_{\o}(\v{k}';m) \sum_{3\le \r,\r'\le M} \!\!\big( -ik_0 + \hat{r}_{\o}(\v{k}';m) \big)^{-1}_{\r\r'} \ket{\phi_{\o;\r}} \bra{\phi_{\o;\r'}}   U^{\dg}_{\o}(\v{k}';m).
\end{split}
\end{equation}

Therefore, by writing $\hat{S}(\bs{k};m)= \sum_{\o=1}^N\big( \hat{S}_{A;\o}(\bs{k}-\bs{k}_{F,\o};m)+ \hat{S}_{B;\o}(\bs{k}-\bs{k}_{F,\o};m) \big)+ \hat{S}_{C}(\bs{k};m)$, Eq. \eqref{eq_3} takes the form:

\begin{equation}
\label{eqn:A2}
\begin{split}
&\hat{K}_{ij}(p_0;m)=\\
& -\sum_{\o=1}^N \int_{\mbb{R}\times\mbb{B}} \frac{d\bs{k}}{(2\pi)^3} \Tr_{\mbb{C}^M} \left\{ \Gamma_i(\v{k},\v{0};m) \hat{S}_{A;\o}(\bs{k}-\bs{k}_{F,\o};m) \Gamma_j(\v{k},\v{0};m) \hat{S}_{A;\o}(\bs{k}- \bs{k}_{F,\o}+\bs{p}_0;m) \right\}\\
&+ \hat{K}_{ij}^{(\mathrm{reg})}(p_0;m),
\end{split}
\end{equation}

where, letting $\hat{S}_{BC}(\bs{k};m):= \sum_{\o=1}^N \hat{S}_{B;\o}(\bs{k}-\bs{k}_{F,\o};m) + \hat{S}_{C}(\bs{k};m)$, 

\begin{equation}
\label{eqn:A3}
\begin{split}
& \hat{K}_{ij}^{(\mathrm{reg})}(p_0;m)=\\
&- \int_{\mbb{R}\times\mbb{B}} \frac{d\bs{k}}{(2\pi)^3} \Tr_{\mbb{C}^M} \Big\{ \Gamma_i(\v{k};\v{0};m) \hat{S}_{BC}(\bs{k};m) \Gamma_j(\v{k};\v{0};m) \hat{S}_{BC}(\bs{k}+\bs{p}_0;m) \Big\}+\\
&- \sum_{\o=1}^N \int_{\mbb{R}\times\mbb{B}} \frac{d\bs{k}}{(2\pi)^3} \Tr_{\mbb{C}^M} \Big\{ \Gamma_i(\v{k};\v{0};m) \hat{S}_{A;\o}(\bs{k}- \bs{k}_{F,\o};m) \Gamma_j(\v{k};\v{0};m) \hat{S}_{BC}(\bs{k}+\bs{p}_0;m) \Big\}+\\
&- \sum_{\o=1}^N \int_{\mbb{R}\times\mbb{B}} \frac{d\bs{k}}{(2\pi)^3} \Tr_{\mbb{C}^M} \Big\{ \Gamma_i(\v{k};\v{0};m)  \hat{S}_{BC}(\bs{k};m) \Gamma_j(\v{k};\v{0};m) \hat{S}_{A;\o}(\bs{k}+ \bs{p}_0- \bs{k}_{F,\o};m) \Big\}.
\end{split}
\end{equation}

In deriving Eq. \eqref{eqn:A2} we used the property that, by the definition of $\d$ (see Subsection \ref{ssect:twobands}), the product 
$$\chi(\d^{-1}\dist{\bs{k}-\bs{k}_{F,\o}}_{\mbb{R}\times\mbb{B}}) \chi(\d^{-1}\dist{\bs{k}-\bs{k}_{F,\o'}}_{\mbb{R}\times\mbb{B}})$$ 
is identically zero unless $\o=\o'$. Notice that the second line of Eq. \eqref{eqn:A2} coincides with $\hat{K}_{ij}^{(\mathrm{sing})}$ defined in Eq. \eqref{eqn:12}. Indeed, using Eq. \eqref{eqn:A5a} and recalling the definition of the vertex functions $\gamma_{\o;i}$, Eq. \eqref{eqn:gamma}, we have that

\begin{equation*}
\begin{split}
&\int_{\mbb{R}\times\mbb{B}} \frac{d\bs{k}}{(2\pi)^3} \Tr_{\mbb{C}^M} \left\{ \Gamma_i(\v{k},\v{0};m) \hat{S}_{A;\o}(\bs{k}-\bs{k}_{F,\o};m) \Gamma_j(\v{k},\v{0};m) \hat{S}_{A;\o}(\bs{k}- \bs{k}_{F,\o}+\bs{p}_0;m) \right\}=\\
&\int_{\mbb{R}\times\mbb{B}} \frac{d\bs{k}'}{(2\pi)^3} \Tr_{\mbb{C}^M} \left\{ \Gamma_i(\v{k}'+\v{k}_{F,\o},\v{0};m) \hat{S}_{A;\o}(\bs{k}';m) \Gamma_j(\v{k}'+\v{k}_{F,\o},\v{0};m) \hat{S}_{A;\o}(\bs{k}'+\bs{p}_0;m) \right\}=\\
&\int_{\mbb{R}\times\mbb{B}} \frac{d\bs{k}}{(2\pi)^3} \Tr_{\mbb{C}^2}\Big\{\gamma_{\o;i}(\v{k}';m) \hat{s}_{\o}(\bs{k}';m) \gamma_{\o;j}(\v{k}';m) \hat{s}_{\o}(\bs{k}'+ \bs{p}_0;m) \Big\}\cdot\\
& \qquad\qquad \cdot\chi(\d^{-1}\dist{\bs{k}'}_{\mbb{R}\times\mbb{B}}) \chi(\d^{-1}\dist{\bs{k}' + \bs{p}_0}_{\mbb{R}\times\mbb{B}})=\\
& \int_{\mbb{R}\times\mbb{B}} \frac{d\bs{k}'}{(2\pi)^3} \Tr_{\mbb{C}^2}\Big\{\gamma_{\o;i}(\v{k}';m) \hat{s}_{\o}(\bs{k}';m) \gamma_{\o;j}(\v{k}';m) \hat{s}_{\o}(\bs{k}'+ \bs{p}_0;m) \Big\} \chi(\d^{-1}|\bs{k}'|) \chi(\d^{-1}|\bs{k}' +\bs{p}_0|),
\end{split}
\end{equation*}

where we performed the change of variable of integration $\bs{k}\mapsto \bs{k}'= \bs{k}-\bs{k}_{F,\o}$ and used the fact that, due to the definition of $\d$, on $\mbb{R}\times\mbb{B}$ it holds true that $\chi(\d^{-1}\dist{\,\cdot\,}_{\mbb{R}\times\mbb{B}})=\chi(\d^{-1}|\,\cdot\,|)$ (recall Footnote \ref{footnote:B}).

We shall now discuss the regularity properties of $\hat{K}_{ij}^{(\mathrm{reg})}$. It is first convenient to perform the change of variables $\bs{k}\mapsto \bs{k}-\bs{p}_0$ within the integral in the last line of Eq. \eqref{eqn:A3}, so that

\[
\begin{split}
& \hat{K}_{ij}^{(\mathrm{reg})}(p_0;m)=\\
&- \int_{\mbb{R}\times\mbb{B}} \frac{d\bs{k}}{(2\pi)^3} \Tr_{\mbb{C}^M} \Big\{ \Gamma_i(\v{k};\v{0};m) \hat{S}_{BC}(\bs{k};m) \Gamma_j(\v{k};\v{0};m) \hat{S}_{BC}(\bs{k}+\bs{p}_0;m) \Big\}+\\
&- \sum_{\o=1}^N \int_{\mbb{R}\times\mbb{B}} \frac{d\bs{k}}{(2\pi)^3} \Tr_{\mbb{C}^M} \Big\{ \Gamma_i(\v{k};\v{0};m) \hat{S}_{A;\o}(\bs{k}- \bs{k}_{F,\o};m) \Gamma_j(\v{k};\v{0};m) \hat{S}_{BC}(\bs{k}+\bs{p}_0;m) \Big\}+\\
&- \sum_{\o=1}^N \int_{\mbb{R}\times\mbb{B}} \frac{d\bs{k}}{(2\pi)^3} \Tr_{\mbb{C}^M} \Big\{ \Gamma_j(\v{k};\v{0};m) \hat{S}_{A;\o}(\bs{k}- \bs{k}_{F,\o};m) \Gamma_i(\v{k};\v{0};m) \hat{S}_{BC}(\bs{k}-\bs{p}_0;m) \Big\}.
\end{split}
\]

Using the fact that, in view of Remark \ref{rmk:bands}.\ref{itt:1}-\ref{itt:3}, $\|\hat{S}_{BC}(\bs{k};m)\|\le C(1+|k_0|)^{-1}$, $\\\|\de_{k_0}\hat{S}_{BC}(\bs{k};m)\|\le C(1+|k_0|^2)^{-1}$ and $\|\hat{S}_{A;\o}(\bs{k}';m)\|\le C\dist{\bs{k}'}_{\mbb{R}\times\mbb{B}}^{-1}$, for a suitable constant $C>0$, we see that $\hat{K}_{ij}^{(\mathrm{reg})}$ is continuously differentiable w.r.t. $p_0$, with:

\begin{equation}
\begin{split}
& \de_{p_0}\hat{K}_{ij}^{(\mathrm{reg})}(p_0;m)=\\
&- \int_{\mbb{R}\times\mbb{B}} \frac{d\bs{k}}{(2\pi)^3} \Tr_{\mbb{C}^M} \Big\{ \Gamma_i(\v{k};\v{0};m) \hat{S}_{BC}(\bs{k};m) \Gamma_j(\v{k};\v{0};m) \de_{k_0}\hat{S}_{BC}(\bs{k}+\bs{p}_0;m) \Big\}+\\
&- \sum_{\o=1}^N \int_{\mbb{R}\times\mbb{B}} \frac{d\bs{k}}{(2\pi)^3} \Tr_{\mbb{C}^M} \Big\{ \Gamma_i(\v{k};\v{0};m) \hat{S}_{A;\o}(\bs{k}- \bs{k}_{F,\o};m) \Gamma_j(\v{k};\v{0};m) \de_{k_0}\hat{S}_{BC}(\bs{k}+\bs{p}_0;m) \Big\}+\\
&+ \sum_{\o=1}^N \int_{\mbb{R}\times\mbb{B}} \frac{d\bs{k}}{(2\pi)^3} \Tr_{\mbb{C}^M} \Big\{ \Gamma_j(\v{k};\v{0};m) \hat{S}_{A;\o}(\bs{k}- \bs{k}_{F,\o};m) \Gamma_i(\v{k};\v{0};m) \de_{k_0}\hat{S}_{BC}(\bs{k}-\bs{p}_0;m) \Big\}.
\end{split}
\end{equation}

Now, using the fact that the vertex functions $\Gamma_j(\v{k};\v{0};m)$ are analytic, and that $\\\|\de_m\de_{k_0} \hat{S}_{BC}(\bs{k};m)\|\le C'(1+|k_0|^3)^{-1}$, $\|\de_m\hat{S}_{A;\o}(\bs{k}';m)\|\le C'\dist{\bs{k}'}_{\mbb{R}\times\mbb{B}}^{-2}$, for a suitable $C'>0$, we find that $\de_{p_0}\hat{K}_{ij}^{(\mathrm{reg})}$ is $\mathscr{C}^1$ w.r.t. $m$. 
\end{proof}

\begin{lemma}
\label{lemma:linear}
Letting $\s^{(1)}_{ij;\o}$ and $\s_{ij}^{(\mathrm{sing})}$ as in Eq. \eqref{eqn:14} and Eq. \eqref{def:sigma_sing} respectively, one has that for every $0<\e<\e_0$ (with $\e_0$ defined at the beginning of Subsection \ref{ssect:linear}), the difference:

\begin{equation}
\label{eqn:A6}
\mf{D}\s^{(\mathrm{sing})}_{ij}(p_0;\e) - \sum_{\o=1}^N \mf{D}\s^{(1)}_{ij;\o}(p_0;\e)
\end{equation}

admits a finite limit as $p_0\to 0^-$, which is bounded in absolute value by $C_0\sqrt{\e}$, for a suitable constant $C_0>0$.
\end{lemma}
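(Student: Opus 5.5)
We compare the integrand of $\s^{(\mathrm{sing})}_{ij}(p_0;m)$, Eq. \eqref{def:sigma_sing}, with that of $\s^{(1)}_{ij;\o}(p_0;m)$, Eq. \eqref{eqn:14}, valley by valley. We split the integration domain of \eqref{eqn:12} into the ellipsoid $\mc{D}_{\e,\o}$ and its complement. On the complement the integrand is regular in $p_0$ near $0$. The cut-off functions $\chi$ equal $1$ on $\mc{D}_{\e,\o}\subseteq B_{\d/2}(\bs 0)$ for $|p_0|$ small, so inside the ellipsoid only the replacement of $\gamma_{\o}(\v{k}';m),\hat s_{\o}$ by $\gamma_{\o}(\v 0;0),\hat s^{(1)}_{\o}$ matters.

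\textbf{Step 1: the outer region.} On $\{|\bs k'|\le\d\}\setminus\mc{D}_{\e,\o}$ we have $|\bs k'|\ge c\sqrt\e$, and Eq. \eqref{eq_6a} gives $\|\hat s_{\o}(\bs k';m)\|\le C|\bs k'|^{-1}$ and $\|\de_{k_0}\hat s_{\o}\|\le C|\bs k'|^{-2}$, uniformly in $|m|\le\d$. For $|p_0|\le c\sqrt\e/4$, the difference quotient in $p_0$ is therefore dominated by $C|\bs k'|^{-3}$, which is integrable there (logarithmically in $\e$). Hence the limit $p_0\to0^-$ exists and equals the $p_0$-derivative of the outer contribution. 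Call this quantity $F^{\mathrm{out}}_{\o}(m)$, the outer contribution to $\s_{ij}$.

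The key point is to show $F^{\mathrm{out}}_\o(0)-\frac12F^{\mathrm{out}}_\o(\e)-\frac12F^{\mathrm{out}}_\o(-\e)=\mc O(\sqrt\e)$. We would not bound each term separately, since each is of order $\log\e$. Instead we write the symmetric second difference as $\int_0^1(1-t)\,\big[\de_m^2F^{\mathrm{out}}_\o(t\e)+\de_m^2F^{\mathrm{out}}_\o(-t\e)\big]\,dt\cdot\frac{\e^2}{2}$. The derivative $\de_m^2$ of the integrand is bounded by $C|\bs k'|^{-5}$, using the analyticity of $\hat h_\o$ and $\gamma_\o$ (Lemma \ref{lemma_KN}) together with the resolvent bounds. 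Integrating over $|\bs k'|\ge c\sqrt\e$ gives $\mc O(\e^{-1})$, so the second difference is $\mc O(\e)$, which is $\le C\sqrt\e$. The piece with $\chi\ne1$ near $|\bs k'|\sim\d$ is smooth and handled in the same way.

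\textbf{Step 2: the inner region.} Inside $\mc{D}_{\e,\o}$ we Taylor expand. Lemma \ref{lemma_KN} gives $\hat h_\o(\v k';m)=\hat h^{(1)}_\o(\v k';m)+\mc O(|\v k'|^2+m^2)$, and Proposition \ref{prop:linear} gives $\gamma_{\o;i}(\v k';m)=\gamma_{\o;i}(\v0;0)+\mc O(|\v k'|+|m|)$. Together with the lower bound $\|\hat s^{(1)}_\o(\bs k';m)^{-1}v\|\ge c(|\bs k'|+|m|)|v|$, which follows from \eqref{eqn:bands2b}, the resolvent identity yields
\[
\hat s_\o-\hat s^{(1)}_\o=\mc O\big((|\bs k'|+|m|)^{0}\big),\qquad \de_{k_0}(\hat s_\o-\hat s^{(1)}_\o)=\mc O\big((|\bs k'|+|m|)^{-1}\big).
\]
Each difference integrand, after taking the $p_0$-difference quotient (legitimate for $m\ne0$, and for $m=0$ by dominated convergence with a majorant of order $|\bs k'|^{-2}$), is thus bounded by $C|\bs k'|^{-2}$. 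Its integral over $\mc D_{\e,\o}$ is $\mc O(\sqrt\e)$, uniformly for $m\in\{0,\pm\e\}$.

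\textbf{Main obstacle.} The delicate point is the existence of the $p_0\to0^-$ limit at $m=0$ for the difference, since the individual terms are only conditionally convergent. For the first-order difference terms, which carry one factor $\de_{k_0}\hat s$ of order $|\bs k'+\bs p_0|^{-2}$ and one bounded factor, we would bound the difference quotient by $C|\bs k'|^{-1}|\bs k'+\bs p_0|^{-1}$ times a factor of order $1$. This majorant is uniformly integrable on the three-dimensional ball, so Vitali's theorem justifies passing to the limit. Collecting Step 1 and Step 2 gives the lemma.
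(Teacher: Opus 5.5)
Your proposal is correct and uses the same decomposition as the paper: an outer region $(\mc D_{\e,\o})^c$ where only the $m$-dependence of $\hat K^{(\mathrm{sing})}$ matters, and the inner terms at $m=\pm\e$ and $m=0$ (the paper's $I_2,I_3$), each bounded by $C|\bs k'|^{-2}$ and integrated over $\mc D_{\e,\o}$ to give $\mc O(\sqrt\e)$. The differences are cosmetic. On the outer region the paper writes $F(0)-F(\pm\e)=-\int_0^{\e}\de_m F$ for each sign with a $C|\bs k'|^{-4}$ bound, which already yields $\mc O(\sqrt\e)$, so your symmetric second-order remainder works but is not needed. At $m=0$ the paper gets a $p_0$-uniform majorant $3C_3|\bs k'|^{-2}$ from $|\v k'|\le|\bs k'+\bs p_0|$, so plain dominated convergence replaces your Vitali argument.
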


\begin{proof}
For $m\in\{0,\pm\e\}$ we write:

\begin{align}
& \s^{(\mathrm{sing})}_{ij}(p_0;m)= \sum_{\o=1}^N \int_{\mbb{R}^3} \frac{d\bs{k}'}{(2\pi)^3}\, \mf{d}_0\mc{F}^{(\mathrm{sing})}_{ij;\o}(\bs{k}';p_0;m),\\
& \s_{ij;\o}^{(1)}(p_0;m)=  \sum_{\o=1}^N \int_{\mc{D}_{\e,\o}} \frac{d\bs{k}'}{(2\pi)^3}\, \mf{d}_0\mc{F}^{(1)}_{ij;\o}(\bs{k}';p_0;m),
\end{align}

with (hereafter in this proof we denote $\Tr_{\mbb{C}^2}$ simply by $\Tr$):

\begin{align}
& \mc{F}^{(\mathrm{sing})}_{ij;\o}(\bs{k}';p_0;m):= -\chi(\tfrac{|\bs{k}'|}{\d}) \chi(\tfrac{|\bs{k}'+\bs{p}_0|}{\d}) \Tr\Big\{ \gamma_{\o;i}(\v{k}';m) \hat{s}_{\o}(\bs{k}';m) \gamma_{\o;j}(\v{k}';m) \hat{s}_{\o}(\bs{k}'+\bs{p}_0;m) \Big\},\\
& \mc{F}^{(1)}_{ij;\o}(\bs{k}';p_0;m):= -\Tr\Big\{ \gamma_{\o;i}(\v{0};0) \hat{s}^{(1)}_{\o}(\bs{k}';m) \gamma_{\o;j}(\v{0};0) \hat{s}^{(1)}_{\o}(\bs{k}'+\bs{p}_0;m) \Big\},
\end{align}

$\bs{p}_0\equiv (p_0,\v{0})$ and $\mf{d}_0$ the \emph{discrete-derivative operator} w.r.t. $p_0$, namely $\mf{d}_0\mc{F}(p_0)\equiv \frac{1}{p_0}\big(\mc{F}(p_0)- \mc{F}(0) \big)$. Given these definitions, we have that

\begin{equation}
\label{eqn:A7}
\begin{split}
& \mf{D}\s^{(\mathrm{sing})}_{ij}(p_0;\e) - \sum_{\o=1}^N \mf{D}\s^{(1)}_{ij;\o}(p_0;\e)=\\
&+\frac{1}{2}\sum_{\o=1}^N \sum_{\a=\pm} \int_{(\mc{D}_{\e,\o})^c} \frac{d\bs{k}'}{(2\pi)^3} \mf{d}_0 \Big( \mc{F}^{(\mathrm{sing})}_{ij;\o}(\bs{k}';p_0;0)- \mc{F}^{(\mathrm{sing})}_{ij;\o}(\bs{k}';p_0;\a \e) \Big) \quad \Bigg\}:= I_1(p_0;\e)\\
&-\frac{1}{2}\sum_{\o=1}^N \sum_{\a=\pm} \int_{\mc{D}_{\e,\o}} \frac{d\bs{k}'}{(2\pi)^3} \mf{d}_0 \Big( \mc{F}^{(\mathrm{sing})}_{ij;\o}(\bs{k}';p_0;\a \e)- \mc{F}^{(1)}_{ij;\o}(\bs{k}';p_0;\a \e) \Big) \quad \Bigg\}:= I_2(p_0;\e)\\
& +\sum_{\o=1}^N \int_{\mc{D}_{\e,\o}} \frac{d\bs{k}'}{(2\pi)^3} \mf{d}_0 \Big( \mc{F}^{(\mathrm{sing})}_{ij;\o}(\bs{k}';p_0;0)- \mc{F}^{(1)}_{ij;\o}(\bs{k}';p_0;0) \Big) \quad \Bigg\}:= I_3(p_0;\e),
\end{split}
\end{equation}

where $(\mc{D}_{\e,\o})^c\equiv (\mbb{R}\times\mbb{B})\setminus\mc{D}_{\e,\o}$. We are going to analyze the three terms in the r.h.s. of Eq. \eqref{eqn:A7} separately.

\medskip
\textbf{Analysis of $I_1$.} Note that since the integral is restricted to $(\mc{D}_{\e,\o})^c$, the integrand is smooth in $p_0$. Moreover, due to the presence of the cut-off functions $\chi$, the integrand is supported on a bounded subset of $\mbb{R}\times\mbb{B}$. Therefore:

\begin{equation}
\lim_{p_0\rightarrow 0^-} I_1(p_0;\e)= \frac{1}{2}\sum_{\o=1}^N \sum_{\a=\pm} \int_{(\mc{D}_{\e,\o})^c} \frac{d\bs{k}'}{(2\pi)^3} \Big( \de_{p_0}\mc{F}^{(\mathrm{sing})}_{ij;\o}(\bs{k}';0;0)- \de_{p_0}\mc{F}^{(\mathrm{sing})}_{ij;\o}(\bs{k}';0;\a \e) \Big).
\end{equation}

Explicitly, for $m\in\{0,\pm\e\}$, we have that

\[\begin{split}
\de_{p_0}\mc{F}^{(\mathrm{sing})}_{ij;\o}(\bs{k}';0;m)&=  -\chi(\tfrac{|\bs{k}'|}{\d}) \chi'(\tfrac{|\bs{k}'|}{\d}) \tfrac{\bs{k}'}{\d|\bs{k}'|} \Tr\Big\{ \gamma_{\o;i}(\v{k}';m) \hat{s}_{\o}(\bs{k}';m) \gamma_{\o;j}(\v{k}';m) \hat{s}_{\o}(\bs{k}';m) \Big\}+\\
&- \chi(\tfrac{|\bs{k}'|}{\d})^2  \Tr\Big\{ \gamma_{\o;i}(\v{k}';m) \hat{s}_{\o}(\bs{k}';m) \gamma_{\o;j}(\v{k}';m) \de_{k_0}\hat{s}_{\o}(\bs{k}';m) \Big\}.
\end{split}\]

Therefore:

\begin{equation}
\label{eqn:A8}
\begin{split}
\lim_{p_0\rightarrow 0^-} I_1(p_0;\e)&=- \frac{1}{2}\sum_{\o=1}^N \sum_{\a=\pm} \int_0^{\e} dm \int_{(\mc{D}_{\e,\o})^c} \frac{d\bs{k}'}{(2\pi)^3} \cdot\\
&\cdot\Bigg\{ \chi(\tfrac{|\bs{k}'|}{\d}) \chi'(\tfrac{|\bs{k}'|}{\d}) \tfrac{\bs{k}'}{\d|\bs{k}'|} \de_m \Tr\Big\{ \gamma_{\o;i}(\v{k}';\a m) \hat{s}_{\o}(\bs{k}';\a m) \gamma_{\o;j}(\v{k}';\a m) \hat{s}_{\o}(\bs{k}';\a m) \Big\}\\
&\qquad +\chi(\tfrac{|\bs{k}'|}{\d})^2  \de_m \Tr\Big\{ \gamma_{\o;i}(\v{k}';\a m) \hat{s}_{\o}(\bs{k}';\a m) \gamma_{\o;j}(\v{k}';\a m) \de_{k_0}\hat{s}_{\o}(\bs{k}';\a m) \Big\}\Bigg\}.
\end{split}
\end{equation}

Using the bounds (which follow from Assumption \ref{assum:H}): $\|\hat{s}_{\o}(\bs{k}';\a m)\|\le C_1|\bs{k}'|^{-1}$, $\\\|\de_{k_0}\hat{s}_{\o}(\bs{k}';\a m)\|\le C_1|\bs{k}'|^{-2}$, $\|\de_m\de_{k_0}\hat{s}_{\o}(\bs{k}';\a m)\|\le C_1|\bs{k}'|^{-3}$ and $|\gamma_{\o;i}(\v{k}';m)|,|\de_m\gamma_{\o;i}(\v{k}';m)|\le C_1$, for a suitable constant $C_1>0$, from Eq. \eqref{eqn:A8} we find that

\[\begin{split}
\Big|\lim_{p_0\rightarrow 0^-} I_1(p_0;\e)\Big|&\le \sum_{\o=1}^N \sum_{\a=\pm} \int_0^{\e} dm \int_{(\mc{D}_{\e,\o})^c} \frac{d\bs{k}'}{(2\pi)^3} \frac{C_1'}{|\bs{k}'|^4}\le \frac{NC_1'}{2\pi}\e \int_{\sqrt{\e}/\max\{v_{\o;2},1\}}^{\d} \frac{dk}{k^2}\\
&\le \frac{NC_1'(\max_{\o=1}^Nv_{\o;2}+1)}{2\pi} \sqrt{\e},
\end{split}\]

for a suitable constants $C_1'>0$ and with $v_{\o;2}>0$ the largest eigenvalue of $|A_{\o}|$, introduced at the beginning of Subsection \ref{ssect:Dirac}.

\medskip
\textbf{Analysis of $I_2$.}
We observe that since we have a non-zero mass $\pm\e$, the integrand is smooth w.r.t. $p_0$ and hence:

\begin{equation}
\label{eqn:A9}
\lim_{p_0\rightarrow 0^-} I_2(p_0;\e)= -\frac{1}{2}\sum_{\o=1}^N \sum_{\a=\pm} \int_{\mc{D}_{\e,\o}} \frac{d\bs{k}'}{(2\pi)^3}  \Big( \de_{p_0}\mc{F}^{(\mathrm{sing})}_{ij;\o}(\bs{k}';0;\a \e)- \de_{p_0}\mc{F}^{(1)}_{ij;\o}(\bs{k}';0;\a \e) \Big).
\end{equation}

Similarly to $I_1$, for $m\in\{\pm\e\}$ we have:

\[\begin{split}
\de_{p_0}\mc{F}^{(\mathrm{sing})}_{ij;\o}(\bs{k}';0;m)- \de_{p_0}\mc{F}^{(1)}_{ij;\o}(\bs{k}';0;m)=&-  \Tr\Big\{ \gamma_{\o;i}(\v{k}';m) \hat{s}_{\o}(\bs{k}';m) \gamma_{\o;j}(\v{k}';m) \de_{k_0}\hat{s}_{\o}(\bs{k}';m) \Big\}\\
& +\Tr\Big\{ \gamma_{\o;i}(\v{0};0) \hat{s}^{(1)}_{\o}(\bs{k}';m) \gamma_{\o;j}(\v{0};0) \de_{k_0}\hat{s}^{(1)}_{\o}(\bs{k}';m) \Big\},
\end{split}\]

where we also used that, by the definition of $\e_0$, within the support of $\mc{D}_{\e,\o}$, the cutoff $\chi(\d^{-1}|\,\cdot\,|)$ is identically equal to 1. The r.h.s. can be further expanded, yielding:

\begin{equation}
\label{eqn:A10}
\begin{split}
&\de_{p_0}\mc{F}^{(\mathrm{sing})}_{ij;\o}(\bs{k}';0;m)- \de_{p_0}\mc{F}^{(1)}_{ij;\o}(\bs{k}';0;m)=\\
& -  \Tr\Big\{ \big(\gamma_{\o;i}(\v{k}';m)- \gamma_{\o;i}(\v{0};0)\big) \hat{s}_{\o}(\bs{k}';m) \gamma_{\o;j}(\v{k}';m) \de_{k_0}\hat{s}_{\o}(\bs{k}';m) \Big\}+\\
&-  \Tr\Big\{ \gamma_{\o;i}(\v{0};0) \big(\hat{s}_{\o}(\bs{k}';m) - \hat{s}^{(1)}_{\o}(\bs{k}';m)\big) \gamma_{\o;j}(\v{k}';m) \de_{k_0}\hat{s}_{\o}(\bs{k}';m) \Big\}+\\
& -  \Tr\Big\{ \gamma_{\o;i}(\v{0};0) \hat{s}^{(1)}_{\o}(\bs{k}';m) \big(\gamma_{\o;j}(\v{k}';m)- \gamma_{\o;j}(\v{0};0)\big) \de_{k_0}\hat{s}_{\o}(\bs{k}';m) \Big\}+\\
& -  \Tr\Big\{ \gamma_{\o;i}(\v{0};0) \hat{s}^{(1)}_{\o}(\bs{k}';m) \gamma_{\o;j}(\v{0};0) \big(\de_{k_0}\hat{s}_{\o}(\bs{k}';m) - \de_{k_0}\hat{s}^{(1)}_{\o}(\bs{k}';m)\big) \Big\}.
\end{split}
\end{equation}

Observe that the following rewriting is possible:

\[\hat{s}_{\o}(\bs{k}';m)- \hat{s}^{(1)}_{\o}(\bs{k}';m)= -\hat{s}_{\o}(\bs{k}';m) \Big(\hat{h}_{\o}(\v{k}';m)- \hat{h}^{(1)}_{\o}(\v{k}';m) \Big)\hat{s}^{(1)}_{\o}(\bs{k}';m)
\]

and 

\[\begin{split}
&\de_{k_0}\hat{s}_{\o}(\bs{k}';m)- \de_{k_0}\hat{s}^{(1)}_{\o}(\bs{k}';m)= i\hat{s}_{\o}(\bs{k}';m)^2 -i \hat{s}^{(1)}_{\o}(\bs{k}';m)^2=\\
&i \Big( \hat{s}_{\o}(\bs{k}';m)- \hat{s}^{(1)}_{\o}(\bs{k}';m) \Big) \hat{s}_{\o}(\bs{k}';m)+ i\hat{s}^{(1)}_{\o}(\bs{k}';m) \Big( \hat{s}_{\o}(\bs{k}';m)- \hat{s}^{(1)}_{\o}(\bs{k}';m) \Big).
\end{split}\]

Recalling that $\hat{h}_{\o}$ and $\hat{h}^{(1)}_{\o}$ differ by $\mc{O}(|\v{k}'|^2+m^2)$ terms, we see that there exists a constant $C_2>0$ such that, in matrix norm, $\|\hat{s}_{\o}(\bs{k}';m)- \hat{s}^{(1)}_{\o}(\bs{k}';m)\|\le C_2$ and $\|\de_{k_0}\hat{s}_{\o}(\bs{k}';m)- \de_{k_0}\hat{s}^{(1)}_{\o}(\bs{k}';m)\|\le C_2 (|\bs{k}'|+|m|)^{-1}$. Moreover, using the analyticity of $\gamma_{\o;i}$ (which implies $\gamma_{\o;i}(\v{k}';m)- \gamma_{\o;i}(\v{0};0)= \mc{O}(|\v{k}'|+|m|)$) and the bounds for $\hat{s}_{\o}, \hat{s}^{(1)}_{\o}$ and their derivatives, mentioned in the previous part of the proof, from Eq. \eqref{eqn:A10} we readily find:

\[ \big| \de_{p_0}\mc{F}^{(\mathrm{sing})}_{ij;\o}(\bs{k}';0;\pm\e)- \de_{p_0}\mc{F}^{(1)}_{ij;\o}(\bs{k}';0;\pm\e)\big|\le \frac{C'_2}{|\bs{k}'|^2+\e^2}\]

with a suitable constant $C'_2>0$. Plugging this bound in the r.h.s. of eq. \eqref{eqn:A9}, we get:

\[\begin{split} 
\big|\lim_{p_0\rightarrow 0^-} I_2(p_0;\e)\big| &\le \frac{1}{2}\sum_{\o=1}^N \sum_{\a=\pm} \int_{\mc{D}_{\e,\o}} \frac{d\bs{k}'}{(2\pi)^3}  \frac{C'_2}{|\bs{k}'|^2+\e^2} \le \frac{C'_2}{2\pi} \sum_{\o=1}^N \int_0^{\sqrt{\e}/\min\{v_{\o;1},1\}} dk\frac{k^2}{k^2+\e^2}\\
&\le \frac{C'_2\sqrt{\e}}{2\pi}\sum_{\o=1}^N\frac{1}{\min\{v_{\o;1},1\}}\le \frac{C''_2 N\sqrt{\e}}{2\pi \min_{\o=1}^N v_{\o;1}/(1+v_{\o;1})},  \end{split}\]

with $v_{\o;1}>0$ the smallest eigenvalue of $|A_{\o}|$, introduced at the beginning of Subsection \ref{ssect:Dirac}.

\medskip
\textbf{Analysis of $I_3$.}
This term is slightly more delicate than the previous two, since the functions $\mf{d}_0\mc{F}^{(\mathrm{sing})}_{ij;\o}(\,\cdot\,;p_0;0)$ and $\mf{d}_0\mc{F}^{(1)}_{ij;\o}(\,\cdot\,;p_0;0)$ are not integrable for $p_0\rightarrow0^-$. Using the resolvent identity: $\hat{s}_{\o}(\bs{k}'+\bs{p}_0;m)- \hat{s}_{\o}(\bs{k}';m)= ip_0 \hat{s}_{\o}(\bs{k}'+\bs{p}_0;m)\hat{s}_{\o}(\bs{k}';m)$ and the analogue for $\hat{s}^{(1)}_{\o}$, we can rewrite:

\begin{equation}
\label{eqn:A11}
\begin{split}
&\mf{d}_0 \mc{F}^{(\mathrm{sing})}_{ij;\o}(\bs{k}';p_0;0)- \mf{d}_0 \mc{F}^{(1)}_{ij;\o}(\bs{k}';p_0;0)=\\
&- i\Tr\Big\{ \gamma_{\o;i}(\v{k}';0) \hat{s}_{\o}(\bs{k}';0) \gamma_{\o;j}(\v{k}';0) \hat{s}_{\o}(\bs{k}'+\bs{p}_0;0) \hat{s}_{\o}(\bs{k}';0) \Big\}+\\
&+i \Tr\Big\{ \gamma_{\o;i}(\v{0};0) \hat{s}^{(1)}_{\o}(\bs{k}';0) \gamma_{\o;j}(\v{0};0) \hat{s}^{(1)}_{\o}(\bs{k}'+\bs{p}_0;0) \hat{s}^{(1)}_{\o}(\bs{k}';0) \Big\}.
\end{split}
\end{equation}

Now we proceed similarly to the r.h.s. of Eq. \eqref{eqn:A10}, namely we expand the r.h.s. of Eq. \eqref{eqn:A11} in terms of the difference of propagators, $\hat{s}_{\o}(\bs{k}';0)- \hat{s}^{(1)}_{\o}(\bs{k}';0)$ and $\hat{s}_{\o}(\bs{k}'+\bs{p}_0;0)- \hat{s}^{(1)}_{\o}(\bs{k}'+\bs{p}_0;0)$, and of vertex functions, $\gamma_{\o;i}(\v{k}';0)- \gamma_{\o;i}(\v{0};0)$. Using the aforementioned bounds for $\hat{s}_{\o},\hat{s}^{(1)}_{\o}$ and $\gamma_{\o;i}(\v{k}';0)$, and the fact that 

\[\big\|\hat{s}_{\o}(\bs{k}';0)- \hat{s}^{(1)}_{\o}(\bs{k}';0)\big\|\le \|\hat{s}_{\o}(\bs{k}';0)\|\big\|\hat{h}_{\o}(\v{k}';0)- \hat{h}^{(1)}_{\o}(\v{k}';0) \big\|\hat{s}^{(1)}_{\o}(\bs{k}';0)\| \le \frac{C_3|\v{k}'|^2}{|\bs{k}'|^2},\] 

for a suitable constant $C_3>0$, we find:

\begin{equation}
\label{eqn:A12}
\begin{split}
&\big|\mf{d}_0 \mc{F}^{(\mathrm{sing})}_{ij;\o}(\bs{k}';p_0;0)- \mf{d}_0 \mc{F}^{(1)}_{ij;\o}(\bs{k}';p_0;0)\big|\le \\
& C_3 \left(\frac{|\v{k}'|}{|\bs{k}'|^2|\bs{k}'+\bs{p}_0|}+ \frac{|\v{k}'|^2}{|\bs{k}'|^3|\bs{k}'+\bs{p}_0|}+   \frac{|\v{k}'|^2}{|\bs{k}'|^2|\bs{k'}+\bs{p}_0|^2} \right)\le \frac{3C_3}{|\bs{k}'|^2},
\end{split}
\end{equation}

where we used the fact that $|\v{k}'|$ is smaller than $|\bs{k}'|$ and $|\bs{k}'+\bs{p}_0|$ at the same time. The bound in Eq. \eqref{eqn:A12} yields a bound on $I_3$:

\begin{equation}
\begin{split}
|I_3(p_0;\e)|\le \sum_{\o=1}^N \int_{\mc{D}_{\e,\o}} \frac{d\bs{k}'}{(2\pi)^3} \frac{3C_3}{|\bs{k}'|^2}\le \frac{3C_3}{2\pi} \sum_{\o=1}^N \int_0^{\sqrt{\e}/\min\{v_{\o;1},1\}} dk \le \frac{3C_3 N \sqrt{\e}}{2\pi \min_{\o=1}^N v_{\o;1}/(1+v_{\o;1})}.
\end{split}
\end{equation}

Moreover, the existence of the limit $\lim_{p_0\rightarrow 0^-} I_3(p_0;\e)$ follows by the Dominated Convergence Theorem, by combining the bound in Eq. \eqref{eqn:A12} with the fact that the limit:  

\[ \lim_{p_0\to 0^-} \mf{d}_0 \Big(\mc{F}^{(\mathrm{sing})}_{ij;\o}(\bs{k}';p_0;0)- \mc{F}^{(1)}_{ij;\o}(\bs{k}';p_0;0)\Big)= \de_{p_0}\mc{F}^{(\mathrm{sing})}_{ij;\o}(\bs{k}';0;0)- \de_{p_0}\mc{F}^{(1)}_{ij;\o}(\bs{k}';0;0) \]

exists finite for any $\bs{k}'\in\mc{D}_{\e,\o}\setminus\{\bs{0}\}$.

\medskip

All in all we have shown that, for every $0<\e\le \e_0$, the limit $p_0\to0^-$ of Eq. \eqref{eqn:A6} exists finite and, collecting the bounds above for each of the three terms in the r.h.s. of Eq. \eqref{eqn:A7}, we find that

\begin{equation}
\begin{split}
&\Big|\lim_{p_0\to 0^-}\Big(\mf{D}\s^{(\mathrm{sing})}_{ij}(p_0;\e) - \sum_{\o=1}^N \mf{D}\s^{(1)}_{ij;\o}(p_0;\e) \Big)\Big|\le \\
& \frac{N\sqrt{\e}}{2\pi} \left( C_1' \max_{\o=1}^N (v_{\o;2}+1) + \frac{C_2'+3C_3}{\min_{\o=1}^Nv_{\o;1}/(1+ v_{\o;1})} \right).
\end{split}
\end{equation}
\end{proof}

\subsection{The Dirac frame}
\label{app:1_Dirac}

Here we discuss two main aspects related to the Dirac model. First we discuss the proof that, via a spatial rotation and a unitary conjugation, it is always possible to map a massless relativistic $2\times2$ Hamiltonian into the Dirac Hamiltonian, namely $\hat{h}^{(1)}(\v{k}';0)$ into $\hat{h}^{(D)}(\v{k}')$. Next we prove a number of remarkable symmetries of the current-current correlator $\hat{\mc{K}}$ related to the Dirac Hamiltonian. Since throughout this Subsection we will always deal with $2\times2$ matrices, we will denote $\Tr_{\mbb{C}^2}$ simply by $\Tr$.

\begin{proof}[Proof of Lemma \ref{lemma:Dirac}.]

We closely follow \cite[Sect. 8.3 - 8.4]{Cances}. By Eq. \eqref{eqn:h1} and Eq. \eqref{eqn:bands1}, we have that the energy bands of $\hat{h}_{\o}(R_{\o}\v{k}';0)$ are given by:

\begin{equation}
\label{eqn:A13} 
\begin{split}
\l_{\pm}(\v{k}_{F,\o}+ R_{\o}\v{k}';0)- \mu&= \v{c}_{\o}\cdot(R_{\o}\v{k}') \pm |A_{\o}R_{\o}\v{k}'\big| + \mc{O}(|\v{k}'|^2)\\
&  =\v{c'}_{\o}\cdot\v{k}' \pm \sqrt{v_{\o;1}^2k_1'^2+ v_{\o;2}^2k_2'^2} + \mc{O}(|\v{k}'|^2).
\end{split}\end{equation}

Moreover, since $\hat{h}_{\o}(R_{\o}\v{k}';0)$ is a $2\times2$ Hermitian matrix, it admits the decomposition:

\begin{equation}
\label{eqn:A14}
\hat{h}_{\o}(R_{\o}\v{k}';0)= \hat{d}_{\o;0}(\v{k}')\mathds{1}_2+ \sum_{s=1}^3 \hat{d}_{\o;s}(\v{k}')\s_s,
\end{equation}

where $\hat{d}_{\o;0}(\v{k}')= \frac{1}{2}\Tr\big(\hat{h}_{\o}(R_{\o}\v{k}';0) \big)$ and $\hat{d}_{\o;s}(\v{k}')= \frac{1}{2}\Tr\big(\s_s\hat{h}_{\o}(R_{\o}\v{k}';0) \big)$ are real-analytic by Lemma \ref{lemma_KN}. From Eq. \eqref{eqn:A14} it follows that

\begin{equation}
\label{eqn:A15}
\begin{split}
&\l_{\pm}(\v{k}_{F,\o}+ R_{\o}\v{k}';0)- \mu= \hat{d}_{\o;0}(\v{k}') \pm \sqrt{\sum_{s=1}^3 \hat{d}_{\o;s}(\v{k}')^2}.
\end{split}
\end{equation}

Writing, for $s=0,...,3$, $\hat{d}_{\o;s}(\v{k}')= \hat{d}^{(1)}_{\o;s}(\v{k}')+ \mc{O}(|\v{k}'|^2)$, with: 

\begin{equation}
\label{eqn:A17}
\hat{d}^{(1)}_{\o;s}(\v{k}')= \v{l}_{\o;s}\cdot\v{k}'
\end{equation}

(recall that $\hat{h}_{\o}(\v{0};0)=0$ and so $\hat{d}_{\o;s}(\v{0})=0$) for suitable $\v{l}_{\o;s}\in\mbb{R}^2$, and plugging Eq. \eqref{eqn:A13} into Eq. \eqref{eqn:A15}, we find that 

\begin{equation}
\label{eqn:A16}
\hat{d}^{(1)}_{\o;0}(\v{k}')= \v{c'}_{\o}\cdot\v{k}', \qquad \sum_{s=1}^3 \hat{d}^{(1)}_{\o;s}(\v{k}')^2= v_{\o;1}^2k_1'^2+ v_{\o;2}^2k_2'^2.
\end{equation}

Plugging Eq. \eqref{eqn:A17} in the second relation above, we find that

\begin{equation*}
\sum_{s=1}^3 (\v{l}_{\o;s}\cdot\v{k}')^2 = v_{\o;1}^2k_1'^2+ v_{\o;2}^2k_2'^2,
\end{equation*}

from which:

\begin{equation}
\label{eqn:A18}
\sum_{s=1}^3 l_{\o;s;i}l_{\o;s;j}= v_{\o;i}^2 \d_{i,j}
\end{equation}

for every $i,j\in\{1,2\}$. From eq. \eqref{eqn:A18} we see that the three-component vectors:

\[ f_{\o;1}:= \frac{1}{v_{\o;2}}\left(\begin{array}{c}
    l_{\o;1;2}\\
    l_{\o;2;2}\\
    l_{\o;3;2}
\end{array} \right), \qquad f_{\o;2}:=\frac{1}{v_{\o;1}}\left(\begin{array}{c}
    l_{\o;1;1}\\
    l_{\o;2;1}\\
    l_{\o;3;1}
\end{array} \right), \qquad f_{\o;3}:= f_{\o;1}\times f_{\o;2}\]
form an orthonormal basis of $\mbb{R}^3$, and the $3\times3$ matrix $(Q_{\o})_{s,s'}:= (f_{\o;s})_{s'}$ is in $\mbb{SO}(3)$. Now we would like to implement a transformation which maps the relativistic Hamiltonian:

\begin{equation}
\hat{h}^{(1)}_{\o}(R_{\o}\v{k}';0) \equiv \hat{d}^{(1)}_{\o;0}(\v{k}') \mathds{1}_2 + \sum_{s=1}^3 \hat{d}^{(1)}_{\o;s}(\v{k}') \s_s,
\end{equation}

into the r.h.s.\ of Eq. \eqref{eqn:22}, which (apart from the term proportional to $\mathds{1}_2$) is basically the Dirac Hamiltonian. Note that in view of Eq. \eqref{eqn:A18}, the following is true:

\begin{align}
&\hat{d}^{(D)}_{\o;1}(\v{k}'):= \sum_{s=1}^3 (Q_{\o})_{1s} \hat{d}^{(1)}_{\o;s}(\v{k}')= v_{\o;2} k'_2,\\
&\hat{d}^{(D)}_{\o;2}(\v{k}'):= \sum_{s=1}^3 (Q_{\o})_{2s} \hat{d}^{(1)}_{\o;s}(\v{k}')= v_{\o;1} k'_1,\\
&\hat{d}^{(D)}_{\o;3}(\v{k}'):= \sum_{s=1}^3 (Q_{\o})_{3s} \hat{d}^{(1)}_{\o;s}(\v{k}')=0.
\end{align}

As a final step we implement a unitary transformation such that the component of $\hat{h}^{(1)}_{\o}$ along $\s_s$ is mapped into $\hat{d}^{(D)}_{\o;s}(\v{k}';m)$. We use the remarkable property \cite{Cornwell} that for every $Q\in\mbb{SO}(3)$ there exists $W\in\mbb{SU}(2)$ such that 

\[\tfrac{1}{2}\Tr\big\{\s_s W^{\dg} \s_{s'} W\big\}= Q_{s,s'}, \qquad \forall s,s'\in\{1,2,3\}.\]

Letting $W_{\o}$ be the unitary which satisfies the above condition with $Q=Q_{\o}$, we have that

\begin{equation*}
\begin{split}
&W_{\o}^{\dg} \hat{h}^{(1)}_{\o}(R_{\o}\v{k}';0) W_{\o}=\\
& \frac{1}{2}\mathds{1}_2 \Tr\Big\{W_{\o}^{\dg} \hat{h}^{(1)}_{\o}(R_{\o}\v{k}';0) W_{\o} \Big\}+ \frac{1}{2}\sum_{s=1}^3 \s_s \Tr\Big\{\s_s W_{\o}^{\dg} \hat{h}^{(1)}_{\o}(R_{\o}\v{k}';0) W_{\o} \Big\}=\\
& \hat{d}^{(1)}_{\o;0}(\v{k}') \mathds{1}_2 + \sum_{s,s'=1}^3 \s_s \hat{d}^{(1)}_{\o;s'}(\v{k}') \frac{1}{2} \Tr\big\{ \s_sW_{\o}^{\dg} \s_{s'} W_{\o} \big\}=\\
& \hat{d}^{(1)}_{\o;0}(\v{k}') \mathds{1}_2 + \sum_{s,s'=1}^3 \s_s \hat{d}^{(1)}_{\o;s'}(\v{k}') (Q_{\o})_{s,s'} = \hat{d}^{(1)}_{\o;0}(\v{k}') \mathds{1}_2 + \sum_{s=1}^3 \hat{d}^{(D)}_{\o;s}(\v{k}')  \s_s =\\
&\v{c'}_{\o}\cdot\v{k}'\mathds{1}_2+ v_{\o;2}k'_2\s_1+ v_{\o;1}k'_1\s_2,
\end{split}
\end{equation*}

as desired.
\end{proof}

\begin{proof}[Proof of Lemma \ref{lemma:kcal}.]
We begin by noting that
\begin{equation}
\label{eqn:symkcal}
\hat{\mc{K}}_{\t\nu}(p_0)= \hat{\mc{K}}_{\nu\t}(p_0)= \hat{\mc{K}}_{\nu\t}(-p_0),
\end{equation}
which follows by using the cyclicity of the trace, by the definition of the domain $\mc{C}_{\e}$, and by the fact that the inverse relativistic propagator $\hat{s}^{(D)}(\bs{k}')^{-1}$ is linear. This allows us to restrict to the case $\t\ge \nu$ for proving the claim.

Let us begin by discussing the vanishing of $\hat{\mc{K}}_{10}$ and $\hat{\mc{K}}_{20}$. By the invariance of the trace under unitary conjugation, choosing the unitary $W= e^{-i\frac{\pi}{4}\s_3}$, such that $W^{\dg}\s_1 W= - \s_2$ and $W^{\dg}\s_2 W=\s_1$, we obtain that (hereafter, in order to simplify the notation, the two-point functions $\hat{s}^{(D)}$ are understood to be evaluated in $\v{v}=\v{1}$ and $\v{c}=\v{0}$)
\begin{align*}
\hat{\mc{K}}_{10}(p_0)=&-\int_{\mc{C}_{\e}} \frac{d\bs{k}'}{(2\pi)^3}\Tr \Big\{\s_2 \,\hat{s}^{(D)}(\bs{k}') \,\hat{s}^{(D)}\big(\bs{k}'+\bs{p}_0\big) \Big\} \\
&= - \int_{\mc{C}_{\e}} \frac{d\bs{k}'}{(2\pi)^3}\Tr \Big\{\s_1 \,\hat{s}^{(D)}(k_0, -k'_2,k'_1) \,\hat{s}^{(D)}\big(k_0+p_0,-k'_2,k'_1\big) \Big\}\\
&= - \int_{\mc{C}_{\e}} \frac{d\bs{k}'}{(2\pi)^3}\Tr \Big\{\s_1 \,\hat{s}^{(D)}(\bs{k}') \,\hat{s}^{(D)}\big(\bs{k}'+\bs{p}_0\big) \Big\}= \hat{\mc{K}}_{20}(p_0),
\end{align*}

where we also used the invariance of $\mc{B}_{\e}$ under the \emph{discrete rotation} $k'_1\to k'_2,k'_2\to -k'_1$. Similarly, we find that

\begin{align*}
\hat{\mc{K}}_{20}(p_0)=&-\int_{\mc{C}_{\e}} \frac{d\bs{k}'}{(2\pi)^3}\Tr \Big\{\s_1 \,\hat{s}^{(D)}(\bs{k}') \,\hat{s}^{(D)}\big(\bs{k}'+\bs{p}_0\big) \Big\} \\
&=  \int_{\mc{C}_{\e}} \frac{d\bs{k}'}{(2\pi)^3}\Tr \Big\{\s_2 \,\hat{s}^{(D)}(k_0, -k'_2,k'_1) \,\hat{s}^{(D)}\big(k_0+p_0,-k'_2,k'_1\big) \Big\}\\
&=  \int_{\mc{C}_{\e}} \frac{d\bs{k}'}{(2\pi)^3}\Tr \Big\{\s_2 \,\hat{s}^{(D)}(\bs{k}') \,\hat{s}^{(D)}\big(\bs{k}'+\bs{p}_0\big) \Big\}= -\hat{\mc{K}}_{10}(p_0).
\end{align*}

Hence $\hat{\mc{K}}_{10}= \hat{\mc{K}}_{20}=0$. The same argument shows that $\hat{\mc{K}}_{11}= \hat{\mc{K}}_{22}$ and that $\hat{\mc{K}}_{12}=-\hat{\mc{K}}_{21}$, so that in view of Eq. \eqref{eqn:symkcal} we must have $\hat{\mc{K}}_{12}=\hat{\mc{K}}_{21}=0$.

Let us now focus on $\hat{\mc{K}}_{00}$. We are going to show that, for every $p_0\in\mbb{R}$, $|\v{k}'|>0$, 

\begin{equation}
\label{eqn:A19}
\int_{\mbb{R}}dk_0\Tr \Big\{ \hat{s}^{(D)}(\bs{k}')\hat{s}^{(D)}\big(\bs{k}'+\bs{p}_0\big) \Big\}=0.  
\end{equation}

Notice that the l.h.s. is apriori continuous in $p_0$ by the Dominated Convergence theorem, thus it suffices to show that it equals zero for every $p_0\neq 0$ and $|\v{k}'|>0$. By inspection, we have that
\[
\int_{\mbb{R}}dk_0\Tr \Big\{ \hat{s}^{(D)}(\bs{k}')\hat{s}^{(D)}\big(\bs{k}'+\bs{p}_0\big) \Big\} = \int_{\mbb{R}}dk_0 \,\underbrace{\frac{2\big(|\v{k}'|^2-k_0(k_0+p_0)\big)}{(k_0^2+|\v{k}'|^2)((k_0+p_0)^2+|\v{k}'|^2)}}_{=:f(k_0,\v{k}',p_0)}.
\]
Observe that as a function of $k_0$ the map $f(k_0,\v{k}',p_0)\sim\frac{1}{k_0^2}$ for large $k_0$, thus applying Cauchy's Residue Theorem in the upper half complex plane, we obtain that
\begin{align*}
\frac{1}{2\pi i}\int_{\mbb{R}} dk_0\, f(k_0,\v{k}',p_0)&=\mathrm{Res}\,f(\,\cdot\,,\v{k}',p_0)\big|_{i|\v{k}'|} + \mathrm{Res}\,f(\,\cdot\,,\v{k}',k_0)\big|_{i|\v{k}'|-p_0}\\
&=-i\left( \frac{2|\v{k}'|-ip_0}{p_0(p_0+2i|\v{k}'|)}+\frac{2|\v{k}'|+ip_0}{p_0(p_0-2i|\v{k}'|)}    \right)=0.   
\end{align*}
This proves Eq. \eqref{eqn:A19} and therefore we get that $\hat{\mc{K}}_{00}(p_0)=0$ for every $p_0\in\mbb{R}$.
\end{proof}

\printbibliography
\end{document}